\documentclass[pdflatex,sn-basic]{sn-jnl}

\usepackage{graphicx}%
\usepackage{multirow}%
\usepackage{amsmath,amssymb,amsfonts}%
\usepackage{amsthm}%
\usepackage{mathrsfs}%
\usepackage[title]{appendix}%
\usepackage{xcolor}%
\usepackage{textcomp}%
\usepackage{manyfoot}%
\usepackage{booktabs}%
\usepackage{algorithm}%
\usepackage{algorithmicx}%
\usepackage{algpseudocode}%
\usepackage{listings}%
\usepackage[mathcal]{euscript}

\usepackage{enumitem}

\def\set#1{\{#1\}}              
\def\tuple#1{\langle#1\rangle}  

\algnewcommand{\IfThen}[2]{
  \State \algorithmicif\ #1\ \algorithmicthen\ #2}

\algtext*{EndFor}
\algtext*{EndWhile}
\algtext*{EndIf}
\algtext*{EndIIf}

\theoremstyle{thmstyleone}%
\newtheorem{theorem}{Theorem}
\newtheorem{proposition}[theorem]{Proposition}%

\theoremstyle{thmstyletwo}%

\theoremstyle{thmstylethree}%
\newtheorem{definition}{Definition}%

\newtheorem{lemma}[theorem]{Lemma}

\begin{document}

\title[From Necklaces to Coalitions: Fair and Self-Interested Distribution of Coalition Value Calculations]{From Necklaces to Coalitions: Fair and Self-Interested Distribution of Coalition Value Calculations}


\author*[1]{\fnm{Terry R.} \sur{Payne}}\email{T.R.Payne@liverpool.ac.uk}
\author[2]{\fnm{Luke} \sur{Riley}}\email{luke.riley@quant.network}
\equalcont{These authors contributed equally to this work.}

\affil*[1]{\orgdiv{School of Computer Science and Informatics}, \orgname{University of Liverpool}, \orgaddress{\city{Liverpool}, \country{UK}}}

\affil[2]{\orgname{Quant}, \orgaddress{\city{London}, \country{UK}}}

\abstract{
A key challenge in distributed coalition formation within characteristic function games is determining how to allocate the
calculation of coalition values across a set of agents.
The number of possible coalitions grows exponentially with the number of agents, and existing distributed approaches may produce uneven or redundant allocations, or assign coalitions to agents that are not themselves members.
In this article, we present the \emph{Necklace-based Distributed Coalition Algorithm} (N-DCA), a communication-free algorithm in which each agent independently determines its own coalition value calculation allocation using only its identifier and the total number of agents.
The approach builds on the notion of \emph{Increment Arrays} (IAs), for which we develop a complete mathematical framework: equivalence classes under circular shifts, periodic IAs, and a rotated designation scheme with formal load-balance guarantees (tight bounds).
We establish a bijection between canonical representative IAs and two-colour combinatorial necklaces, enabling the use of efficient necklace generation algorithms to enumerate allocations in constant amortised time.
N-DCA is, to the best of our knowledge, the only distributed coalition value calculation algorithm for unrestricted characteristic function games to provably satisfy five desirable properties: no inter-agent communication, equitable allocation, no redundancy, balanced load, and self-interest.  
An empirical evaluation against DCVC \citep{Rahwan2007} demonstrates that, although DCVC is faster by a constant factor, this difference becomes negligible under realistic characteristic-function evaluation costs, while N-DCA offers advantages in working memory, scalability, and the self-interest guarantee.
}

\keywords{coalition formation, coalition value calculations distribution, combinatorial necklace, self-interest, distributed algorithms}

\maketitle

\section{Introduction} \label{sec:intro}

One of the many advantages of exploiting the Multi-Agent Systems paradigm for managing distributed real-world tasks \citep{WooldridgeKER95,Belecheanu2006} is the ability for agents to form \emph{coalitions} within which they collaborate, thus exploiting the synergy of their actions when tackling joint tasks. 
The formation of coalitions occurs through the evaluation of the different utility values (using some \emph{characteristic function}) for candidate coalitions which are \emph{satisfiable}, and then determining a mutually acceptable division of utility values forming a stable distribution \citep{sandholm99}. 
Such coalitions are typically transient, goal-directed, and are characterised by a flat (as opposed to hierarchical) organisation, where coordination typically occurs between members of a single coalition, as opposed to between members of other coalitions \citep{RahwanAIJ2015}. 
Coalition formation is a well-studied research area and has a wide range of potential applications including: electronic auctions/market places (for example, to take advantage of bulk buying); communication networks; the smart grid; grid computing; distributed vehicle routing; distributed sensor networks; multi-agent planning; and computational trust \citep{DBLP:conf/aips/ContrerasKY98,Dang:2006:OCF:1597538.1597640,Li:2010:CCF:1749629.1750129,woolnewbook}.

\citet{sandholm99} framed the problem of identifying 
potential coalition structures as a three-stage process: 
(i) calculating the utility value of each possible coalition; 
(ii) identifying a feasible set of coalitions; and 
(iii) dividing the utility among agents in a stable manner 
(i.e.\ no agent can object to its assigned payoff).
The generation of a coalition structure itself can be \emph{exogenous} (i.e.\ determined externally, for example, by an oracle or system designer) or \emph{endogenous} (i.e.\ determined by the agents themselves).
By restricting the types of coalition structures to those used in \emph{Characteristic Function Games (CFGs)} \citep{RahwanAIJ2015}, the utility value ascribed to each coalition (i.e.\ the characteristic function $\nu$) depends only on the agents within that coalition, and thus only needs to be calculated for each of the $2^{n}-1$ possible coalitions~\citep{sandholm99,Rahwan2007}.

Several approaches have adopted an endogenous approach by distributing the calculations of each of the coalition values across each of the agents, with the goal of reducing the per-agent computation cost and possibly reducing the overall computation time~\citep{shehory95,shehory96,shehory98,dang04,Rahwan2007,Michalak2010,Vinyals2012,Voice12,RahwanAIJ2015,RileyAAAI15}.
Although this eliminates the need for a trusted central authority responsible for determining the value calculations, there is still the significant risk of \emph{redundancy} in the calculation of each coalition~\citep{shehory95,shehory96,shehory98}.
This redundancy is unnecessary, as agents share the same value for each coalition within these games; furthermore, the complexity of calculating an individual coalition's value in a characteristic function game can vary, and is potentially exponential \citep{sandholm97}. Even if agents only calculate the values of those coalitions in which they participate, a significant overlap of calculations can occur (i.e.\ $2^{n-1}$), with this redundancy converging to 100\% (as the fraction of distinct coalitions per agent, $\frac{2^n-1}{n \cdot 2^{(n-1)}}$, tends to 0 as $n \rightarrow \infty$).

Furthermore, some approaches distribute the coalition value calculations unevenly across the agents \citep{Vinyals2012}, whereas others allocate coalitions to agents that are not themselves members; an arrangement that is problematic not only in adversarial environments where misreporting may occur, but in any setting where agents are self-interested, since a rational agent has no incentive to compute accurately the value of a coalition it will never join \citep{Rahwan2007,Voice12}.
Therefore, the following desirable properties have been identified which are not always present in the majority of these systems \citep{Rahwan2007,RileyAAAI15}:

\begin{enumerate}
    \item eliminating the need for \emph{communication} and explicit coordination between agents when determining coalition shares;
    \item ensuring that each agent has an approximately equal number of coalitions to evaluate, thus ensuring that the division is \emph{equitable};
    \item avoiding \emph{redundancy} in coalition value calculations, i.e.\ ensuring that no calculation is allocated to more than one agent;
    \item maintaining a \emph{balanced load} across agents - although two agents may have similar numbers of coalitions to examine, the size of each coalition itself can affect the computational load, and thus it becomes undesirable for one agent to evaluate mainly small coalition sizes, with another evaluating larger ones;
    \item maintaining \emph{self-interest}, i.e.\ ensuring that an agent evaluating a coalition is itself a member of that coalition. An agent assigned to calculate $\nu(C)$ for a coalition $C$ in which it does not participate has no direct stake in whether $C$ forms, and may be positively incentivised to misreport $\nu(C)$ strategically in order to steer the subsequent structure-search or payoff-division stages toward a more personally favourable outcome.
\end{enumerate}

Properties 2 and 4 together capture the notion of \emph{fairness}: each agent receives not only an approximately equal number of coalitions, but also an approximately equal computational burden across coalition sizes, and we use the term \emph{fair} to refer to algorithms that satisfy both of these properties.
Properties 1 and 3 concern \emph{efficiency}: eliminating communication overhead and redundant computation.
Property 5 reflects a distinct concern: \emph{incentive alignment}.
In a decentralised system without a trusted enforcement mechanism, agents cannot be compelled to compute assigned values correctly; compliance must be individually rational.
The self-interest property ensures that every agent's assigned set of coalitions is one it has a genuine stake in evaluating correctly, since its own potential payoff depends on those very values.
The broader implications, including the relationship to mechanism design and the residual risk of misreporting within assigned coalitions, are examined in Section~\ref{sec:future-directions}.
Throughout, we assume an unrestricted characteristic function game in which all agents know the total number of agents $n$; games with externalities and open or dynamic environments are outside the current scope (see Section~\ref{sec:future-directions} for discussion of limitations).

The notion of \emph{Increment Arrays} (\emph{IAs}) was introduced in earlier studies by~\citet{RileyAAMAS14,RileyAAAI15}, as a means of addressing all of these issues when generating \emph{coalition value calculation allocations} (\emph{CV}s).
When combined with an agent identifier, the \emph{IA}s identify a set of unique coalitions that can be allocated to those agents that appear in them, as well as balancing the computational load approximately evenly (with respect to the number of coalition value allocation calculations and number of operations) across
the agents.
The resulting CVs can also be combined with other approaches to complete the coalition formation process, such as distributed structure-search algorithms (e.g.\ D-IP \citep{Michalak2010}). Alternatively, if the agents need to find a core/$\epsilon$-core stable solution \citep{GameTheory}, the resulting coalition value allocation could be used as input to algorithms such as those proposed by \citet{WuSIAM1977} or \citet{Cesco98}.

The way in which \emph{integer partitions} are used to construct the Increment Arrays is similar to the different arrangements of a string of ``beads'' that exist within a \emph{two-colour combinatorial necklace}.
Combinatorial necklaces \citep{RUSKEY1992414} are strings of characters such that the string is the smallest lexicographic representative of all the cyclic shifts of that string. 
They are often described as a set of beads of different colours with different arrangements (see Figure~\ref{fig:necklaceExample}), or more formally, given a set of $n$ beads that can each be one of $k$ colours, the number of different arrangements of a string of beads that can be constructed within a circular loop can be determined \citep{CATTELL2000267,FREDRICKSEN1986181,FREDRICKSEN1978207,RUSKEY1992414,Sawada:1999:EAG:314500.314910}. 
More precisely, it is assumed that two or more arrangements can be considered to be identical if they differ only by a rotation inside the loop; i.e.\ we can say that a \emph{necklace of n beads in $k$ colours} is an equivalence class of $k$-ary $n$-tuples under rotation. 

\begin{figure}[t]
    \centering
    \includegraphics[width=\linewidth]{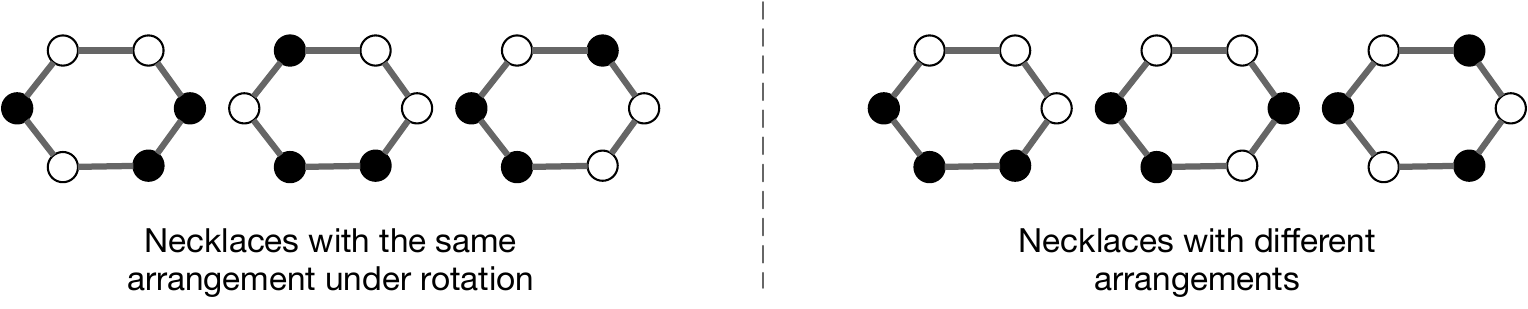}
    \caption{Examples of two-colour necklaces with six beads (i.e.\ $n=6, k=2$), illustrated with the same (left), and different (right) arrangements. In each arrangement, there are three white and three black beads.  The necklaces on the left all fall within the same equivalence class as they are the same under rotation. In those on the right, the ordering of the beads in each of the three necklaces is different, and thus they belong to different equivalence classes.}
	\label{fig:necklaceExample}
\end{figure}

Such necklaces have been used in a variety of settings, including astronomy \citep{ARNAS2017291}, music theory \citep{Rappaport2007,7839578,TOUSSAINT20102}, the discovery of wireless transceivers in MANETS (Mobile Ad Hoc Networks) \citep{8292573} and within data compression techniques \citep{6033796}. Although the investigation of efficient algorithms for generating necklaces is still an active area of research (for example, the work by \citet{8005509}), efficient algorithms exist that can generate two-colour necklaces in \emph{constant amortised time} (CAT) \citep{RUSKEY1992414}, i.e.\ where the total time is $O(N_k(n))$.

In this article, we re-state the notion of IAs (originally introduced in~\citet{RileyAAMAS14,RileyAAAI15}), together with the characterisation of equivalence classes that eliminate redundancy through the identification of a minimal canonical subset of IAs.
An approach for generating CVs is formally defined, and 
two \emph{designation schemes} (i.e.\ mechanisms for assigning
responsibility for periodic IAs to specific agents) are compared,
each yielding coalition allocations for each agent with formal load-balance
guarantees: (i) a \emph{per-size offset} variant that guarantees a maximum per-size imbalance of at most one coalition between any two agents (Theorem~9); and (ii) a \emph{global offset} variant that guarantees an aggregate imbalance of at most one (Theorem~10).
The resulting \emph{Necklace-based Distributed Coalition Algorithm (N-DCA)} is presented, and we show that N-DCA is, to the best of our knowledge, the only distributed coalition value calculation algorithm for unrestricted characteristic function games that provably satisfies all five of the above properties \emph{simultaneously} (Table~\ref{tab:comp}).  In particular, the self-interest property is not guaranteed by the current state-of-the-art DCVC approach~\citep{Rahwan2007}.

N-DCA is evaluated empirically against DCVC for $2 \le n \le 25$.  Although DCVC is faster by a constant factor (from approximately $2{\times}$ for small $n$ to approximately $5.8{\times}$ for larger values, stabilising beyond $n \approx 20$), this difference is due to N-DCA's per-necklace $\mathcal{O}(n)$ increment-array construction, and becomes negligible once realistic characteristic-function evaluation costs are included.
N-DCA's working memory is approximately one third of that of DCVC due to the fact that its implementation is purely iterative (and thus requires no recursion), and its necklace-based encoding avoids the binomial-coefficient index arithmetic that limits DCVC to $n < 68$ under standard 64-bit integer processor architectures.  The load-balance guarantees of both designation schemes are then validated empirically, and the theoretical bounds are shown to be tight.

This article significantly extends the original work by \citet{RileyAAMAS14,RileyAAAI15}, which introduced the notion of Increment Arrays and a preliminary Distributed Coalition Generation (DCG) algorithm, and the preliminary necklace-based approach in \citet{PayneIEEEWIC24}.
The principal contributions beyond those earlier studies include:
\begin{enumerate}
\item A complete mathematical framework for Increment Arrays, including formal definitions of equivalence classes under circular shifts, periodicity and stride, and the coalition generation function;
\item The necklace--IA bijection, which replaces DCG's ad hoc partition-enumeration-and-filtering construction with a principled mapping to a well-studied combinatorial concept, yielding constant amortised time enumeration;
\item The formulation of the \emph{designation problem} for periodic IAs, together with two offset variants and formal load-balance guarantees: a per-size bound and an aggregate bound;
\item Full formal proofs for all of the main results, with the detailed proofs of Lemma~\ref{lemma:same-sets} and Theorem~\ref{thm:main} provided in the Appendix;
\item A comprehensive empirical evaluation against DCVC \citep{Rahwan2007} over $2 \le n \le 25$, including timing, memory, component profiling, load-balance validation and amortised-cost analysis.
\end{enumerate}

The article is organised as follows:
Section~\ref{sec:related} reviews the main decentralised approaches to distributing coalition value calculations across agents.
Section~\ref{sec:preliminaries} provides a brief background on Characteristic Function Games (CFGs) and Combinatorial Necklaces.  
In Section~\ref{sec:ippe}, we develop the mathematical framework: \emph{Increment Arrays} (IAs) and their equivalence classes, periodic IAs, and the designation problem for allocating coalitions to agents.  
The algorithms for implementing this framework (N-DCA) are presented in Section~\ref{sec:algorithm}, where the constituent procedures for necklace generation, run-length encoding, and coalition construction are described.
Section~\ref{sec:evaluation} then provides an empirical evaluation of the approach, including a comparative analysis against DCVC.
Section~\ref{sec:discussion} discusses the results: comparing N-DCA against existing algorithms with respect to the five desirable properties, examining the source and feasibility of reducing the observed constant-factor overhead, and identifying broader implications and directions for future work, before concluding in Section~\ref{sec:conclusions}.

\section{Related Work} \label{sec:related}

Coalition formation can be studied in formal \emph{coalitional game} models. 
One such class of coalitional games is the \emph{Characteristic Function Game (CFG)} \citep{vonNeumann1944} model, where the value of the coalition is dependent solely on the agents within the coalition itself.  
These games are typically defined by a tuple $\langle Ag, \nu \rangle$, where $Ag$ is the set of $n$ agents, and $\nu$ is the  \emph{characteristic function} that maps the coalition members $C \subseteq Ag$ to some value, such that $\nu:2^{n} \longrightarrow \mathbb{R}$.
Thus, for any given task, there may be many coalitions within a CFG
that an agent can join, where each coalition
comprises agents with potentially different competences and with different
payoff expectations, and where the coalitions themselves can have different sizes and different characteristic function values.
A second class of coalitional game is the
\emph{Partition Function Game (PFG)} \citep{RahwanMWJ12},
where the value of a coalition not only depends on its
constituent agents, but also on the composition of other coalitions that
are formed, as the formation of one coalition may not be independent of
others, and can positively or negatively affect the value of other potential coalitions.  
In this article, we focus primarily on CFGs.

For both types of game, coalition formation can be divided into a three-stage process \citep{sandholm99}:
\begin{enumerate}
\item {\bf Coalition Structure Generation:}
This involves calculating the value of each of the possible coalitions, such that for a given set of agents $Ag = \{1,2,\ldots,n\}$, the number of possible coalitions in a CFG is $2^n - 1$, corresponding to the number of subsets of $n$ 
(not counting the empty set $\varnothing$).
For example, given three agents $Ag = \{1,2,3\}$, a total of $2^3-1=7$ possible coalitions exist: 
$\{1,2,3\}$, $\{1,2\}$, $\{1,3\}$, $\{2, 3\}$, $\{1\}$, $\{2\}$, $\{3\}$.  
The partitioning of all of the agents into coalitions results in the following 5 possible \emph{coalition structures}:
$\{\{1,2,3\}\}$, $\{\{1\},\{2,3\}\}$, $\{\{2\},\{1,3\}\}$, $\{\{3\},\{1,2\}\}$,
$\{\{1\},\{2\},\{3\}\}$, typically 
represented using a \emph{Coalition Structure Graph}.

\item {\bf Solving the Optimisation Problem:}
This stage addresses the problem of how the agents within a coalition collaborate to achieve their shared or joint goal.  The objective here is for each agent to work in such a way as to maximise the joint utility obtainable from achieving their goal.
This could, for example, be monetary, resulting in a joint payoff whose value corresponds to the characteristic function.

\item {\bf Payoff Distribution:}
This determines how the overall payoff is divided between the participating agents (if the given value of each coalition has the capability to be transferred between the agents).
This division should take into consideration a number of criteria, including \emph{fairness} (i.e.\ ensuring that the payoff received by each agent is commensurate with their contribution to the task) and \emph{maintaining stability} (this is pertinent when agents are selfish, and thus
may join only those coalitions that can maximise their payoff).
\end{enumerate}

Furthermore, coalition structures should preferably be either:
(i)  those that maximise the social welfare of the agents in the coalition, and thus are optimal; or 
(ii) those that form a stable coalition structure (where no agent would be incentivised to defect to another coalition in order to acquire a better payoff).
Sometimes, a coalition structure may be found that satisfies both (i) and (ii), yet this is not always the case \citep{BranLar,Travis}. 

A number of algorithms and approaches have been proposed for the generation of coalition structures (i.e.\ the first stage of  coalition formation). 
These approaches typically focus on efficiently addressing the exponentially complex problem of finding an optimal (or near-optimal) coalition structure that maximises social welfare. 
Typically, these approaches can be divided into: 
(i) \emph{exact algorithms} \citep{Rahwan009}, which guarantee the identification of an optimal solution (given sufficient time), typically using techniques such as dynamic programming \citep{rahwan08,Rahwan009} or branch-and-bound and anytime search with provable bounds \citep{sandholm99,dang04}; and 
(ii) \emph{non-exact algorithms}, which sacrifice optimality guarantees in order to rapidly produce high-quality solutions, often using heuristic or stochastic methods \citep{shehory98}. 
Although most work in this area has focused on CFGs, there have also been studies on \emph{Partition Function Games}, which explicitly consider externalities between coalitions \citep{RahwanMWJ12}.

This article does not focus on the general problem of \emph{coalition structure generation} \citep{RahwanAIJ2015}, but rather on the specific challenge of distributing coalition value calculations \emph{fairly} across the agents whilst maintaining \emph{self-interest}, but without the need for inter-agent coordination or duplicate allocation of coalition value calculations.
Early approaches assumed that each of the agents would calculate the  value of all coalitions that they were members of \citep{Shehory99,Blankenburg2005}, resulting in a  significant overlap of coalitions in the coalition value calculation sets and thus unnecessary computation costs for the multi-agent system (generating a total of $n2^{(n-1)}$ calculations for $n$ agents).


\citet{shehory95,shehory96,shehory98} investigated a method for distributing coalition value calculations (referred to as SK in this article), where agents contacted each other and negotiated over which coalition values to calculate.
As noted by \citet{Rahwan2007}, their approach had a number of weaknesses:
(i) \emph{high communication overhead} - the number of messages exchanged between agents was large (in some cases exponentially
large) when negotiating;
(ii) \emph{redundancy} - although the method  guaranteed that every coalition value calculation would be allocated, no guarantees were provided that they were allocated \emph{once and only once}, thus there was the risk of redundant calculations; and
(iii) \emph{fairness} - there were no guarantees that the agents' coalition value calculation sets  were approximately equal.
Furthermore, the SK algorithm had a memory requirement that grows exponentially with the number of agents.


A simple method for creating disjoint coalition value calculation sets was later investigated by \citet{Vinyals2012} (referred to here as VBFR), by exploiting an ordering amongst agents to partition the set of feasible coalitions into \emph{leading sets}.  
Specifically, the VBFR algorithm enumerates feasible coalitions in a graph, where a coalition is feasible if and only if it forms a connected subgraph.
This naturally enables the distribution of coalition value calculations across agents.
The algorithm casts the problem of generating all possible coalitions on a graph as a problem of enumerating all possible subgraphs.

VBFR uses an ordering amongst the agents (induced by a pseudotree) to partition the set of feasible coalitions into disjoint ``leading sets'' $M_i$, where each coalition in $M_i$ has agent $i$ as the smallest-index member (i.e.\ $\forall C \in M_i,\; \min(C) = i$).
These sets form a partition of the coalition space and can be used to distribute coalition value calculations across agents, with agent $i$ responsible for evaluating the coalitions in $M_i$. 
The resulting distribution of coalition value calculations is, however, highly unbalanced, as agents with low value IDs have disproportionately more calculations than those with higher IDs.
For example, for a fully connected graph, agent $1$ would always be assigned $2^{n-1}$ coalitions whereas agent $n$ is only assigned the coalition $\{n\}$.
This is illustrated in Table~\ref{tab:example2}, which lists the coalition value calculation sets for agents $2,3$ and $4$ (for $n=6$ agents generating coalitions of size $s=3$); note that agent $2$ has an allocation of $6$ different coalitions, whereas agent $4$ has a single allocation.

The approach guarantees that all coalitions distributed to agent $i$ include itself, thereby satisfying the \emph{self-interest} property. 
This property is particularly relevant in the smart grid domain for which the approach was originally developed, where agents represent \emph{``virtual electricity consumers''} that evaluate the benefits of forming coalitions for collective purchasing.
Thus, if an agent $i$ were assigned a coalition $C$ with $i \notin C$ (as in \citet{Rahwan2005,Rahwan2007}), a self-interested agent would have no incentive to evaluate $C$ accurately, since it would not benefit from that coalition obtaining a discounted electricity price.

\begin{table}[t]
\centering
\caption{A representative sample of coalition value
calculation allocation $CV^{s}_x$ for agents $x \in \{2,3,4\}$, where $n=6$
and the size of the coalitions $s=3$ for the three methods: N-DCA,
DCVC and VBFR.  Note that the allocation is broadly balanced for 
N-DCA and DCVC, but not for VBFR (agent $2$ has an allocation
of 6 coalitions, yet agent $4$ has a single coalition allocated).
Furthermore, for N-DCA and VBFR, each agent is a member of 
the coalitions that are allocated to it; this is
not the case for DCVC.  \label{tab:example2}}

\setlength{\tabcolsep}{10pt}
\renewcommand{\arraystretch}{1.2}

\begin{tabular}{cccccccc}
\toprule

Method & Allocation & \multicolumn{6}{c}{Coalitions} \\

\cmidrule(lr){3-8}

\multirow{3}{*}{N-DCA}
 & $CV^{3}_{2}$ & 2,3,4 & 2,3,5 & 2,3,6 & & & \\
 & $CV^{3}_{3}$ & 3,4,5 & 3,4,6 & 3,4,1 & & & \\
 & $CV^{3}_{4}$ & 4,5,6 & 4,5,1 & 4,5,2 & 4,6,2 & & \\

\midrule

\multirow{3}{*}{DCVC}
 & $CV^{3}_{2}$ & 3,4,5 & 2,5,6 & 2,4,6 & & & \\
 & $CV^{3}_{3}$ & 2,4,5 & 2,3,6 & 2,3,5 & & & \\
 & $CV^{3}_{4}$ & 2,3,4 & 1,5,6 & 1,4,6 & 1,2,4 & & \\

\midrule

\multirow{3}{*}{VBFR}
 & $CV^{3}_{2}$ & 2,5,6 & 2,4,6 & 2,4,5 & 2,3,6 & 2,3,5 & 2,3,4 \\
 & $CV^{3}_{3}$ & 3,5,6 & 3,4,6 & 3,4,5 & & & \\
 & $CV^{3}_{4}$ & 4,5,6 & & & & & \\

\bottomrule
\end{tabular}

\end{table}


The \emph{Distributed Coalition Value Calculation} 
(DCVC) family of algorithms \citep{Rahwan2005,Rahwan2007,Michalak2010} addresses many of the concerns identified in earlier work, by grouping coalitions into lists, and then using a decentralised method to divide the lists into \emph{shares}, one for each agent, resulting in a much fairer allocation of coalition value calculation sets (referred to as \emph{shares} in the DCVC literature) whilst eliminating redundancy.
Despite being endogenous, the allocation remains fully decentralised and can be performed independently by each agent without the need for communication.

The DCVC algorithm represents all feasible coalitions in structured
lists $L_s$, where $s \in \{1,\ldots,n\}$, and where each $L_s$ contains coalitions of size
$s$ ordered in reverse-lexicographical order, such that the first coalition
in the list $L_s$ is $\{n-s+1,\ldots,n\}$ and the last coalition in $L_s$ is $\{1,\ldots,s\}$. Thus, agents are aware of how
$L_s$ is ordered, even though they may not maintain the full list themselves.
Agent $i$ is assigned a contiguous block (or share) of 
$k=\lfloor  \frac{\mid L_s \mid}{n} \rfloor$ coalitions in $L_s$, corresponding to the indices $((i-1)k+1)$ to $ik$.
For those lists whose size is not
an integer multiple of the number of agents, a shared counter $\alpha$ is used to assign the
remaining coalitions (beyond $n \times \lfloor  \frac{\mid L_s\mid }{n} \rfloor$) to agents in a round-robin manner. The value of
$\alpha$ is incremented for each additional 
coalition, and reset to $1$ if it exceeds $n$. 
This counter is maintained consistently across all lists so that the maximum
difference between the number of coalitions calculated by the  
agents is at most one. 
The assignment proceeds sequentially from list $L_1$ onwards, ensuring that all agents consistently determine the  
agent responsible for each additional coalition. 
An example of the coalition value calculation shares where $n=6$
using the DCVC algorithm (for agents $2,3$ and $4$) is given in
Table~\ref{tab:example2}.  Although the DCVC algorithm has a similar
division of coalitions between agents as N-DCA, agents
will sometimes calculate the values for coalitions where they are not 
a member (for example, agent $3$ calculates coalition $\{2,4,5\}$).

There have been a few variants of DCVC in the literature.
Although the basic DCVC algorithm \citep{Rahwan2005} distributes
the coalitions evenly across all agents, the allocation
is not fair with respect to the individual operations (of comparisons
and additions) needed to generate all coalitions in each agent's share, as
the number of operations needed to find the next coalition in each agent's
share can fluctuate.  An extension to the DCVC algorithm was proposed 
\citep{Rahwan2007} to minimise (but not totally eradicate) this issue.
Crucially, none of these variants address the self-interest property, as all of the DCVC variants allocate coalitions to agents based on their \emph{positional share} of the reverse-lexicographic list~$L_s$, and thus an agent may be assigned coalitions of which it is not a member.
This is a structural consequence of the list-based allocation mechanism, rather than the specific share-sizing strategy, and thus applies equally to the basic algorithm, the modified variant, and the version used within D-IP~\citep{Michalak2010}, described below.

Subsequent work by \citet{Michalak2010} proposed a decentralised variant of the algorithm (the \emph{Distributed IP} (D-IP) algorithm) for efficiently finding the optimal coalition structure.
D-IP combined a modified version of DCVC with the IP (Integer Partition--based) search algorithm \citep{Rahwan009} to produce one of the first fully decentralised algorithms for optimal coalition structure generation.  
It consisted of three distributed stages; during the first, each agent uses a variant of DCVC to compute the characteristic function values for a subset of coalitions assigned to it.
During this process, the agents also determine the per-size aggregate statistics (i.e.\ the maximum and average coalition values $Max_s$ and $Avg_s$ for each list $L_s$) and evaluate those coalition structures that can be assessed from locally available values alone, including the grand coalition, the all-singletons partition, and structures consisting of two complementary coalitions.
During the second stage, agents exchange these statistics, enabling each agent to independently compute upper and lower bounds on every subspace, using these statistics together with the integer-partition representation of the search space (where each subspace corresponds to an integer partition of $n$), and to prune unpromising subspaces.
Novel filter rules are also applied to discard individual coalitions that provably cannot appear in an optimal structure, thereby substantially reducing the number of coalition values that need to be exchanged.  
In the final stage, the remaining subspaces are searched in a distributed manner; agents divide the coalitions in each list amongst themselves and apply a branch-and-bound technique to find the optimal coalition structure, exchanging updates as improved solutions are found.

As D-IP relies on DCVC for its value-calculation stage, it inherits certain limitations of that approach, notably that agents may be assigned coalitions of which they are not members, and, as noted earlier by \citet{Rahwan2007}, the index-to-coalition mapping based on Pascal matrices may suffer from integer overflow for standard 64-bit integers when $n \ge 68$.
Crucially, D-IP goes beyond the problem of distributing coalition value calculations, to address the full coalition structure generation problem, i.e.\, identifying the optimal \emph{partition} of agents, as opposed to the specific problem of distributing the value \emph{calculations} themselves fairly whilst ensuring self-interest.
Interestingly, these two concerns are complementary, and thus the coalition values produced by a distributed value-calculation algorithm such as N-DCA could, in principle, serve as input to a distributed structure-search algorithm such as D-IP (thereby replacing the DCVC role). We return to this possibility in Section~\ref{sec:conclusions}.

A precursor of the mathematical approach presented in Section~\ref{sec:ippe} appeared in early work by \citet{RileyAAMAS14,RileyAAAI15}. The notion of \emph{Increment Arrays} was first presented as part of the \emph{Self-Interested Coalition Value Calculation} (SICVC) approach \citep{RileyAAMAS14},
and subsequently evolved as part of the work on the \emph{Distributed Coalition Generation (DCG)} approach \citep{RileyAAAI15}. In DCG, \emph{coalition value calculation shares} (CVs) are generated from which each agent could then identify and generate a subset of the coalitions to investigate.
A theoretical evaluation of the approach proved that all the coalitions for a community of $n$ agents would appear within the CVs.
However, the published algorithm that implemented the approach was lengthy and complex, due to the need to identify canonical sequences for different equivalence classes, and thus eliminate repeated sequences.
Furthermore, no empirical analysis was presented that compared the efficiency of the DCG approach with respect to other state-of-the-art approaches. 
The \emph{Necklace-based Distributed Coalition Algorithm} (N-DCA)\footnote{A preliminary version of the N-DCA algorithm appeared in \cite{PayneIEEEWIC24}.} presented in Section~\ref{sec:algorithm} is a successor to DCG, replacing the explicit combinatorial construction with a formulation grounded in two-colour necklaces.

The final variant of DCVC in the literature to date is the D-SlyCE algorithm \citep{Voice12}. D-SlyCE is an algorithm that distributes coalition value calculations when the agents are represented as nodes on a graph. Edges between nodes (agents) represent synergistic links that allow the linked agents to form a coalition. When the graph is fully connected (i.e.\ when the graph describes a characteristic function game), then D-SlyCE mimics the operations of DCVC. 

 
The algorithms discussed above focus on distributing coalition \emph{value calculations} across the agents.
A complementary body of work has investigated how, given those values, the
space of possible coalition structures can be efficiently searched to find an optimal partition.
\citet{RahwanAIJ2015} present a comprehensive survey, cataloguing both exact and approximate approaches.
Amongst the exact algorithms, the \emph{IP} algorithm \citep{Rahwan009} established the integer partition-based search strategy, which was later refined into the \emph{Improved Dynamic Programming} (IDP) approach \citep{rahwan08}.
\citet{Michalak2016} later combined these two strategies into ODP-IP, a hybrid algorithm that was at that time the fastest known exact solver for the complete set partitioning problem.
More recently, \citet{Changder2020} introduced the \emph{Overlapping, Dividing the subspace, and Subspace Shrinking} (ODSS) algorithm, which partitions the search space into two disjoint sets of subspaces and introduces a novel subspace-shrinking technique, yielding substantial speed-ups over ODP-IP on several benchmark value distributions, although not uniformly across all distributions.
These \emph{centralised} algorithms assume that a single entity has access to all $2^n - 1$ coalition values, which contrasts with the \emph{decentralised} approaches discussed earlier, that address the preceding step of \emph{computing} those values in a distributed setting, and that should ideally produce per-agent allocations that are fair, non-redundant, and self-interested. The coalition values generated by such algorithms could subsequently serve as input to any of the above structure-search algorithms, whether centralised or, as in D-IP, distributed.


Several other recent lines of work address coalition structure generation under additional constraints or using alternative computational paradigms.
For example, in \emph{Graph-Constrained Coalition Formation} (GCCF) by \citet{Bistaffa2017}, feasible coalitions
are restricted to connected subgraphs of an interaction network.
The authors propose CFSS, an anytime branch-and-bound algorithm (and its parallel variant) and demonstrate good performance in application domains including ride sharing and collective energy purchasing.
However, the approach itself remains centralised, and does not address the distribution of coalition-value computations across agents.
Furthermore, in the context of unrestricted characteristic-function games, \citet{Taguelmimt2024SMART,Taguelmimt2025SALDAE} have recently proposed two algorithms, \emph{SMART} and \emph{SALDAE}. 
\emph{SMART} \citep{Taguelmimt2024SMART} is a centralised exact algorithm built on three complementary techniques:
(i) complementarity-based dynamic programming (CDP);
(ii) gradual search over solution subspaces (GRAD);
and (iii) distributed integer-partition-based search (DIPS).
For each problem size, an offline preprocessing step selects the pair of complementary coalition-size sets that minimises the dynamic-programming search space.
In contrast, \emph{SALDAE} \citep{Taguelmimt2025SALDAE} is a centralised anytime algorithm inspired by multiagent path finding that searches a graph of coalition structures using multiple search agents and conflict-resolution mechanisms.
It is designed to scale to large instances with hundreds or thousands of agents, returning progressively better solutions as computation time permits.
All three algorithms are examples of approaches that address coalition \emph{structure} generation
(i.e.\ the second stage~of the framework proposed by \citet{sandholm99}),
and thus complement rather than replace distributed coalition
\emph{value} calculation algorithms, such as DCVC, etc.

None of the above approaches address the problem that specifically motivates the present work, namely the fair, decentralised, and communication-free distribution of coalition value calculations across agents in unrestricted characteristic function games, with formal guarantees that each agent only evaluates coalitions of which it is a member.
N-DCA is, to the best of our knowledge, the only algorithm that provably satisfies all five of the desirable properties listed in Section~\ref{sec:intro}.


\section{Preliminaries} \label{sec:preliminaries}

\subsection{Characteristic Function Games}
\label{sec:backgoundCFG}

The original notion of coalitions emerged from the
seminal work of \citet{vonNeumann1944}, where they
constructed the theory of \emph{$n$-person cooperative games}
in characteristic function form, where each coalition has an
associated real numeric \emph{utility} value that it
can achieve by completing some task or achieving some goal.
A \emph{Characteristic Function Game (CFG)} is denoted
${\mathcal G} = \langle Ag, \nu \rangle$, where $Ag$ is the set
of $n$ agents in the game and $\nu$ is the characteristic 
function that maps every potential coalition $C \subseteq Ag$
to a real numeric value (i.e.\ $\nu : 2^n \rightarrow \mathbb{R}$).
By default an empty coalition receives no utility payoff (i.e.\
$v(\varnothing) = 0$).

In many CFGs, it is often 
assumed that each agent shares the characteristic function
(and thus has perfect information to use the characteristic function 
to get an accurate utility value for each coalition), and that the value of
each coalition does \emph{not} change over time.  Furthermore, each coalition's
characteristic function value is independent of the \emph{externalities} of the
coalition formation (i.e.\ the inter-coalitional dependencies \citep{RahwanMWJ12}).
The \emph{Grand Coalition} $C$ for a characteristic function game
${\mathcal G} = \langle Ag, \nu \rangle$ is a coalition of
all of the agents within the game ($C = Ag$).

In characteristic function games that are not superadditive or subadditive,
the value of each coalition needs to be calculated so that
an optimal
coalition structure can be found.
Calculating the value of every coalition has high computation costs
due to the exponential number of possible coalitions, which is
equal to the powerset of $Ag$ without the emptyset (i.e.\ 
$P(Ag)~\backslash~\{\varnothing\}$).  However, it is possible within a multi-agent system
to spread the computational cost of calculating these values across all of
the agents via a \emph{coalition value calculation set (CVCS)}.  We say that 
a \emph{coalition value calculation allocation (CV)} of an agent $x \in Ag$ is denoted
$CV_x = \{C^1, \ldots, C^j\}$, consisting of a set of $j$ coalitions
that agent $x$ has been assigned to calculate the value of. 
If we specifically want to refer to the coalitions in $CV_x$
that are of size $s$, then the notation $CV_x^s$ is used. 
When all coalitions have been distributed into at least one coalition value 
calculation allocation, we call this combination of all allocations
the \emph{coalition value calculation set} (denoted $CVCS$), which is defined as:
\[
CVCS = \bigcup_{x = 1}^n CV_x = P(Ag)~\backslash~\{\varnothing\}
\]
A coalition value calculation set (\emph{CVCS}) is equivalent to the powerset 
of all agents minus the empty set.  A special type of coalition 
value calculation set is the \emph{minimal coalition value calculation 
set}, denoted $CVCS^*$, where $CVCS^* = \{CV_1, \ldots, CV_n\}$ and none of the elements in 
$CV_x~\cap~CV_y = \varnothing \text{~for any~} x,y \in Ag \text{~where~} x \neq y$.

\subsection{Combinatorial Necklaces}
\label{sec:backgoundNecklaces}

Combinatorial necklaces (or \emph{necklace permutations}) have been used in a variety of different modelling domains, with some of their earliest uses appearing within discussions on cyclic hashing schemes \citep{10.2307/2318994}.
More recently, necklaces have been used for generating kernels to support star tracking \citep{ARNAS2017291}, by mobile ad-hoc networks to facilitate mutual discovery of wireless transceivers \citep{8292573}, and by data compression techniques such as \emph{Compression by Substring Enumeration} \citep{6033796}.
They also appear within music theory to model scales within a chromatic universe \citep{Rappaport2007,7839578} and rhythm \citep{TOUSSAINT20102}.
The necklace problem is a combinatorial problem that determines how many different arrangements of a string of beads exist within a circular loop of $n$ beads, given $k$ colours (see Figure~\ref{fig:necklaceExample}).
Two necklaces are identical (i.e.\ they belong to the same equivalence class) if and only if one can be transformed into another through the cyclic rotation of the beads (Figure~\ref{fig:necklaceExample} left).
Thus, we can say that a \emph{necklace of $n$ beads in $k$ colours} is an equivalence class of $k$-ary $n$-tuples (i.e.\ strings of \emph{beads}) under rotation \citep{RUSKEY1992414}.  
An advantage of utilising combinatorial necklaces is the fact that there are many well understood formulae that can be used to interrogate additional information about the necklace itself; for example, the total number of necklaces of length $n$ given $k$ coloured beads is defined by:

\begin{equation}\label{eqn:totalnecklaces}
N_k(n) = \frac{1}{n}\sum_{d \mid n} \phi(d)k^{\frac{n}{d}}
\end{equation}

\noindent
i.e.\ the sum over all of the divisors $d$ of $n$, where $\phi(d)$ is  Euler's totient function of $d$, an arithmetic function that counts the number of positive integers up to a given integer $d$ that are also co-prime with $d$ (i.e.\ the number of integers $x$ in the range $1 \le x \le d$ for which the greatest common divisor $gcd(d,x)$ is equal to 1).
Within this study, we only consider two-colour necklaces (i.e.\ $k=2$).
Equation \eqref{eqn:fixeddensitynecklaces} determines the number of fixed density 2-colour necklaces of density $s$ \citep{Sawada:1999:EAG:314500.314910}.
\begin{equation}\label{eqn:fixeddensitynecklaces}
    N_2(n,s) = \frac{1}{n}\sum_{d \mid \gcd(n,s)} \phi(d)*\binom{\frac{n}{d}}{\frac{s}{d}}
\end{equation}

\noindent
Two-colour necklaces ($k=2$) are used here as we only consider two different states for each position within the necklace: relating to whether an agent is in a coalition, or it is not in a coalition.
This is achieved through the construction of \emph{increment arrays} (Section~\ref{sec:ippe}), that are used to generate the coalition allocations for each agent. The bijection between 2-colour necklaces and increment arrays (IAs) is given in Section~\ref{sec:bijection}, and the total number of IAs across all coalition sizes for a given number $n$ of agents is given by $N_2(n)$, whereas $N_2(n,s)$ determines the number of IAs generated for a given coalition size $s$ (Equation \eqref{eqn:fixeddensitynecklaces}).

\section{Distributed Coalition Value Allocation} \label{sec:ippe}

This section develops the mathematical framework underlying the distributed, non-redundant, communication-free allocation of coalition value calculations.
The central concept developed here is the notion of an \emph{increment array} (IA), a compact representation of a coalition that captures the spacing between its members.
We show that IAs partition naturally into equivalence classes under circular shifts, and that each equivalence class generates a distinct, non-overlapping set of coalitions (Theorem~\ref{thm:main2}).
For classes whose IA contains a repeating sub-sequence, multiple
agents can generate the same coalition. Thus, a \emph{designation scheme} is described that assigns responsibility to a subset of agents, thereby ensuring that every coalition of every size is computed exactly once.
A rotated variant of this scheme balances the workload so that the total number of coalition value calculations assigned to any two agents differs by at most one (Theorem~\ref{thm:aggregate-balance}).
Throughout, every agent can determine its own assignments independently, using only its identifier and the shared parameter~$n$, thus eliminating the need for communication between agents, and ensuring that each agent is self-interested; i.e.\ it is only assigned coalitions in which it is a member.
The algorithmic realisation of this framework is presented in Section~\ref{sec:algorithm}; Figure~\ref{fig:pipeline} provides an overview of the complete pipeline.

\begin{figure}[t]
\centering
    \centering
    \includegraphics[width=\linewidth,clip,trim=10 440 140 380]{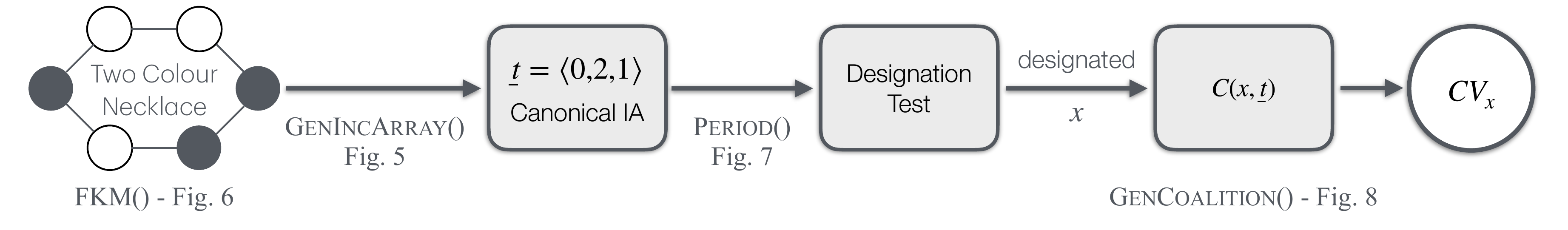}

\caption{Overview of the N-DCA pipeline executed independently by each agent~$x$.  
Two-colour necklaces of length~$n$ are generated by the FKM algorithm (Section~\ref{sec:bijection}); each necklace is converted to its canonical increment array~$\underline{t}$ via \textsc{GenIncArray}; the period of~$\underline{t}$ determines whether agent~$x$ is designated to evaluate it (Section~\ref{sec:designation}); and designated IAs are converted to coalitions via \textsc{GenCoalition}.  The union of all generated coalitions forms agent~$x$'s allocation~$CV_x$.  The entire process requires only the agent's identifier~$x$ and the shared parameter~$n$.}
\label{fig:pipeline}
\end{figure}

\subsection{Preliminaries and Integer Increments}

In the context of a characteristic function game, we assume that there are $n \in \mathbb{N}$ agents (typically $n > 2$), whose identifiers are represented by contiguous natural numbers; i.e.\ $Ag =  \{1, 2, \ldots, n\}$.\footnote{For convenience, this set of agents may also be referred to as $\set{x_1,x_2,\ldots,x_n}$.} 
A coalition $C \subseteq Ag$ of size $s = \mid C \mid$ is therefore represented as an ordered sequence of agents (modulo $n$) with respect to their identifiers (IDs), where $s \in \mathbb{N}$, such that an agent can only appear once in any coalition within the characteristic function game model.
Throughout, all agent identifiers are taken modulo $n$ in the range $\{1,2,\ldots,n\}$; when arithmetic on an agent ID yields $0$, it is replaced by $n$ (since $0\equiv n \pmod{n}$).
We also use $x$ (or $y$) to denote a specific agent identifier, whereas $i$ and $j$ are used as bound variables in quantified statements.

\begin{definition}[Offset increment]
\label{def:offset-increment}
Let $C = \{x_1, x_2, \ldots, x_s\} \subseteq \{1, 2, \ldots, n\}$ be a
coalition of size $s$, with members ordered cyclically (modulo $n$).
The \emph{integer increment} between two consecutive members $x_i$ and
$x_{i+1}$ is the circular distance $(x_{i+1} - x_i) \bmod n$.  This
decomposes as the sum of a \emph{baseline increment} of~$1$ (representing
adjacency) and an \emph{offset increment} $t_i \ge 0$ (representing the
number of agents omitted between $x_i$ and $x_{i+1}$).  That is, the
integer increment between $x_i$ and $x_{i+1}$ equals $t_i + 1$.
\end{definition}
Thus, if a single agent is omitted from the coalition (e.g.\ $x_4$ is omitted from a set of six agents), the integer increment between agents $x_3, x_5$ will be 2 (i.e.\ the baseline and a single offset increment), whereas the integer increment for the other five agents is simply 1. 
As the total integer increment for a coalition of size $s$ from $n$ agents is $n$, it trivially follows that:


\begin{lemma}\label{lemma:offset-increment}
Let $C\subseteq\set{1,2,\ldots,n}$ be a coalition with $s={\mid}C{\mid}$ agents, such that
$C=\set{x_1,x_2,\ldots,x_s}$, and 
$\langle t_0, t_1\ldots, t_{(s-1)}\rangle$
is the corresponding sequence of offset increments. The total
offset increment for $C$ is given as:

\[
	\sum_{i=0}^{s-1} t_i = n - s
\]
\end{lemma}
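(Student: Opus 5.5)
The identity follows from a telescoping sum of the circular gaps, together with the decomposition of each gap in Definition~\ref{def:offset-increment}. First I fix the indexing. Order the members cyclically as $x_1 < x_2 < \cdots < x_s$ in $\{1,\ldots,n\}$, and let $t_0,\ldots,t_{s-1}$ be the offset increments of the $s$ consecutive pairs around the cycle, with the wrap-around pair $(x_s, x_1)$ included. For definiteness, let $t_{i-1}$ belong to the pair $(x_i, x_{i+1})$, where indices are read modulo $s$ so that $x_{s+1}=x_1$. By Definition~\ref{def:offset-increment}, the integer increment of the pair $(x_i,x_{i+1})$ is $d_i = (x_{i+1}-x_i) \bmod n = t_{i-1}+1$.

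The main step is to show that $\sum_{i=1}^{s} d_i = n$. For $1 \le i \le s-1$ we have $x_{i+1} > x_i$, so $d_i = x_{i+1}-x_i \in \{1,\ldots,n-1\}$. For the wrap-around pair, $d_s = (x_1 - x_s) \bmod n = x_1 - x_s + n$, because $x_1 - x_s \le 0$. Summing, the non-wrap terms telescope to $x_s - x_1$, and adding $d_s$ gives exactly $n$.

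One degenerate case needs care. When $s=1$, the single pair is $(x_1,x_1)$. Here the "mod $n$" distance must be read as $n$, a full circle, rather than $0$. This matches the paper's convention that an arithmetic result of $0$ is replaced by $n$, and it also matches the intended meaning that $n-1$ agents are omitted. I expect this convention check to be the only real obstacle. I would state it explicitly, since $d_1 = n$ then gives $t_0 = n-1 = n-s$ directly. The empty coalition is excluded, since coalitions are nonempty here.

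Finally, substituting $d_i = t_{i-1}+1$ gives
\[
n=\sum_{i=1}^{s} d_i=\sum_{i=0}^{s-1}(t_i+1)=\sum_{i=0}^{s-1}t_i + s,
\]
and rearranging yields $\sum_{i=0}^{s-1} t_i = n-s$.

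As a cross-check, each offset $t_i$ counts the agents strictly between consecutive members. The open cyclic arcs between consecutive members partition $Ag\setminus C$, so the total number of omitted agents is $n-s$. This counting argument gives a second proof of the same statement.
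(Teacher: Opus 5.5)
Your proof is correct, but it takes a different route from the paper's. The paper argues by successive deletion. It starts from the grand coalition, where every offset increment is $0$, and notes that deleting one agent merges the two integer increments on either side of it into one, so the total offset rises by exactly $1$. After $n-s$ deletions the total is $n-s$. You instead compute directly: the integer increments telescope once around the circle to $n$, and you then subtract the $s$ baseline units. Your version works with the literal $\bmod n$ definition and isolates the one place where that definition needs a convention, namely the singleton case, where $(x_1-x_1)\bmod n=0$ must be read as $n$. The paper handles this only implicitly, through the final merge of the two gaps between the last two agents into a full-circle gap. Your version also avoids a small imprecision in the paper's wording: the merged increments are not in general ``each of value 1''. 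The conclusion still holds, since merging $t_a+1$ and $t_b+1$ gives $(t_a+t_b+1)+1$, which raises the offset total by exactly one. The paper's argument, for its part, foregrounds the reading of offsets as omitted agents; your closing cross-check, in which the open arcs partition $Ag\setminus C$, captures the same idea without induction. The sentence the paper places just before the lemma (``the total integer increment \ldots\ is $n$'') is exactly the fact you prove, so your argument supplies the justification the paper takes for granted there. One small addition: Definition~\ref{def:offset-increment} lets the cyclic order start at any member, so add a line noting that changing the starting member only rotates $\langle t_0,\ldots,t_{s-1}\rangle$. Fixing $x_1<\cdots<x_s$ therefore loses no generality.
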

\begin{proof}
As a pair of contiguous agents will have an offset increment of 0, the Grand Coalition (of $n$ agents) will also have a total offset increment of 0.  
Removing one agent merges two adjacent integer increments (each of value 1) into a single increment of value 2, thereby increasing the total offset by 1. Removing $n-s$ agents therefore increases the total offset from 0 to $n - s$.
\end{proof}

To illustrate this, consider the coalition $\set{1, 3, 6}$ ($s=3$) generated from a set of six agents (i.e.\ $n=6$).  
The integer increment between the first and second agent ($1$ and $3$ respectively) consists of a single baseline increment, and an offset increment,\footnote{Note that offset increments are indexed from $t_0$ (referring to positions within the IA tuple), whereas agent identifiers are indexed from $1$. Thus $t_0$ denotes the offset increment between the first and second members of the coalition, $t_1$ between the second and third, and so on.} such that $t_0 = 1$.  
Likewise, the integer increment between $3$ and $6$ consists of a single baseline increment, and an offset increment of 2; i.e.\ $t_1 = 2$.
The integer increment between $6$ and $1$ is only 1 (i.e.\ the baseline increment) as by increasing the value $6$ by the baseline
increment, the value $7$ is obtained.
Given that $1$ is congruent to $7$ modulo $6$ (i.e.\ $1 \equiv 7 \bmod(n)$), this yields an offset increment $t_2 = 0$, and a total offset increment of $\sum_{i=0}^{s-1} t_i = 3$.
This is illustrated in Figure~\ref{fig:increment} where the agents in the coalition appear within a cycle (to represent the modulo assumption).  

\begin{figure}[t]
    \centering
    \includegraphics[width=0.5\linewidth]{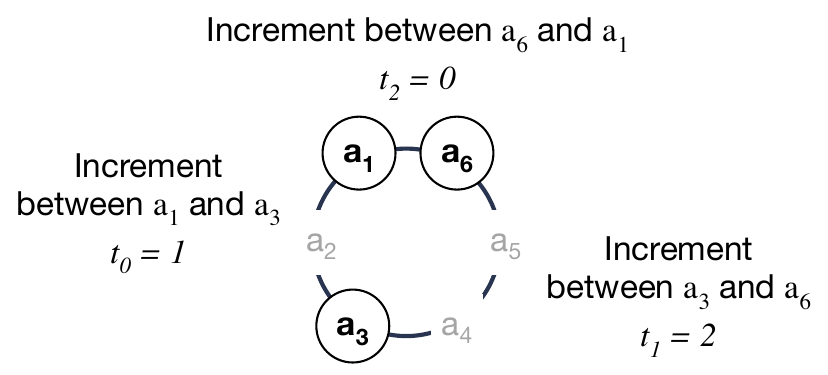}
    \caption{The increments between three agents in the coalition $\set{1, 3, 6}$ ($s=3$ from $n=6$ agents), arranged as cycle.  Note that the missing agents are shown in a lighter font. Each increment represents these missing agents.}
	\label{fig:increment}
\end{figure}

It is possible to characterise a coalition by a sequence of offset increments, such that a coalition $C \subset Ag$ (where $s < n$) will have a total of $n-s$ offset increments.
Thus, multiple coalitions can be characterised by using the same sequence of offset increments if the starting agent of the coalition is changed.
For example, when $n=6$ and $s=3$, coalitions $\{1,2,5\}$ and $\{2,3,6\}$ have the same increment between first and second agent (i.e.\ the baseline increment of 1 and therefore an offset increment of $t_0 = 0$), the same increment between the second and third agent (i.e.\ the baseline increment plus an offset increment of $t_1 = 2$), and the same increment between the third and first agent (i.e.\ the baseline increment plus an offset increment of $t_2 = 1$).
The total number of unique coalitions of size $s$ from $n$ agents can be calculated using binomial coefficients, i.e.\ $\binom{n}{s} = \frac{n!}{s!(n-s)!}$ and share the same \emph{total offset increments} (Lemma~\ref{lemma:offset-increment}); however the distribution of \emph{total offset increments} will vary, depending on the coalition. This is illustrated in Table~\ref{tab:example1}, which lists all of the coalitions possible for $n=6$.
We therefore characterise the various distributions of offset increments by defining different \emph{increment arrays} (Definition~\ref{def:increment-array}).

An \emph{integer partition} of $k$ is a collection of natural numbers (i.e.\ summands, or \emph{parts}) that add up to exactly $k$.
It is therefore possible to generate a subset of integer partitions for $n-s$ when $s < n$, to determine the different offset increments between different coalition members.
The full set of partitions is denoted ${\mathcal I}(n-s)$.
For example, if $n=6$ and $s = 3$, then this results in a set of 3 partitions: ${\mathcal I}(3) = \{(3),(2,1),(1,1,1)\}$,  with $1$, $2$, and $3$ \emph{parts} respectively, whereas if $s=2$, this results in a set of 5 partitions: ${\mathcal I}(4) = \{(4),(3,1),(2,2),(2,1,1),(1,1,1,1)\}$.

\subsection{Increment Arrays}
\label{sec:increment-arrays}

An \emph{increment array} (IA) denoted $\underline{t} = \langle t_0,t_1,\ldots,t_{s-1} \rangle$ is a tuple that is constructed from an integer partition $\lambda \in {\mathcal I}(n-s)$, where the number of parts of $\lambda$ is less than or equal to $s$ (i.e.\  ${\mid}\lambda{\mid}\leq s$). If the number of parts of $\lambda$ is less than $s$, then $\lambda$ can be used to construct the IA with $s - {\mid}\lambda{\mid}$ additional elements that have the value $0$ in the tuple. 
Since the elements of a partition may be arranged in different
orders within the tuple, a single partition can give rise to
multiple distinct IAs.  However, many of these are related by
circular shifts: for example, $\langle 0,1,2 \rangle$ and
$\langle 2,0,1 \rangle$ are circular shifts of each other and
generate the same set of coalitions (across all starting agents).
We formalise this in Section~\ref{sec:equivalence} by defining an
\emph{equivalence class} $\approx$ under circular shifts; only one
canonical representative per class is needed to generate a minimal
coalition value calculation set ($CVCS^*$), i.e.\ the union of all
coalition value calculation allocations with no duplicate coalitions.
 
\begin{definition}[Increment array]
\label{def:increment-array}
An \emph{increment array} (IA) of size $s$ for $n$ agents is a tuple
$\underline{t} = \langle t_0, t_1, \ldots, t_{s-1} \rangle$ of
non-negative integers satisfying
$\sum_{i=0}^{s-1} t_i = n - s$,
constructed from an integer partition
$\lambda \in \mathcal{I}(n - s)$ with $|\lambda| \le s$ parts, padded with
$s - |\lambda|$ zeros if necessary.  Each $t_i$ is the offset increment
(Definition~\ref{def:offset-increment}) at position $i$ within the
coalition.
\end{definition}

Again, 
consider the case where $n=6$, $s = 3$, and the partition $\lambda=(2,1)$ is used to construct an IA.  As the number of parts in $\lambda$ is less than $s$, the partition will be augmented by an additional element whose value is $0$ (which does not affect the sum of the partition). The selection of this partition would result in 6 tuples, of which two can be selected as representative IAs, whereas the others are simply permutations due to circular shifts; i.e.\ 
$\langle 0,1,2 \rangle$, has the two circular shifts
$\langle 2,0,1 \rangle$ and
$\langle 1,2,0 \rangle$; yet 
$\langle 1,0,2 \rangle$, has the two circular shifts
$\langle 2,1,0 \rangle$ and
$\langle 0,2,1 \rangle$.
The resulting IA $\langle 0,1,2 \rangle$ represents offset increments for the coalition $\set{5,6,2}$ (amongst others), whereas 
the IA 
$\langle 0,2,1 \rangle$ represents the offset increments for the coalition $\set{5,6,3}$.  In contrast, the IA $\langle 5 \rangle$ represents singleton coalitions for all 6 agents, as the offset increment and baseline increment $t_0 + 1 = n$. As the addition of the increment $n$ to any agent $x \in C$ (modulo $n$) results in $x$, any IA of size 1 is equivalent to the identity function.

\subsection{Generating Coalition Value Calculation Allocations}
\label{sec:coalition-value-calculation-allocations}

An IA characterises a coalition, starting with an initial agent identifier, as each element of the IA defines the increments in the value of the identifier for each of the agents listed within the coalition. We can define the \emph{cumulative integer increment} $\varphi_i$ for $1\le i\le s$ as the cumulative sum of the baseline and offset increments for the agents preceding $x_i$ in the coalition $C=\set{x_1,x_2,\ldots,x_s}$:
\begin{equation}\label{eqn:offset}
\varphi_i~=~\left\{
{
\begin{array}{lcl}
0&\mbox{ if }&i=1\\
\sum_{k=0}^{i-2}~(t_k + 1)&\mbox{ if }&2\leq i\leq s
\end{array}
}
\right.
\end{equation}
Note that $\varphi_{s+1}=n$, as this is equivalent to the sum of all offset increments (which, by Lemma~\ref{lemma:offset-increment}, is $n-s$) and $s$ baseline increments.

\begin{definition}[Coalition generation function]
\label{def:coalition-generation}
Given an IA $\underline{t} = \langle t_0, t_1, \ldots, t_{s-1} \rangle$
and a starting agent $x \in \{1, 2, \ldots, n\}$, the \emph{coalition
generated by $\underline{t}$ from $x$} is:
\begin{equation}\label{eqn:valuex}
    C(x, \underline{t}) \;=\; \{x\} \;\cup\;
    \bigcup_{i=2}^{s} \bigl\{\, (x + \varphi_i) \bmod n \,\bigr\}
\end{equation}
where the cumulative integer increment $\varphi_i$ is defined as in
Equation~(2).  By construction, the starting agent $x$ is always a
member of $C(x, \underline{t})$.
\end{definition}

Each IA $\underline{t}$ represents the necessary offset increments from one agent ID of the coalition array to the next (modulo $n$), and thus can be used by an agent $x$ to \emph{generate a coalition} $C$ that includes itself.
This is based on the assumption that the first element of $C$ always refers to this agent (i.e.\ the agent $x$ is \emph{guaranteed} to be a member of the coalition $C$), and the remaining agents are determined using the function $C(x,\underline{t})$ as defined above in Equation \eqref{eqn:valuex}.

The following lemma states that for any coalition $C$, there is always some IA $\underline{t}$ and an agent $x\in C$ that can generate it:

\begin{lemma}\label{lemma:ns-sequence}
Let $C\subseteq\set{1,2,\ldots,n}$ with ${\mid}C{\mid}=s$. 
There is an increment array (IA) $\underline{t}$ and $x\in C$ such that $C=C(x,\underline{t})$.

\end{lemma}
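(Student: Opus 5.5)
The proof is a direct construction: read the increment array off the cyclic gaps of $C$. The edge case $s=0$ is excluded, since coalitions are non-empty, so $s\ge 1$. Choose any member of $C$ as the starting agent $x$. List the members of $C$ in cyclic order starting from $x$ as $x=x_1,x_2,\ldots,x_s$, meaning that $(x_i - x)\bmod n$ is strictly increasing in $i$ when taken in the range $\{0,\ldots,n-1\}$. For $0\le i\le s-1$ define
\[
t_i \;=\; \bigl((x_{i+2}-x_{i+1}) \bmod n\bigr) - 1,
\]
with the convention $x_{s+1}=x_1$. When $s=1$, the circular distance is taken to be $n$ rather than $0$, giving $t_0=n-1$.

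The first step is to check that $\underline{t}=\langle t_0,\ldots,t_{s-1}\rangle$ is an IA in the sense of Definition~\ref{def:increment-array}.

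\begin{itemize}
\item Each $t_i\ge 0$. Consecutive members are distinct, so for $s\ge 2$ their circular distance lies in $\{1,\ldots,n-1\}$; the case $s=1$ is handled by the convention above.
\item $\sum_i t_i = n-s$. This follows from Lemma~\ref{lemma:offset-increment}, or directly: because the $x_i$ are in cyclic order, the circular distances telescope to exactly one full turn, so $\sum_i (t_i+1)=n$.
\item The nonzero entries of $\underline{t}$, sorted into non-increasing order, form a partition $\lambda\in\mathcal{I}(n-s)$ with at most $s$ parts. The tuple $\underline{t}$ is an arrangement of $\lambda$ padded with $s-|\lambda|$ zeros, which is exactly the construction in Definition~\ref{def:increment-array}.
\end{itemize}

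The second step is to verify that $C(x,\underline{t})=C$, by induction on $i$. From Equation~\eqref{eqn:offset}, $\varphi_1=0$ and $\varphi_i=\sum_{k=0}^{i-2}(t_k+1)$. Each summand is the circular distance from $x_{k+1}$ to $x_{k+2}$, so $\varphi_i \equiv x_i - x \pmod n$. Hence $(x+\varphi_i)\bmod n = x_i$, with the paper's convention of replacing $0$ by $n$. It follows that $C(x,\underline{t})=\{x_1,\ldots,x_s\}=C$.

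The only real obstacle is bookkeeping, namely making the cyclic ordering precise and handling the modular wrap-around. Defining the order by the values $(x_i-x)\bmod n\in\{0,\ldots,n-1\}$ settles both issues at once. It guarantees that the partial sums $\varphi_i$ for $i\le s$ are strictly increasing and stay below $n$, so no member is revisited early. It also makes the last gap close the cycle exactly, since $\varphi_{s+1}=n$.
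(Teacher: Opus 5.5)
Your proof is correct and uses essentially the same construction as the paper: read the increment array off the gaps between consecutive members, then check non-negativity, the sum $n-s$, and that the cumulative increments $\varphi_i$ land exactly on the members. The differences are cosmetic. The paper fixes $x=\min C$, uses plain differences $c_{k+2}-c_{k+1}-1$ for the first $s-1$ entries and defines $t_{s-1}$ as the remainder; you allow any starting member, use circular gaps throughout, and also check the partition clause of Definition~\ref{def:increment-array}, which the paper leaves implicit.
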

\begin{proof}
Let $C = \{c_1, c_2, \ldots, c_s\} \subseteq \{1, 2, \ldots, n\}$ with $c_1 < c_2 < \cdots < c_s$. Set $x = c_1$ and define
\begin{align*}
t_k &= c_{k+2} - c_{k+1} - 1 \quad \text{for } 0 \leq k \leq s-2, \\
t_{s-1} &= (n - s) - \sum_{i=0}^{s-2} t_i
\end{align*}
We verify that $\underline{t}$ is a valid IA and that it generates $C$.

\begin{enumerate}[label=(\roman*)]
\item \emph{Non-negativity.}
For $0 \leq k \leq s-2$: since $c_{k+1} < c_{k+2}$ are distinct positive integers, $c_{k+2} - c_{k+1} \geq 1$, so $t_k \geq 0$.

For $t_{s-1}$: summing the first $s-1$ terms gives
\[
\sum_{k=0}^{s-2} t_k
= (c_2 - c_1 - 1) + (c_3 - c_2 - 1) + \cdots + (c_s - c_{s-1} - 1)
= c_s - c_1 - (s-1)
\]
where all intermediate $c_j$ cancel in pairs. Hence
\[
t_{s-1} = (n-s) - (c_s - c_1 - (s-1)) = n - 1 - c_s + c_1 \geq 0
\]
since $c_1 \geq 1$ and $c_s \leq n$.

\item \emph{Correct sum.}
By definition, $t_{s-1} = (n-s) - \sum_{i=0}^{s-2} t_i$, so $\sum_{i=0}^{s-1} t_i = \sum_{i=0}^{s-2} t_i + t_{s-1} = n - s$.

\item \emph{Correct coalition.}
By Equation \eqref{eqn:offset}, for $2 \leq i \leq s$:
\[
\varphi_i
= \sum_{k=0}^{i-2}(t_k + 1)
= \sum_{k=0}^{i-2}(c_{k+2} - c_{k+1})
= c_i - c_1
\]
Hence $(x + \varphi_i) \bmod n = (c_1 + c_i - c_1) \bmod n = c_i$ for $2 \leq i \leq s$, since $1 \leq c_i \leq n$. Together with $x = c_1$, we obtain $C(x, \underline{t}) = C$.

\end{enumerate}
\end{proof}
\noindent
Table~\ref{tab:example1} illustrates the subset of the coalition value calculation sets for each agent, grouped into lists (i.e.\ columns) 
for each of the canonical IAs of size $s$, generated for each of the $n=6$ agents (i.e.\ rows).
The union of the coalition value calculation sets listed in this Table represent the $\binom{n}{s}$ unique subsets of size $s$ in the powerset of $N=\{1,2,3,4,5,6\}$.
Each column represents a set of coalitions for each canonical IA $\underline{t}$.
The rows represent the different coalitions that can be generated using $C(x,\underline{t})$ for each agent $x$ for some $1\leq x\leq n$, where $x$ appears as the first element in the coalition.
These are the coalitions that represent the \emph{allocation} of \emph{coalition value calculation allocations} given to each agent $x$.
Note that every possible coalition is assigned once and only once, and that an integer partition may form more than one IA; for example the two IAs $\langle 0,1,2\rangle$ and $\langle 0,2,1\rangle$ are both formed from the \{2,1\} integer partition.  

\begin{table}[t]
\centering
\caption{Coalition value calculation allocations $CV_x^{s}$ for all agents $x$ where $1 \le x \le n, n=6$, resulting in $2^6 - 1 = 63$ unique coalitions. Periodic sequence designations using the rotated designation scheme (Section~\ref{sec:designation}) are shown in \textbf{bold}, resulting in a balanced allocation between 10 and 11 coalitions per agent (final column).  Note that the allocation of the grand coalition (not shown); i.e.\ $s=6, \underline{t}= \langle0,0,0,0,0,0\rangle$ is designated to agent 3 (i.e.\ $CV^{6}_{3}$) using this scheme, and as such this agent is allocated 11 coalitions.}
\label{tab:example1}

\footnotesize
\setlength{\tabcolsep}{0.7pt}
\renewcommand{\arraystretch}{1.3}

\begin{tabular}{@{}l c ccc cccc ccc c c@{}}
\toprule

{} & \multicolumn{1}{c}{s=1} & \multicolumn{3}{c}{s=2} & \multicolumn{4}{c}{s=3} & \multicolumn{3}{c}{s=4} & \multicolumn{1}{c}{s=5} & \multicolumn{1}{c}{Total}\\

\cmidrule(lr){2-2} \cmidrule(lr){3-5} \cmidrule(lr){6-9} \cmidrule(lr){10-12} \cmidrule(lr){13-13} \cmidrule(lr){14-14}

\multicolumn{1}{c}{$\underline{t}$} & $\langle5\rangle$ & $\langle0,4\rangle$ & $\langle1,3\rangle$ & $\langle2,2\rangle$ & $\langle0,0,3\rangle$ & $\langle0,1,2\rangle$ & $\langle0,2,1\rangle$ & $\langle1,1,1\rangle$ & $\langle0,0,0,2\rangle$ & $\langle0,0,1,1\rangle$ & $\langle0,1,0,1\rangle$ & $\langle0,0,0,0,1\rangle$ & $|CV_x|$\\

\cmidrule(r){1-13} \cmidrule(l){14-14}
\multicolumn{1}{c}{$\varpi(\underline{t})$} & 6 & 6 & 6 & 3 & 6 & 6 & 6 & 2 & 6 & 6 & 3 & 6 & {} \\
\cmidrule(r){1-13} \cmidrule(l){14-14}

$CV^{s}_{1}$ & 1 & 1,2 & 1,3 & \textbf{1,4} & 1,2,3 & 1,2,4 & 1,2,5 & {} & 1,2,3,4 & 1,2,3,5 & \textbf{1,2,4,5} & 1,2,3,4,5 & 11\\

$CV^{s}_{2}$ & 2 & 2,3 & 2,4 & \textbf{2,5} & 2,3,4 & 2,3,5 & 2,3,6 & {} & 2,3,4,5 & 2,3,4,6 & \textbf{2,3,5,6} & 2,3,4,5,6 & 11\\

$CV^{s}_{3}$ & 3 & 3,4 & 3,5 & \textbf{3,6} & 3,4,5 & 3,4,6 & 3,4,1 & {} & 3,4,5,6 & 3,4,5,1 & {} & 3,4,5,6,1 & 11\\

$CV^{s}_{4}$ & 4 & 4,5 & 4,6 & {} & 4,5,6 & 4,5,1 & 4,5,2 & \textbf{4,6,2} & 4,5,6,1 & 4,5,6,2 & {} & 4,5,6,1,2 & 10\\

$CV^{s}_{5}$ & 5 & 5,6 & 5,1 & {} & 5,6,1 & 5,6,2 & 5,6,3 & \textbf{5,1,3} & 5,6,1,2 & 5,6,1,3 & {} & 5,6,1,2,3 & 10 \\

$CV^{s}_{6}$ & 6 & 6,1 & 6,2 & {} & 6,1,2 & 6,1,3 & 6,1,4 & {} & 6,1,2,3 & 6,1,2,4 & \textbf{6,1,3,4} & 6,1,2,3,4 & 10 \\

\bottomrule
\end{tabular}
\end{table}

To illustrate this process, the coalition value calculation set (CV) for agent $x=5$ (assuming $n=6$ and $s=3$) can therefore be determined using the first three IAs:
{
  \medmuskip=1mu
  \thinmuskip=1mu
  \thickmuskip=1mu
  \begin{align*}
  C(5,\langle 0,0,3\rangle) &= \set{5, (5+0+1)\bmod(6), ((5+0+1)+0+1)\bmod(6)} \equiv \set{5,6,1} \\
  C(5,\langle 0,1,2\rangle) &= \set{5, (5+0+1)\bmod(6), ((5+0+1)+1+1)\bmod(6)} \equiv \set{5,6,2} \\
  C(5,\langle 0,2,1\rangle) &= \set{5, (5+0+1)\bmod(6), ((5+0+1)+2+1)\bmod(6)} \equiv \set{5,6,3}
  \end{align*}
}
The existence of repeating sub-sequences within an IA can lead to duplicate coalitions being formed, depending on the initial agent.
For example, in Table~\ref{tab:example1}, the column for $\langle 1,1,1\rangle$ only assigns coalitions to two agents, as the IA $\underline{t}$ contains a repeated sub-sequence $\langle 1 \rangle$ which appears $3$ times.
Therefore, the coalition generated for $C(1,\underline{t})$, $C(3,\underline{t})$, and $C(5,\underline{t})$ all contain the same agents (albeit in a different order):
{
  \medmuskip=1mu
  \thinmuskip=1mu
  \thickmuskip=1mu
  \begin{align*}
  C(1,\langle 1,1,1\rangle) &= \set{1, (1+1+1)\bmod(6), ((1+1+1)+1+1)\bmod(6)} \equiv \set{1,3,5} \\
  C(3,\langle 1,1,1\rangle) &= \set{3, (3+1+1)\bmod(6), ((3+1+1)+1+1)\bmod(6)} \equiv \set{3,5,1} \\
  C(5,\langle 1,1,1\rangle) &= \set{5, (5+1+1)\bmod(6), ((5+1+1)+1+1)\bmod(6)} \equiv \set{5,1,3}
  \end{align*}
}
In the following subsection, we explore why this is the case, and prove that the coalition value calculation sets generated by these four different IAs are equivalent to all of the possible $\binom{n}{s}$ sets of size $s=3$ from $n=6$ agents.

\subsection{Equivalence Classes}
\label{sec:equivalence}

It is possible to generate a number of IAs that are permutations of each other through circular shifts, and as such all belong to the same \emph{equivalence class} $\approx$. Therefore, only a single, canonical representative IA for each equivalence class ($[\underline{t}]_{\approx}$) is required when constructing a coalition value calculation.

\begin{definition}[IA equivalence]
\label{def:ia-equivalence}
Two IAs $\underline{t} = \langle t_0, t_1, \ldots, t_{s-1} \rangle$ and
$\underline{u} = \langle u_0, u_1, \ldots, u_{s-1} \rangle$ of the same
size $s$ are \emph{equivalent}, written $\underline{t} \approx
\underline{u}$, if $\underline{u}$ is a circular shift of
$\underline{t}$; that is, there exists $0 \le k \le s - 1$ such that
\[
    \langle u_0, u_1, \ldots, u_{s-1} \rangle
    \;=\;
    \langle t_k, t_{k+1}, \ldots, t_{s-1}, t_0, t_1, \ldots, t_{k-1}
    \rangle
\]
The relation $\approx$ is an equivalence relation.  The equivalence class
of $\underline{t}$ is denoted $[\underline{t}]_\approx$, and its
\emph{canonical representative} is the lexicographically smallest member
of $[\underline{t}]_\approx$.
\end{definition}

\begin{lemma}\label{lemma:same-sets}
If $\underline{t}\approx\underline{u}$ then
\[
\bigcup_{i=1}^{n}~\set{~C(i,\underline{t})~}~~=~~\bigcup_{i=1}^{n}~\set{~C(i,\underline{u})~}
\]
\end{lemma}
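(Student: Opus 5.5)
The plan is to show that a circular shift of an IA only changes which agent is used as the starting point, never the coalition itself. Concretely, for each starting agent under $\underline{u}$ I will exhibit a starting agent under $\underline{t}$ that generates the same coalition. That gives one inclusion, and symmetry gives the other.

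Fix $\underline{u}=\langle t_k,\ldots,t_{s-1},t_0,\ldots,t_{k-1}\rangle$ for some $0\le k\le s-1$. The case $k=0$ is trivial, so assume $k\ge 1$. By Equation~\eqref{eqn:valuex}, $C(x,\underline{t})$ is the set of residues $x+\varphi_1,\ldots,x+\varphi_s$ taken modulo~$n$. It is convenient to extend the cumulative increments periodically: set $\varphi_{i+s}=\varphi_i+n$. This is consistent because $\varphi_{s+1}=n$ by Lemma~\ref{lemma:offset-increment}. Then $\varphi_{i+1}-\varphi_i=t_{(i-1)\bmod s}+1$ for all $i\ge 1$.

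Write $\psi_i$ for the cumulative increments of $\underline{u}$. The key identity to verify is
\[
\psi_i=\varphi_{i+k}-\varphi_{k+1}\qquad (1\le i\le s).
\]
It follows by induction on $i$. For $i=1$ both sides are $0$. For the inductive step, the increment $u_{i-1}+1=t_{(i-1+k)\bmod s}+1$ equals $\varphi_{i+k+1}-\varphi_{i+k}$ under the periodic extension. With this identity, take $y=x+\varphi_{k+1}\pmod n$, the $(k+1)$-th member of $C(x,\underline{t})$. Then for each $i$,
\[
y+\psi_i\equiv x+\varphi_{i+k}\pmod n .
\]
As $i$ runs over $1,\ldots,s$, the index $i+k$ runs over $k+1,\ldots,k+s$. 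Since $\varphi_{j+s}\equiv\varphi_j$ modulo $n$, this is a complete residue system of the indices $1,\ldots,s$. Hence $C(y,\underline{u})=C(x,\underline{t})$.

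The map $x\mapsto x+\varphi_{k+1}\pmod n$ is a bijection of $\{1,\ldots,n\}$, being a translation. So every coalition in $\bigcup_i\{C(i,\underline{t})\}$ equals $C(y,\underline{u})$ for some $y$, and ranging over all $x$ covers all $y$. Applying the same argument with the roles of $\underline{t}$ and $\underline{u}$ exchanged (using shift $s-k$, since $\approx$ is symmetric) yields the reverse inclusion, or one can read off equality directly from the bijection. I also need to note that the convention identifying $0$ with $n$ does not matter, since everything is computed modulo $n$ and the translation is a bijection of residues.

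I expect the main obstacle to be the index bookkeeping in the identity for $\psi_i$, specifically the wraparound term where the shifted array reaches $t_0$ after $t_{s-1}$. The periodic extension $\varphi_{i+s}=\varphi_i+n$ is the device I would use to handle it cleanly. It relies exactly on the total-sum fact $\sum t_i=n-s$ from Lemma~\ref{lemma:offset-increment}.
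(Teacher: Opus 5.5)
Your proof is correct and follows essentially the paper's route: both show that $C(i,\underline{t})$ equals the coalition generated by $\underline{u}$ from a translated starting agent, then conclude from the translation being a bijection on $\{1,\ldots,n\}$. The paper treats only the single shift $\underline{u}=\langle t_{s-1},t_0,\ldots,t_{s-2}\rangle$, i.e.\ your case $k=s-1$, where your translation by $\varphi_{s}$ is its map $i\mapsto i-t_{s-1}-1$ modulo $n$; it handles the wrap-around with an explicit two-case split and composes shifts, whereas your periodic extension $\varphi_{i+s}=\varphi_i+n$ handles every shift uniformly in one step.
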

\begin{proof}[Proof sketch (full proof in Appendix~\ref{sec:appendix})]
It suffices to show the result for a single circular shift; the
general case follows by composition.  Without loss of generality,
let $\underline{u}$ be obtained from $\underline{t}$ by moving the
last element to the front:
$\underline{u} = \langle t_{s-1},\, t_0,\, t_1,\, \ldots,\, t_{s-2} \rangle$.
Write $\psi_r$ for the cumulative increments under~$\underline{u}$,
analogous to~$\varphi_r$ under~$\underline{t}$.
One can verify that $\psi_k = \varphi_k + (t_{s-1} - t_{k-2})$
for $2 \le k \le s$, from which it follows that each coalition
$C(i, \underline{t})$ equals $C(j, \underline{u})$ for a specific
starting agent~$j$ determined by~$i$ and~$t_{s-1}$.
Crucially, the mapping $i \mapsto j$ is a bijection on
$\{1,\ldots,n\}$, so the union over all starting agents is identical
for $\underline{t}$ and $\underline{u}$.
\end{proof}

\noindent
Lemma~\ref{lemma:same-sets} shows that increment arrays (IAs) belonging to the same equivalence class ($[\underline{t}]_{\approx}$)
generate exactly the same set of coalitions of size $s$ from $\set{1,\ldots,n}$. 
For example, given 
$\underline{u}' = \langle 1,0,2 \rangle$ and
$\underline{t}' = \langle 0,2,1\rangle$, we 
can say that they both belong to the same
equivalence class, i.e.\ $\underline{t}'\approx\underline{u}'$
as we can generate the coalition
with the values $\{1,4,6\}$ by simply selecting different agents: 
{
  \medmuskip=1mu
  \thinmuskip=1mu
  \thickmuskip=1mu
  \begin{align*}
  C(4,\langle 1,0,2\rangle) &= \set{4, (4+1+1)\bmod(6), ((4+1+1)+0+1)\bmod(6)} \equiv \set{4,6,1} \\
  C(6,\langle 0,2,1\rangle) &= \set{6, (6+0+1)\bmod(6), ((6+0+1)+2+1)\bmod(6)} \equiv \set{6,1,4}
  \end{align*}
}
\noindent
Note that $\underline{u}'$ results from a single circular shift from $\underline{t}'$.

\begin{lemma}\label{lemma:ordering}
Let $C=C(x_i,\underline{t})$ and
$C=\set{x_1,~x_2,\ldots,x_i,\ldots,x_s}$
with $x_i<x_{i+1}$ for all $1\leq i<s$. There is an increment array (IA), $\underline{u}$, for which $\underline{t}\approx\underline{u}$
and $C(x_1,\underline{u})$ generates $C$ in strictly increasing ordering of $x_i$, i.e.\
\[
x_i~~\in~~\left\{{x_1~+~\sum_{k=0}^{i-2}~u_k~+~i-1,~x_1~+~\sum_{k=0}^{i-2}~u_k~+~i-1-n}\right\}~~~\forall~2\leq i\leq s
\]
\end{lemma}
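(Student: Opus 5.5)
The plan is to read the ordered set $C=\{x_1<x_2<\cdots<x_s\}$ (by construction $C=C(x_i,\underline{t})$ for some starting member $x_i$) and to produce the required $\underline{u}$ as the circular shift of $\underline{t}$ that starts the generation at $x_1$. The key observation is that the starting agent $x_i$ has some index $r$ in the sorted order. Walking the cumulative increments $\varphi_2,\ldots,\varphi_s$ of $\underline{t}$ from $x_i$, one visits the members of $C$ cyclically: $x_r, x_{r+1},\ldots,x_s, x_1,\ldots,x_{r-1}$. This holds because each step $t_k+1$ is positive, and by Lemma~\ref{lemma:offset-increment} the total of all $s$ steps is exactly $n$. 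The walk therefore winds around the circle exactly once and cannot skip past or revisit a member.

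Consequently, the increment labelled $t_k$ is exactly the gap (minus one) between cyclically consecutive members $x_{r+k}$ and $x_{r+k+1}$, with indices of $x$ taken modulo $s$. I would then define $\underline{u}$ as the circular shift of $\underline{t}$ beginning at the position whose increment leaves $x_1$:
\[
\underline{u}=\langle t_{m},t_{m+1},\ldots,t_{s-1},t_0,\ldots,t_{m-1}\rangle, \qquad m=(s-r+1)\bmod s .
\]
By Definition~\ref{def:ia-equivalence}, $\underline{t}\approx\underline{u}$. This matches the mechanism in Lemma~\ref{lemma:same-sets}, where shifting the IA corresponds to changing the starting agent.

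The computation then runs as follows. With $\underline{u}$ starting from $x_1$, the $j$th step of the walk is $u_{j}+1=x_{j+2}-x_{j+1}$ for $0\le j\le s-2$. Here no wrap-around subtraction is needed, because the $x$'s are strictly increasing in $\{1,\ldots,n\}$. Telescoping, exactly as in part (iii) of the proof of Lemma~\ref{lemma:ns-sequence}, gives
\[
x_1+\sum_{k=0}^{i-2}u_k+i-1 \;=\; x_1+\sum_{k=0}^{i-2}(u_k+1)\;=\;x_i .
\]
This is the first alternative of the stated membership; the alternative with $-n$ only covers the representation of identifiers modulo $n$, where a value of $0$ is replaced by $n$, or a reading in which the sum has been reduced differently. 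I would note explicitly that $\sum_{k=0}^{i-2}(u_k+1)\le n-1$ for $i\le s$, so the raw sum and the residue modulo $n$ agree up to the subtraction of $n$. It follows that $C(x_1,\underline{u})$ lists $C$ in increasing order.

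The main obstacle is the first step: rigorously justifying that the walk under $\underline{t}$ visits the members in cyclic sorted order, so that each $t_k$ identifies a specific consecutive gap. I would argue this from the fact that the partial sums $\varphi_2<\varphi_3<\cdots<\varphi_s<n$ are strictly increasing and bounded by $n$, so the points $x_i+\varphi_j \bmod n$ appear in cyclic order around the circle starting from $x_i$. The only care needed is the bookkeeping of the index $m$ and of the modulo convention.
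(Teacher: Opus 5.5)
Your proposal is correct and follows essentially the same route as the paper. The paper also uses the strict monotonicity of $x_i+\varphi_r$ to locate the unique wrap index $p$ (so that $x_1=x_i+\varphi_p-n$) and takes $\underline{u}=\langle t_{p-1},\ldots,t_{s-1},t_0,\ldots,t_{p-2}\rangle$, which is exactly your shift with $m=s-r+1=p-1$; your explicit telescoping check that the first alternative always holds fills in a step the paper leaves implicit.
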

\begin{proof}
Given $\underline{t}$, suppose
$C(x_i,\underline{t})=\set{x_1,x_2,\ldots,x_i,\ldots,x_s}$.
The first observation is that the following terms are strictly increasing:
$x_i~+~\sum_{k=0}^{r-2}~t_k~+~r-1~~=~~x_i~+~\varphi_r$.
It follows that if $x_i\not=x_1$ there must be a \emph{unique} 
index, $p$, for which:
\[
x_i~+~\varphi_r~~\mbox{ is }~~\left\{
{
\begin{array}{lr}
\leq~n&\mbox{ if }r<p\\
>n&\mbox{ if }r\geq p
\end{array}
}
\right.
\]
In consequence, $x_1~=~x_i+\varphi_p-n$, otherwise we cannot have $x_1\in C(x_i,\underline{t})$. More generally, however,
it must hold that:
\[
\begin{array}{ll}
x_{k}=x_i+\varphi_{p+k-1}-n&\mbox{ $\forall$ }1~\leq~k~\leq~s-p+1\\
x_{k}=x_i+\varphi_{p-(s-k)-1}&\mbox{ $\forall$ }s-p+2~\leq k~\leq s
\end{array}
\]
This, however, corresponds to the behaviour of the IA $\underline{u}$, whose definition is:
\[
\underline{u}~~=~~\tuple{t_{p-1},~t_{p},~\ldots,~t_{p+k},~\ldots,~t_{s-1},~t_0,~\ldots,~t_{p-2}}
\]
Clearly $\underline{u}\approx\underline{t}$ and 
$C(x_1,\underline{u})=C(x_i,\underline{t})$ as claimed.
\end{proof}
\noindent
As an easy consequence of Lemma~\ref{lemma:ordering} we obtain:
\begin{lemma}\label{lemma:no-overlap}
Let $\underline{t}$ and $\underline{u}$ be IAs for which $\underline{t}\not\approx\underline{u}$.
In such cases,
\[
\bigcup_{i=1}^{n}~\left\{{C(i,\underline{t})}\right\}~\bigcap~\bigcup_{i=1}^{n}~\left\{{C(i,\underline{u})}\right\}~~=~~\varnothing
\]
\end{lemma}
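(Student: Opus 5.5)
I will argue by contraposition: assume some coalition $C$ lies in both unions, so $C = C(i,\underline{t}) = C(j,\underline{u})$ for some starting agents $i,j$, and deduce $\underline{t}\approx\underline{u}$. Since $C(x,\underline{v})$ has exactly $s$ distinct members (the cumulative increments $\varphi_2,\ldots,\varphi_s$ are distinct and lie strictly between $0$ and $n$), $\underline{t}$ and $\underline{u}$ must have the same size $s = |C|$.

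First I write $C=\{x_1,\ldots,x_s\}$ with $x_1<\cdots<x_s$. Applying Lemma~\ref{lemma:ordering} to $C=C(i,\underline{t})$ gives an IA $\underline{t}^{*}\approx\underline{t}$ with $C(x_1,\underline{t}^{*})=C$, generated in increasing order. Applying the same lemma to $C=C(j,\underline{u})$ gives an IA $\underline{u}^{*}\approx\underline{u}$ with $C(x_1,\underline{u}^{*})=C$, also in increasing order.

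The key step is to show that an IA producing a fixed coalition from $x_1$ in increasing order is uniquely determined by $C$. Here I use the construction in Lemma~\ref{lemma:ns-sequence}. Because the members are generated in increasing order with no wrap-around, the cumulative increments must satisfy $\varphi_r = x_r - x_1$ for $2\le r\le s$. Differencing consecutive terms gives $t^{*}_{k}+1 = x_{k+2}-x_{k+1}$ for $0\le k\le s-2$. The last entry $t^{*}_{s-1}$ is then forced by the sum constraint $\sum_k t^{*}_k = n-s$ of Definition~\ref{def:increment-array}. The same equations hold for $\underline{u}^{*}$, so $\underline{t}^{*}=\underline{u}^{*}$.

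Transitivity of $\approx$ (Definition~\ref{def:ia-equivalence}) then gives $\underline{t}\approx\underline{t}^{*}=\underline{u}^{*}\approx\underline{u}$, which contradicts the hypothesis $\underline{t}\not\approx\underline{u}$. Hence the two unions are disjoint.

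The main obstacle is justifying $\varphi_r = x_r - x_1$ exactly, with no reduction modulo $n$. Lemma~\ref{lemma:ordering} only places each $x_r$ in a two-element set $\{x_1+\varphi_r,\ x_1+\varphi_r-n\}$. To settle this, I will use two facts: $0<\varphi_r<n$, and $x_r>x_1$ with both in $\{1,\ldots,n\}$. Together these exclude the second option, since $x_1+\varphi_r-n \le x_1 + (n-1) - n < x_1$. I would spell this step out explicitly, because the rest is bookkeeping.
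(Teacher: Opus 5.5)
Your proposal is correct and follows the paper's proof essentially step for step: both use Lemma~\ref{lemma:ordering} to pass to rotations $\underline{t}^{*}\approx\underline{t}$ and $\underline{u}^{*}\approx\underline{u}$ that generate $C$ from $x_1$ in increasing order, conclude that these two IAs coincide, and then obtain $\underline{t}\approx\underline{u}$ by transitivity. Your version is slightly more careful than the paper's in three places: you explicitly exclude the wrap-around option $x_1+\varphi_r-n$, which the paper drops without comment; you recover the final entry $t^{*}_{s-1}$ from the sum constraint, since the partial sums only fix $t^{*}_0,\ldots,t^{*}_{s-2}$; and you check that the two IAs must have the same size.
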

\begin{proof}
Suppose the contrary and that $C=\set{x_1,\ldots,x_s}$ can be generated by $C(x_i,\underline{t})$ and $C(x_j,\underline{u})$
for choices of $\underline{t}$ and $\underline{u}$ belonging to different equivalence classes of $\approx$.
As a consequence of Lemma~\ref{lemma:ordering} we know that there are IAs, $\underline{t}'$ and
$\underline{u}'$ for which,
$\underline{t}\approx\underline{t}'$, $\underline{u}\approx\underline{u}'$
and $C(x_1,\underline{t}')=C(x_i,\underline{t})=C(x_j,\underline{u})=C(x_1,\underline{u}')$.
Furthermore, $C(x_1,\underline{t}')$ and $C(x_1,\underline{u}')$ produce the elements of $C$
in increasing ordering of $x_i\in C$. This, however, is only possible if:
\[
x_i~~=~~x_1~+~\sum_{k=0}^{i-2}~t'_k~+~i-1~~=~~x_1~+~\sum_{k=0}^{i-2}~u'_k~+~i-1
\]
that is, $t'_i=u'_i$ for each $0\leq i\leq s-1$. This, however, implies that $\underline{t}\approx\underline{u}$,
in contradiction to our starting premise.
\end{proof}

\subsection{Sub-sequences and Periodic IAs}
\label{sec:periodic}

For any given $s$, the canonical representative IAs can be
constructed from integer partitions of $n - s$ (Definition~\ref{def:increment-array}).
When an IA is periodic, i.e.\ it consists of a shorter sub-sequence repeated multiple times, different starting agents can produce the same coalition.
For example, in Table~\ref{tab:example1}, the periodic IAs $\langle 2,2 \rangle$, $\langle 1,1,1 \rangle$, and $\langle 0,1,0,1 \rangle$ each assign identical coalitions to multiple agents.
This subsection characterises exactly when and why such duplication occurs.
 
\begin{definition}[Period, repetition count, and distinct coalitions]
\label{def:period}
Let $\underline{t} = \langle t_0, t_1, \ldots, t_{s-1} \rangle$ be an
IA of size~$s$.
 
\begin{enumerate}
\item The \emph{period} of $\underline{t}$, denoted
    $\pi(\underline{t})$, is the length of the shortest prefix
    $\langle t_0, \ldots, t_{p-1} \rangle$ such that $\underline{t}$
    consists of $s/p$ identical copies of that prefix:
    \[
        \pi(\underline{t}) =
        \min\bigl\{p \in \{1, \ldots, s\} :
        \underline{t} = \underbrace{
            \langle t_0, \ldots, t_{p-1},\;
            t_0, \ldots, t_{p-1},\;
            \ldots \rangle
        }_{s/p \text{ copies}} \,\bigr\}
    \]

\item The \emph{repetition count} is $\mu = s /\pi(\underline{t})$,
    which corresponds to the number of times the minimal sub-sequence is repeated within
    $\underline{t}$. This is because $\underline{t}$ is composed of identical copies of its minimal prefix, and $\mu \in \mathbb{N}^{+}$.

\item The \emph{stride} of $\underline{t}$ is $\varpi(\underline{t}) = n/\mu = n\,\pi(\underline{t})/s$.

\item An increment array (IA) $\underline{t}$ is \emph{aperiodic} if $\pi(\underline{t}) = s$
    (equivalently, $\mu = 1$), and \emph{periodic} if
    $\pi(\underline{t}) < s$ (equivalently, $\mu > 1$).
\end{enumerate}
\end{definition}

The terminology reflects the fact that, as established in Theorem~\ref{thm:main}, $\varpi(\underline{t})$ equals the number of distinct coalitions generated by $\underline{t}$; exactly $\varpi(\underline{t})$ consecutive agents produce distinct coalitions before the pattern repeats.

The existence of repeating sub-sequences within a periodic IA will result in the assignment of duplicate coalitions to different agents.
For example, $\underline{t} = \langle 1,1,1\rangle$ contains a repeated sub-sequence $\langle 1 \rangle$ which appears $\mu=3$ times, where $\pi(\underline{t}) = 1$, and a stride $\varpi=2$. Therefore, it will result in the assignment of two different coalitions, each being assigned to three agents.\footnote{Table~\ref{tab:example1} lists only the unique coalitions rather than illustrating all duplicated coalitions; this reflects the designation strategy discussed in Section~\ref{sec:designation}.} For example, the coalitions generated for $C(1,\underline{t})$, $C(3,\underline{t})$, and $C(5,\underline{t})$ all contain the same agents, albeit in a different order.

Theorem~\ref{thm:main} shows that exactly $\varpi(\underline{t})$ distinct coalitions are generated by $\underline{t}$, and that $C(i,\underline{t})=C(j,\underline{t})$ if and only if $i \equiv j \pmod{\varpi(\underline{t})}$. Thus, each IA can be used $r$ times to generate the same coalition; i.e.\ the coalitions formed for agents $i$ and $j$ are equivalent if and only if there exists some $r \in \mathbb{N}_0$ within the stated range.
\begin{theorem}\label{thm:main}
For any increment array (IA) $\underline{t}$, and for all $1\leq i\leq j\leq n$,
\begin{equation}\label{eqn:thm1}
C(i,\underline{t})=C(j,\underline{t})~~~\Leftrightarrow~~~
\exists~0\leq r\leq \frac{(n-i)s}{n\pi(\underline{t})}~:~
j~=~i~+~r\times\left({\frac{n\pi(\underline{t})}{s}}\right)
\end{equation}
\end{theorem}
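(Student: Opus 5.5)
The plan is to reduce the statement to a claim about the cyclic \emph{position set} of the coalition, independent of the starting agent. Write $D(\underline{t}) = \{\varphi_1,\varphi_2,\ldots,\varphi_s\} \subseteq \mathbb{Z}_n$, using the cumulative increments of Equation~(2). Then $C(x,\underline{t}) = x + D(\underline{t}) \pmod n$. Hence $C(i,\underline{t}) = C(j,\underline{t})$ if and only if $D + d = D$ in $\mathbb{Z}_n$, where $d = j-i$.

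So the theorem amounts to showing that the stabiliser $\{d \in \mathbb{Z}_n : D + d = D\}$ is exactly the subgroup generated by the stride $\varpi = n\pi(\underline{t})/s$. The range restriction $0 \le r \le (n-i)s/(n\pi)$ is then just the bookkeeping that $j = i + r\varpi$ stays in $\{i,\ldots,n\}$, given $i \le j \le n$.

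\emph{($\Leftarrow$).} First check that $\varpi$ is an integer and that it lies in the stabiliser. Since $\underline{t}$ consists of $\mu = s/\pi$ copies of its prefix, the sum of one period is $(n-s)/\mu$. Adding the $\pi$ baseline increments gives $\varphi_{\pi+1} = (n-s)/\mu + \pi = n/\mu = \varpi$, so $\varpi$ is an integer. Periodicity also gives $\varphi_{k+\pi} = \varphi_k + \varpi$, with indices taken cyclically and $\varphi_{s+1} = n \equiv 0$. Therefore $D + \varpi = D$, and by induction $D + r\varpi = D$ for every $r$. This proves the reverse direction.

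\emph{($\Rightarrow$).} Suppose $D + d = D$ with $0 < d < n$. Then the map $y \mapsto y+d$ permutes $C = C(i,\underline{t})$ and preserves the cyclic order on $\mathbb{Z}_n$. It therefore sends $x_k = i + \varphi_k$ to $x_{k+m}$ for some fixed $m$, indices mod $s$; this $m$ is the index with $\varphi_{1+m} = d$. Because cyclic gaps are preserved, the offset increments satisfy $t_{k+m} = t_k$ for all $k$, indices mod $s$. So $\underline{t}$ is invariant under the circular shift by $m$, and hence also under the shift by $\gcd(m,s)$. By minimality of the period (the standard fact that the set of shift-invariances of a word is the subgroup generated by its primitive period), $\pi \mid m$. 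Writing $m = q\pi$ and using $\varphi_{k+\pi} = \varphi_k + \varpi$ repeatedly gives $d = \varphi_{1+m} = q\varpi$. Restricting to $i \le j \le n$ yields exactly the stated range of $r$.

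The step I expect to be the main obstacle is the ($\Rightarrow$) direction: making rigorous that a translation fixing $C$ acts as a cyclic index shift, so that gaps are preserved. I would argue this via the unique sorted cyclic ordering of $C$ starting at $x_1$, as in Lemma~\ref{lemma:ordering}. The other delicate point is the claim that a shift-invariance $m$ forces $\pi \mid m$. I would prove this by noting that the set of invariant shifts forms a subgroup of $\mathbb{Z}_s$, whose least positive element divides $s$ and equals $\pi$ by the definition of period.
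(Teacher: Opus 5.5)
Your proposal is correct and follows the same skeleton as the paper's proof. For ($\Leftarrow$) both use $\varphi_{\pi(\underline{t})+1}=n\pi(\underline{t})/s$ and $\varphi_{\pi(\underline{t})+k}=\varpi(\underline{t})+\varphi_k$. For ($\Rightarrow$) both show that the translation $d=\varphi_{1+m}$ acts as a cyclic index shift by $m$ (the paper's $\rho-1$), so $\underline{t}$ is invariant under that shift, $\pi(\underline{t})\mid m$, and $d=mn/s$ is a multiple of the stride. The real difference is that your stabiliser formulation in $\mathbb{Z}_n$ and gap-preservation argument give $t_{k+m}=t_k$ in one line, whereas the paper extracts it from the lengthy system of cumulative-sum identities \eqref{eqn:appx3}--\eqref{eqn:appx11}; yours is a cleaner and more rigorous rendering of the same argument.
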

\begin{proof}[Proof sketch (full proof in Appendix~\ref{sec:appendix})]

$(\Leftarrow)$\enspace
We show that $j = i + r \cdot (n\pi(\underline{t})/s)$ implies
$C(i,\underline{t}) = C(j,\underline{t})$.  The key observation is that
periodicity of $\underline{t}$ yields a closed-form for the cumulative
integer increments beyond one period:
\[
  \varphi_{\pi(\underline{t})+k}
    \;=\; \frac{n\pi(\underline{t})}{s} \;+\; \varphi_k
    \qquad (1 \le k \le s - \pi(\underline{t}))
\]
Thus, advancing the starting agent by $n\pi(\underline{t})/s$ shifts
every cumulative increment by the same additive constant.
Consequently the (uncorrected) sequence of coalition members generated
from~$j$ is a cyclic rearrangement of that generated from~$i$, so
after reduction modulo~$n$ the two coalitions coincide.  The general
case follows by applying this shift $r$~times.

\medskip\noindent
$(\Rightarrow)$\enspace
Suppose $C(i,\underline{t}) = C(i + \varphi_\rho, \underline{t})$ for
some $2 \le \rho \le s$.  Matching the two sorted coalition
representations element-by-element produces a system of identities on
the entries of~$\underline{t}$ (Equations \eqref{eqn:appx3} and \eqref{eqn:appx4} in the Appendix).
Analysing these identities shows that $\underline{t}$ must repeat under
a cyclic shift of $(\rho-1)$ positions, i.e.\
$t_j = t_{j+(\rho-1) \bmod s}$ for all~$j$.  It follows that
$\pi(\underline{t})$ divides $\gcd(s, \rho-1)$, and hence that
$\underline{t}$ repeats in blocks of size $g = \gcd(s,\rho-1)$.
Substituting back, the starting-agent offset $\varphi_\rho$ evaluates
to $(\rho-1)\,n/s$, which is an integer multiple of
$n\pi(\underline{t})/s$.  This establishes that $j$ has the required
form $i + r \cdot (n\pi(\underline{t})/s)$.  The upper bound on~$r$
follows from $j \le n$.
\end{proof}
\noindent
The coalitions formed for agents $i$ and $j$ are equal if and only if there exists some 
$r$ in the range stated in Equation \eqref{eqn:thm1}.  
This $r$ represents \emph{the number of distinct agents} that produce duplicate coalitions when using the same $\underline{t}$.
More precisely, $j~=~i~+~r\times(n\pi(\underline{t})/s)$ indicates exactly which other starting agents will generate the same coalition as agent $i$.
Thus, in the trivial case, if $r=0$, then $i \equiv j$.
The key quantity driving this is $\mu = s/\pi(\underline{t})$; the number of times the repeating block occurs within $\underline{t}$, such that:
\begin{itemize}
    \item If $\underline{t}$ is \emph{aperiodic} (i.e.\ $\pi(\underline{t})=s$, so that $\mu = 1$), then $n\pi(\underline{t})/s = n$, and the only solution is $r=0$, i.e.\ $j=i$.
    Every agent will produce a unique coalition, and no duplicates occur.
    \item If $\underline{t}$ is \emph{periodic} with $\mu > 1$ repeating subsequences, then $n\pi(\underline{t})/s = n/\mu < n$, and exactly $\mu$  agents will be spaced evenly around the cycle of $n$ agents that all generate the same coalition.  In this case, $r$ can take the values $0,1,\ldots,\mu -1$, giving exactly $\mu$ duplicate-generating agents.
\end{itemize}

Since $\mu$ divides both $s$ (by definition) and $n$ (because the offset increments within each repeated block must sum to a whole number), $\mu$ can be at most $\gcd(n, s)$. Therefore, when $n$ and $s$ share no common factor (i.e.\ $\gcd(n, s) = 1$) then $\mu = 1$, and every IA of that size $s$ is aperiodic.

Continuing from our previous example where the periodic IA $\underline{t} = \langle 1,1,1 \rangle$ and $s=3$ (Table~\ref{tab:example1}),
we have $\mu=3$ repeated sequences with a period $\pi(\underline{t}) = 1$.  If we consider
the case where $i=1$, then the upper bound for $r$ is:
\[
\frac{(n-i)s}{n\pi(\underline{t})} = \frac{(6-1) \times 3}{6\times 1} =  \frac{15}{6}
\]
and therefore $\exists~0\leq r\leq 15/6$; i.e.\ $r\in \set{0,1,2}$. Furthermore, as the stride is $\varpi{\underline{t}}=n\pi(\underline{t})/s=(6 \times 1)/3=2$, then this periodic IA will generate only 2 unique coalitions. 
From this, given that $C(1,\underline{t}) = \set{1,3,5}$, we have:
\begin{align*}
r=0, &~~~j=i+0=1 & C(j,\underline{t}) = C(1,\tuple{1,1,1}) = \set{1,3,5} \\
r=1, &~~~j=i+1\times 2=3 & C(j,\underline{t}) = C(3,\tuple{1,1,1}) = \set{3,5,1} \\
r=2, &~~~j=i+2\times 2=5 & C(j,\underline{t}) = C(5,\tuple{1,1,1}) = \set{5,1,3} 
\end{align*}

Theorem~\ref{thm:main2} shows that IAs of the same equivalence class generate identical sets of coalitions, and that this is the only way that identical coalitions can be formed.

\begin{theorem}\label{thm:main2}
For any increment array (IA) $\underline{t}$ and $\underline{u}$:
\begin{eqnarray}
\bigcup_{i=1}^{n}~\left\{{C(i,\underline{t})}\right\}~\bigcap~\bigcup_{i=1}^{n}~\left\{{C(i,\underline{u})}\right\}~~\not=~~\varnothing
&\mbox{ $\Leftrightarrow$ }&\underline{t}\approx\underline{u} \label{eqn:main2a}\\
\bigcup_{i=1}^{n}~\left\{{C(i,\underline{t})}\right\}~~=~~\bigcup_{i=1}^{n}~\left\{{C(i,\underline{u})}\right\}&
\mbox{ $\Leftrightarrow$ }& \underline{t}\approx\underline{u} \label{eqn:main2b}
\end{eqnarray}
\end{theorem}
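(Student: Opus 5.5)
Write $S(\underline{t}) = \bigcup_{i=1}^{n}\{C(i,\underline{t})\}$. The plan is to assemble Theorem~\ref{thm:main2} from Lemma~\ref{lemma:same-sets} and Lemma~\ref{lemma:no-overlap}, with one easy observation linking the two equivalences. That observation is that $S(\underline{t})$ is never empty: $C(1,\underline{t})$ is always defined and belongs to it. This also needs $\underline{t}$ and $\underline{u}$ to be IAs of the same size $s$. If the sizes differ, then $\underline{t}\not\approx\underline{u}$ by Definition~\ref{def:ia-equivalence}, and every coalition in $S(\underline{t})$ has size $s$ while every coalition in $S(\underline{u})$ has the other size. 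The two sets are then trivially disjoint and unequal, so both biconditionals hold for this case. From here on we assume equal sizes.

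For \eqref{eqn:main2a}, the direction $(\Leftarrow)$ goes as follows. If $\underline{t}\approx\underline{u}$, Lemma~\ref{lemma:same-sets} gives $S(\underline{t})=S(\underline{u})$. Hence $S(\underline{t})\cap S(\underline{u}) = S(\underline{t})$, which contains $C(1,\underline{t})$ and is therefore nonempty. The direction $(\Rightarrow)$ is exactly the contrapositive of Lemma~\ref{lemma:no-overlap}: if $\underline{t}\not\approx\underline{u}$, the intersection is empty. So a nonempty intersection forces $\underline{t}\approx\underline{u}$.

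For \eqref{eqn:main2b}, the direction $(\Leftarrow)$ is Lemma~\ref{lemma:same-sets} verbatim. For $(\Rightarrow)$, suppose $S(\underline{t})=S(\underline{u})$. Then $S(\underline{t})\cap S(\underline{u})=S(\underline{t})\neq\varnothing$. Applying the forward direction of \eqref{eqn:main2a} just proved yields $\underline{t}\approx\underline{u}$.

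The only real content lies in the earlier lemmas, so the main point to check is that Lemma~\ref{lemma:no-overlap} (through Lemma~\ref{lemma:ordering}) genuinely covers every pair of non-equivalent same-size IAs. In particular, Lemma~\ref{lemma:ordering} must supply a rotation starting at the minimal element $x_1$ in all cases, including when the generating agent is already $x_1$ and when the IA is periodic. I would recheck that the case $x_i=x_1$ (taking $p=1$, the identity rotation) is handled there. Beyond that, the argument is purely set-theoretic.
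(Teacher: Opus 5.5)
Your proof is correct and follows the same route as the paper: Lemma~\ref{lemma:same-sets} gives both $(\Leftarrow)$ directions, and Lemma~\ref{lemma:no-overlap} (through Lemma~\ref{lemma:ordering}) gives $(\Rightarrow)$. The additions you make (non-emptiness of $S(\underline{t})$, the unequal-size case, and deriving the forward direction of \eqref{eqn:main2b} from that of \eqref{eqn:main2a}) simply spell out steps that the paper's terse proof leaves implicit.
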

\begin{proof}
The fact that IAs belonging to the same 
equivalence class of $\approx$ generate identical sets of coalitions
of size $s$ over all agents (and hence have a non-empty intersection)
follows directly from Lemma~\ref{lemma:same-sets}.
This establishes the $\Leftarrow$ implication for Equation \eqref{eqn:main2b}
and hence \eqref{eqn:main2a}. This is the only
way that two increment arrays can generate a coalition
in common, i.e.\ that should $\underline{t}$ and $\underline{u}$
produce the same subset of $\set{1,2,\ldots,n}$ then
$\underline{t}$ and $\underline{u}$ belong to the same
equivalence class of $\approx$ is immediate from 
Lemma~\ref{lemma:ordering} and Lemma~\ref{lemma:no-overlap}.
We thus have the $\Rightarrow$ implications of Equations \eqref{eqn:main2a}
and \eqref{eqn:main2b} thereby completing the theorem proof.
\end{proof}
\noindent
When a periodic IA $\underline{t}$ is encountered, only one of the $\mu$ 
duplicate-generating agents should actually be assigned that coalition. This is discussed in the next section.

\subsection{The Designation Problem} \label{sec:designation}

Theorem~6 establishes that when a canonical IA $\underline{t}$ is periodic (i.e.\ $\pi(\underline{t}) < s$), exactly $\mu = s / \pi(\underline{t})$ agents generate the same coalition as each other, such that the agents that produce duplicate coalitions are spaced exactly $\varpi(\underline{t}) = n / \mu$ apart; i.e.\ agents $x$ and $y$ generate the same coalition, if and only if $x \equiv y \pmod{\varpi(\underline{t})}$.
This partitions the $n$ agents into $\varpi(\underline{t})$ groups, each of
size $\mu$, called the \emph{residue classes} modulo
$\varpi(\underline{t})$.  

Agents within the same residue class are
redundant (i.e.\ they would all compute the same coalition value, resulting in duplicate coalitions) while agents in
different residue classes produce distinct coalitions.
For example, with $n = 6$ and the periodic IA
$\underline{t} = \langle 1, 1, 1 \rangle$ ($\pi = 1$, $\mu = 3$,
$\varpi = 2$), the two residue classes modulo 2 are
$\{1, 3, 5\}$ and $\{2, 4, 6\}$.  Within each class, all three agents
generate the same coalition (e.g.\ agents 1, 3, and 5 all generate
$\{1, 3, 5\}$), so only one agent per class is needed.

To avoid redundant computation, exactly one agent from each residue class
should be assigned the corresponding coalition.  A set that contains
exactly one representative from every residue class is called a
\emph{transversal}.  In the example above, any pair such as $\{1, 2\}$,
$\{3, 4\}$, or $\{5, 6\}$ would be a valid transversal; i.e.\ each pair picks one agent from each of the two residue classes.  

The \emph{designation problem} is therefore to choose, for each periodic IA, a transversal of $\varpi(\underline{t})$ agents to be responsible for computing the
corresponding coalition values.\footnote{The periodic IA $\underline{t} = \langle 1, 1, 1 \rangle$ in Table~\ref{tab:example1} is designated to the agents in the transversal $\{4,5\}$.}
For aperiodic IAs ($\pi(\underline{t}) = s$, $\mu = 1$), no designation
is needed: every agent produces a unique coalition, and all agents are
implicitly designated. 
We refer to the agents selected for a periodic IA as the \emph{designated}
agents, whereas the resulting assignment of coalitions to agents across
all IAs and all sizes is the \emph{allocation}.
The challenge therefore lies entirely with the periodic IAs.

Let $\mathcal{E}(n,s)$ denote the set of canonical representative IAs
for coalition size $s$ (one per equivalence class under $\approx$).  We
partition $\mathcal{E}(n,s)$ into two disjoint subsets:
\begin{align}
    \mathcal{A}(n,s) &= \{\, \underline{t} \in \mathcal{E}(n,s)
        : \pi(\underline{t}) = s \,\} \label{eq:aperiodic-set} \\
    \mathcal{P}(n,s) &= \{\, \underline{t} \in \mathcal{E}(n,s)
        : \pi(\underline{t}) < s \,\} \label{eq:periodic-set}
\end{align}
so that $\mathcal{E}(n,s) = \mathcal{A}(n,s) \;\dot\cup\;
\mathcal{P}(n,s)$.  The aperiodic IAs $\mathcal{A}(n,s)$ generate $n$
unique coalitions each, whereas each periodic IA $\underline{t} \in
\mathcal{P}(n,s)$ generates only $\varpi(\underline{t}) = n/\mu$
unique coalitions.
The total number of coalitions of size $s$ given $n$ agents can therefore be expressed as:
\begin{equation}
\label{eq:coalition-decomposition}
    \binom{n}{s} \;=\; n \cdot |\mathcal{A}(n,s)|
    \;+\; \sum_{\underline{t}\, \in\, \mathcal{P}(n,s)}
        \varpi(\underline{t})
\end{equation}

\subsubsection{The Lowest-ID Designation Scheme}

The simplest approach is to designate the agents with the lowest
identifiers.  For a periodic IA $\underline{t}$ with
$d = \varpi(\underline{t}) = n / \mu$, this scheme
designates agents $\{1, 2, \ldots, d\}$; i.e.\ a transversal of the $d$ residue
classes modulo $d$, since these $d$ consecutive agents each fall in a
distinct residue class.

Under this scheme, the allocation to agent $x$ for coalition size $s$ is:
\begin{equation}
\label{eq:lowest-id-count}
    |CV_x^s| \;=\; |\mathcal{A}(n,s)| \;+\;
    \bigl|\{\, \underline{t} \in \mathcal{P}(n,s) :
    x \le \varpi(\underline{t}) \,\}\bigr|
\end{equation}
Agent 1 is designated for every periodic IA (since $1 \le d$ for all $d \ge 1$), receiving the maximum allocation of $|\mathcal{E}(n,s)|$ coalitions.  
Agent $n$ is designated only for aperiodic IAs, receiving the minimum of $|\mathcal{A}(n,s)|$ coalitions.

Continuing with our running example, if we have $\underline{t} = \langle 1,1,1 \rangle, s=3, n=6$, then $d=2$ and the designated agents are $\{1,2\}$; i.e.\ the two lowest numbered agents, with agent 1 computing $C(1, \langle 1,1,1 \rangle) = \{1,3,5\}$ and agent 2 computing $C(2, \langle 1,1,1 \rangle) = \{2,4,6\}$.  Thus agents 1 and 2 would receive 4 coalitions each while agents 3--6 would each receive only 3.

More generally, under the lowest-ID scheme, the maximum imbalance at any coalition size~$s$ is $|\mathcal{P}(n,s)|$; agent~1 is designated for every periodic IA, whereas agent~$n$ is designated for none.
As $n$ grows and periodic IAs become more numerous, this imbalance can become substantial.  The rotated designation scheme introduced next reduces the worst-case imbalance to at most 1.

\subsubsection{The Rotated Designation Scheme} \label{sec:rotated-designation}

To achieve a tighter load balance, we introduce a rotated designation scheme that distributes the periodic-IA designations evenly across all agents.
The key idea is to shift the designation window cyclically for each successive periodic IA, so that no single agent is systematically favoured.
Crucially, the scheme requires no central coordinator: each agent can independently determine its own designations using only its identifier $x$ and the shared parameter $n$.

\begin{definition}[Rotated designation]
\label{def:rotated-designation}
For a given coalition size $s$, the periodic canonical IAs are enumerated
in the deterministic order produced by a shared generation algorithm\footnote{The specific algorithm used to generate the IAs in a deterministic ordering is presented in Section~\ref{sec:algorithm}. For the purposes of this section, it suffices that all agents use the same deterministic ordering.} as
$\underline{t}_1, \underline{t}_2, \ldots, \underline{t}_K$
where $K = |\mathcal{P}(n,s)|$.  For each $\underline{t}_j$, let
$d_j = \varpi(\underline{t}_j) = n / \mu(\underline{t}_j)$ denote the
number of distinct coalitions it generates.  The cumulative offset $h_j$ is defined as:
\begin{equation}
\label{eq:cumulative-offset}
    h_j = \sum_{i=1}^{j-1} d_i, \qquad h_1 = 0
\end{equation}
An agent $x$ determines that it is \emph{designated} for $\underline{t}_j$
if and only if
\begin{equation}
\label{eq:designation-test}
    (x - 1 - h_j) \bmod n \;<\; d_j
\end{equation}
Since the enumeration order and the values $d_j$ are determined entirely by $n$ and $s$, every agent evaluates this test identically.
The set of agents satisfying~(\ref{eq:designation-test}) for a given $\underline{t}_j$ is a window of $d_j$ consecutive agents (cyclically in $\{1, \ldots, n\}$) starting at agent $(h_j \bmod n) + 1$; i.e.\ 
\begin{equation}
\label{eq:designation-window}
    D_j \;=\; \bigl\{\, (h_j + k) \bmod n + 1 : 0 \le k \le d_j - 1
    \,\bigr\}
\end{equation}
\end{definition}

Following our example where $n=6, s=3$, assume that there is one periodic
canonical IA at this size: $\underline{t}_1 = \langle 1, 1, 1 \rangle$
with $\pi = 1$, $\mu = 3$, and $d_1 = \varpi(\underline{t}_1) = 2$.
Since $K = 1$, the cumulative offset is $h_1 = 0$, and the designation
window starts at agent $(0 \bmod 6) + 1 = 1$.  Applying the designation
test~(\ref{eq:designation-test}) to each agent:
\[
\begin{array}{c|cccccc}
    x & 1 & 2 & 3 & 4 & 5 & 6 \\
    \hline
    (x {-} 1 {-} 0) \bmod 6 & 0 & 1 & 2 & 3 & 4 & 5 \\
    {<}\; d_1 = 2\text{?}
    & \checkmark & \checkmark & & & &
\end{array}
\]
Thus $D_1 = \{1, 2\}$ is the set of two consecutive agents, one from each of the two
residue classes $\{1, 3, 5\}$ and $\{2, 4, 6\}$.  Agent~1 computes
$C(1, \underline{t}_1) = \{1, 3, 5\}$ and agent~2 computes
$C(2, \underline{t}_1) = \{2, 4, 6\}$, covering all
$\varpi(\underline{t}_1) = 2$ distinct coalitions generated by this IA.
The remaining agents 3--6 are not designated for $\underline{t}_1$ and
need not evaluate it.  When multiple periodic IAs are present ($K > 1$),
the cumulative offset $h$ advances after each one, rotating the window
so that the next IA's designations fall on different agents; this is
illustrated for all four periodic IAs of $n = 6$ in the worked example
of Section~\ref{sec:example-n6}.

The designation window $D_j$ advances by $d_j$ positions after each
periodic IA, causing successive windows to rotate around the cycle of $n$
agents.  Since aperiodic IAs do not modify the offset (all agents are
designated for them by default), $h$ is advanced only by periodic IAs.
 
\begin{lemma}[Transversal validity]
\label{lem:transversal}
Let $d \mid n$ (i.e.\ $d$ is a positive divisor of $n$).
Any set of $d$ agents with consecutive identifiers
(modulo $n$) forms a transversal of the $d$ residue classes of
$\{1, \ldots, n\}$ modulo $d$.
\end{lemma}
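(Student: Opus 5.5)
The plan is a direct pigeonhole argument on residues. Write the $d$ consecutive agents (cyclically modulo $n$) as $a_k = ((a + k - 1) \bmod n) + 1$ for $0 \le k \le d-1$, where $a$ is the first agent of the window and the paper's convention replaces $0$ by $n$. Since there are exactly $d$ residue classes modulo $d$ and exactly $d$ agents in the window, a transversal is obtained as soon as the $d$ agents are pairwise incongruent modulo $d$; a set of $d$ elements meeting $d$ classes at most once each must meet every class exactly once. So the whole task is to show that $a_k \not\equiv a_l \pmod d$ for $0 \le k < l \le d-1$.

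The first step is to make the residue of $a_k$ modulo $d$ well defined despite the wrap-around modulo $n$. This is where the hypothesis $d \mid n$ is used. The actual identifier $a_k$ differs from the integer $a+k$ by a multiple of $n$: either $0$ or $-n$, and in the convention where $0$ is replaced by $n$, still a multiple of $n$. Because $d \mid n$, reduction modulo $n$ followed by reduction modulo $d$ equals direct reduction modulo $d$. Hence $a_k \equiv a + k \pmod d$.

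The second step is to conclude. If $a_k \equiv a_l \pmod d$, then $k \equiv l \pmod d$. But $0 < l - k \le d-1$, so this is impossible. Thus the $d$ agents lie in $d$ distinct residue classes, i.e.\ they form a transversal. I would also note that they are indeed $d$ distinct agents, which holds since $d \le n$.

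The only delicate point is the first step: consecutiveness modulo $n$ must be compatible with residues modulo $d$, and this fails without $d \mid n$. For example, with $n=5$ and $d=2$, agents $5$ and $1$ are consecutive but congruent modulo $2$. I would handle this by stating explicitly that the canonical surjection $\mathbb{Z}_n \to \mathbb{Z}_d$ exists exactly when $d \mid n$, and that it is applied to the residue representatives, including the representative $n$ for $0$. Finally, as it is used in the designation scheme, I would remark that $D_j$ in Equation~(\ref{eq:designation-window}) is such a window with $d = d_j = \varpi(\underline{t}_j) = n/\mu$, which divides $n$. The lemma therefore applies to $D_j$.
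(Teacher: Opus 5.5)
Your proof is correct and follows essentially the same route as the paper's: both show that the window's agents have residues $a+k \pmod d$ for $k=0,\ldots,d-1$, which are pairwise distinct, so each residue class is hit exactly once. Your version is slightly more careful, since it states explicitly where $d \mid n$ is needed (reducing modulo $n$ before reducing modulo $d$). The paper's $0$-indexed argument uses that step only implicitly.
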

 
\begin{proof}
Using $0$-indexed agents $\{0, 1, \ldots, n{-}1\}$ for clarity, the
residue classes modulo $d$ are $R_j = \{j, j{+}d, j{+}2d, \ldots\}$
for $j = 0, \ldots, d{-}1$.  Consider $d$ consecutive agents starting at
position $h$: agent $(h+k) \bmod n$ for $k = 0, \ldots, d{-}1$.  Their
residues modulo $d$ are $(h+k) \bmod d$ for $k = 0, \ldots, d{-}1$,
which is a permutation of $\{0, 1, \ldots, d{-}1\}$.  Hence exactly one
agent from each residue class is included.
\end{proof}
\noindent 
Since $\mu(\underline{t}) \mid n$ (as established by Definition~\ref{def:period}, where
$\varpi(\underline{t}) = n\pi(\underline{t})/s$ must be a positive
integer), we have $d_j = \varpi(\underline{t}_j)$ divides $n$ for every
periodic IA.
Lemma~\ref{lem:transversal} therefore guarantees that each designation
$D_j$ is a valid transversal, ensuring that every unique coalition from
$\underline{t}_j$ is computed by exactly one agent and no coalition is
missed.

\subsubsection{Balancing the Coalition Value Calculation load across agents} \label{sec:loadbalancing}

Using the rotated designation scheme (Definition~\ref{def:rotated-designation}), we can show that the allocation of coalitions, including those from periodic IAs, will differ by no more than 1 across all of the agents. 

\begin{theorem}[Per-size load balance]
\label{thm:persize-balance}
Under the rotated designation scheme, for any fixed coalition size $s$ and
any agent $x \in \{1, \ldots, n\}$:
\begin{equation}
\label{eq:persize-bound}
    \left\lfloor \frac{\binom{n}{s}}{n} \right\rfloor
    \;\le\; |CV_x^s| \;\le\;
    \left\lceil \frac{\binom{n}{s}}{n} \right\rceil
\end{equation}
The maximum difference in allocation between any two agents for a given
coalition size is therefore at most 1.
\end{theorem}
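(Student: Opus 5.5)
Under the rotated scheme, $|CV_x^s|$ splits into an aperiodic part and a periodic part, and I will bound the periodic part by viewing the designation windows as one long cyclic sweep. By Theorem~\ref{thm:main2} and Theorem~\ref{thm:main}, every aperiodic canonical IA in $\mathcal{A}(n,s)$ contributes exactly one coalition to every agent, namely $C(x,\underline{t})$. Lemma~\ref{lem:transversal} shows that each periodic $\underline{t}_j$ contributes exactly one coalition to each agent in its window $D_j$, and none to any other agent. Hence
\[
|CV_x^s| \;=\; |\mathcal{A}(n,s)| \;+\; \bigl|\{\, j : x \in D_j \,\}\bigr| .
\]
The whole question is how evenly the windows $D_1,\ldots,D_K$ cover the cycle of $n$ agents.

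The key step is to concatenate the windows. Write $H = h_{K+1} = \sum_{j=1}^{K} d_j$. By~(\ref{eq:designation-window}) and~(\ref{eq:cumulative-offset}), $D_j$ consists of the agents $(m \bmod n)+1$ for $m$ ranging over the integer interval $[h_j, h_j+d_j-1]$. These intervals tile $[0,H-1]$ without overlap. Each $d_j \le n$, since $d_j = n/\mu$ with $\mu \ge 2$, so no single window wraps onto itself and each agent appears at most once per window. The number of $j$ with $x \in D_j$ is therefore the number of $m \in [0,H-1]$ with $m \equiv x-1 \pmod n$, which equals $\lfloor H/n \rfloor$ or $\lceil H/n \rceil$. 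Consequently the counts for any two agents differ by at most one. More precisely, agent $x$ receives $\lfloor H/n\rfloor + 1$ if $x-1 < H \bmod n$ and $\lfloor H/n\rfloor$ otherwise.

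To convert this into the stated bounds, I will use Equation~(\ref{eq:coalition-decomposition}), which gives $\binom{n}{s} = n|\mathcal{A}(n,s)| + H$. Then
\[
\frac{\binom{n}{s}}{n} = |\mathcal{A}(n,s)| + \frac{H}{n},
\]
so $\lfloor \binom{n}{s}/n \rfloor = |\mathcal{A}(n,s)| + \lfloor H/n \rfloor$, and similarly for the ceiling. Substituting into the expression for $|CV_x^s|$ gives~(\ref{eq:persize-bound}) directly, and the ``at most 1'' difference follows.

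I expect the main obstacle to be the counting step. It needs care with the convention that agents run over $1,\ldots,n$ while the offsets are taken mod $n$. It also has to justify that the windows tile a contiguous range, which holds because $h_{j+1} = h_j + d_j$ exactly and is not reduced mod $n$. Beyond that, the argument relies only on the earlier lemmas.
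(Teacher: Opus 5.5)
Your proof is correct and follows essentially the same route as the paper: you count the $|\mathcal{A}(n,s)|$ aperiodic coalitions separately, then lay the $D(n,s)=\sum_j d_j$ periodic designation slots out consecutively and map them cyclically onto the agents, so that each agent receives $\lfloor D(n,s)/n\rfloor$ or $\lceil D(n,s)/n\rceil$ of them, and finish with Equation~(\ref{eq:coalition-decomposition}). You also make explicit two details the paper leaves implicit: that $d_j \le n$, so no agent occurs twice in one window, and that the windows tile $[0,H-1]$ because $h_{j+1}=h_j+d_j$ is not reduced mod $n$; neither changes the argument.
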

\begin{proof}
Each agent receives exactly $|\mathcal{A}(n,s)|$ coalitions from aperiodic
IAs, since every aperiodic IA generates $n$ distinct coalitions (one per
agent).
For the periodic IAs, define the total number of periodic designations at
size $s$:
\[
    D(n,s) = \sum_{j=1}^{K} d_j
    = \sum_{\underline{t}\, \in\, \mathcal{P}(n,s)} \varpi(\underline{t})
\]
where $K = |\mathcal{P}(n,s)|$ is the number of periodic canonical IAs at size $s$.
Under the rotated scheme, the $D(n,s)$ designation slots are laid out
consecutively (i.e.\ the first $d_1$ slots for $\underline{t}_1$, the next
$d_2$ slots for $\underline{t}_2$, and so on) and mapped to agents
cyclically; the $p$-th slot (for $p = 1, \ldots, D(n,s)$) is assigned to
agent $((h_1 + p - 1) \bmod n) + 1$.  Therefore, the number of slots assigned to any
agent is either $\lfloor D(n,s) / n \rfloor$ or
$\lceil D(n,s) / n \rceil$.
Combining, the total allocation for agent $x$ at size $s$ gives us:
\[
    |CV_x^s| = |\mathcal{A}(n,s)| + \delta_x
\]
where $\delta_x \in \{\lfloor D(n,s)/n \rfloor,\;
\lceil D(n,s)/n \rceil\}$.  From (\ref{eq:coalition-decomposition}),
$\binom{n}{s} = n \cdot |\mathcal{A}(n,s)| + D(n,s)$, and therefore:
\[
    \frac{\binom{n}{s}}{n} = |\mathcal{A}(n,s)| + \frac{D(n,s)}{n}
\]
Since $|\mathcal{A}(n,s)|$ is an integer, the bound (\ref{eq:persize-bound}) follows.
\end{proof}

\noindent
Note that when $\gcd(n,s) = 1$, there are no periodic IAs at size~$s$ (as established in Section~\ref{sec:periodic}); thus
$\mathcal{P}(n,s) = \varnothing$ and every agent receives exactly $|\mathcal{A}(n,s)| = |\mathcal{E}(n,s)|$ coalitions.  
In this case the allocation is \emph{perfectly} balanced, with zero imbalance.  
The bound of Theorem~\ref{thm:persize-balance} is therefore tight only when periodic IAs exist.

Although Theorem~\ref{thm:persize-balance} guarantees a per-size imbalance of at most 1, the aggregate imbalance across all coalition sizes can be as large as the number of sizes containing periodic IAs if the rotated designation scheme is applied independently for each coalition size; i.e.\ if the cumulative offset $h$ is reset to 0 for each new size $s$.
This occurs because the same agents may receive the ``extra'' periodic coalition at every size when the designation window always starts from the same position.

This can be avoided by a simple modification: rather than resetting $h$ at each new coalition size, the offset is \emph{carried across all sizes}; i.e.\ 
after processing all periodic IAs at size $s$,
the cumulative offset becomes the initial offset for size $s + 1$.  
We refer to this as the \emph{global offset} variant.

Formally, let the periodic canonical IAs across \emph{all} sizes
$s = 1, \ldots, n$ be indexed in the order they are encountered as
$\underline{t}^{(1)}, \underline{t}^{(2)}, \ldots, \underline{t}^{(M)}$,
where $M = \sum_{s=1}^{n} |\mathcal{P}(n,s)|$ is the total number of
periodic canonical IAs.  Each $\underline{t}^{(j)}$ generates
$d^{(j)} = \varpi(\underline{t}^{(j)}) = n / \mu(\underline{t}^{(j)})$
distinct coalitions.  The global cumulative offset
is:
\begin{equation}
\label{eq:global-offset}
    H_j = \sum_{i=1}^{j-1} d^{(i)}, \qquad H_1 = 0
\end{equation}
and the designation test (Equation \eqref{eq:designation-test}) is applied with
$H_j$ in place of $h_j$.

\begin{theorem}[Aggregate load balance]
\label{thm:aggregate-balance}
Under the rotated designation scheme with global offset, for any agent
$x \in \{1, \ldots, n\}$:
\begin{equation}
\label{eq:aggregate-bound}
    \left\lfloor \frac{2^n - 1}{n} \right\rfloor
    \;\le\; |CV_x| \;\le\;
    \left\lceil \frac{2^n - 1}{n} \right\rceil
\end{equation}
The maximum difference in the total number of coalition value calculations
between any two agents, across all coalition sizes, is at most 1.
\end{theorem}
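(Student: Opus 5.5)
The plan mirrors the proof of Theorem~\ref{thm:persize-balance}, but works with the single global sequence of designation slots rather than one sequence per size. First, each agent's total allocation splits as
\[
|CV_x| \;=\; \sum_{s=1}^{n} |\mathcal{A}(n,s)| \;+\; \delta_x ,
\]
where the first term is the same for every agent, since each aperiodic IA generates exactly one coalition per agent (Theorem~\ref{thm:main} with $\mu=1$). The second term $\delta_x$ counts the periodic designations received by $x$ across all sizes. Summing Equation~\eqref{eq:coalition-decomposition} over $s=1,\ldots,n$ and using $\sum_{s}\binom{n}{s}=2^n-1$ gives
\[
2^n-1 \;=\; n\sum_{s=1}^{n}|\mathcal{A}(n,s)| \;+\; \mathcal{D}, \qquad \mathcal{D}=\sum_{j=1}^{M} d^{(j)} .
\]
Here the sizes $s=1$ and $s=n$ are handled consistently. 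The IA $\langle n-1\rangle$ is aperiodic, and the grand coalition's IA $\langle 0,\ldots,0\rangle$ is periodic with $\pi=1$, $\mu=n$, $d=1$.

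Next, I would show that the periodic designations tile the cycle of agents consecutively. By Equation~\eqref{eq:designation-window} with $H_j$ in place of $h_j$, periodic IA $\underline{t}^{(j)}$ is assigned the window of agents $(H_j+k)\bmod n+1$ for $0\le k<d^{(j)}$. Since $H_{j+1}=H_j+d^{(j)}$, these windows concatenate. So the $p$-th global slot, for $p=1,\ldots,\mathcal{D}$, goes to agent $((p-1)\bmod n)+1$, and the slots are assigned round-robin. Hence each agent receives either $\lfloor \mathcal{D}/n\rfloor$ or $\lceil \mathcal{D}/n\rceil$ slots. More precisely, agent $x$ receives $\lceil \mathcal{D}/n\rceil$ slots exactly when $x\le \mathcal{D}\bmod n$, with $\lceil\cdot\rceil=\lfloor\cdot\rfloor$ when $n\mid \mathcal{D}$.

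Two checks are needed so that the slot count really is the coalition count. First, each slot is a distinct coalition that actually belongs to the agent. Lemma~\ref{lem:transversal} applies because $d^{(j)}\mid n$, so each window is a transversal and its $d^{(j)}$ slots give the $d^{(j)}$ distinct coalitions of $\underline{t}^{(j)}$. If $d^{(j)}=n$ were allowed the window would be all agents, but periodic IAs have $d^{(j)}<n$. Second, no agent holds two slots for the same IA, which holds since $d^{(j)}\le n$. Theorem~\ref{thm:main2} then guarantees that different IAs give disjoint coalitions, so nothing is double counted.

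Combining the two steps gives $|CV_x| = \sum_s|\mathcal{A}(n,s)| + \delta_x$ with $\delta_x\in\{\lfloor \mathcal{D}/n\rfloor,\lceil \mathcal{D}/n\rceil\}$. Dividing the summed identity by $n$ gives $(2^n-1)/n=\sum_s|\mathcal{A}(n,s)|+\mathcal{D}/n$, and because the first term is an integer, the floor and ceiling bounds of Equation~\eqref{eq:aggregate-bound} follow directly.

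The main obstacle is the round-robin step: justifying that carrying $H_j$ across sizes yields one global consecutive slot sequence, as opposed to the per-size resets. The worry is wrap-around when a window straddles agent $n$ and agent $1$. Taking every index modulo $n$ from the start, so that slot $p$ maps to $(p-1)\bmod n$, should make this immediate.
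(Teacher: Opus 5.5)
Your proposal is correct and follows essentially the same route as the paper's proof. Both arguments split $|CV_x|$ into the common aperiodic count $\sum_s|\mathcal{A}(n,s)|$ plus the periodic designations. They then use $H_{j+1}=H_j+d^{(j)}$ to lay all $D_{\mathrm{global}}$ slots out as one consecutive sequence mapped round-robin to the agents, and finish with $2^n-1=n\sum_s|\mathcal{A}(n,s)|+D_{\mathrm{global}}$. Your extra checks (transversal validity, disjointness across IAs via Theorem~\ref{thm:main2}, and explicit mod-$n$ slot indexing to handle wrap-around) only spell out what the paper treats as already settled in the preceding section.
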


\begin{proof}
Every agent receives exactly $A = \sum_{s=1}^{n} |\mathcal{A}(n,s)|$
coalitions from aperiodic IAs (this quantity is the same for all agents).
The total number of periodic designation slots across all sizes is
$D_{\mathrm{global}} = \sum_{j=1}^{M} d^{(j)}$, where $M$ is the total number of periodic canonical IAs across all sizes, and $d^{(j)}$
is the number of distinct coalitions generated by the $j$-th periodic IA.

With the global offset, the designation windows for all $M$ periodic IAs
are laid out as a single consecutive sequence of $D_{\mathrm{global}}$
slots, mapped cyclically to the $n$ agents.  By the same counting argument
as in the proof of Theorem~\ref{thm:persize-balance}, each agent receives
either $\lfloor D_{\mathrm{global}} / n \rfloor$ or
$\lceil D_{\mathrm{global}} / n \rceil$ periodic coalitions.

Since $2^n - 1 = n \cdot A + D_{\mathrm{global}}$, we have:
\[
    |CV_x| = A + \delta_x,
    \qquad
    \frac{2^n - 1}{n} = A + \frac{D_{\mathrm{global}}}{n}
\]
where $\delta_x \in \{\lfloor D_{\mathrm{global}} / n \rfloor,\;
\lceil D_{\mathrm{global}} / n \rceil\}$ is the number of periodic
designations assigned to agent $x$.
The bound (\ref{eq:aggregate-bound}) follows as before.
\end{proof}
\noindent
Interestingly, the per-size bound of Theorem~\ref{thm:persize-balance} continues to hold
under the global offset.  The starting offset at each size may differ from
zero, but this merely shifts the designation windows without affecting the
counting argument: the number of periodic designations received by any
agent within a single size $s$ still lies between
$\lfloor D(n,s)/n \rfloor$ and $\lceil D(n,s)/n \rceil$.

\subsubsection{Full worked example for \emph{n}~=~6}
\label{sec:example-n6}

The rotated designation scheme with global offset is illustrated for the case where $n=6$.  This complements Table~\ref{tab:example1}, which lists the coalitions assigned to each agent for both periodic and aperiodic IAs for each size $s$. As the total number of coalitions is $2^6 - 1 = 63$, each agent should receive either $\lfloor 63/6 \rfloor = 10$ or $\lceil 63/6 \rceil = 11$ coalitions.

Recall that each agent will be allocated a coalition for each aperiodic canonical IA.
Therefore, we focus on the designation of the subset of coalitions resulting from periodic canonical IAs across all coalition sizes $s$ in increasing order.
These IAs are illustrated in Table~\ref{tab:designation-n6} in the order they are encountered, together with the repetition count $\mu$, the number of distinct coalitions generated $d = \varpi(\underline{t})$, the global designation offset $h$ at the point of encounter, and the resulting designation window.

\begin{table}[t]
\setlength{\tabcolsep}{12pt}
\centering
\caption{Periodic canonical IAs for $n = 6$ across all coalition sizes,
with the global offset $h$ and designated agents under the rotated scheme.}
\label{tab:designation-n6}
\begin{tabular}{@{}cclccc@{}}
\toprule
Size $s$ & IA $\underline{t}$ & $\mu$ & $d$ & $h$ & Designated agents \\
\midrule
2 & $\langle 2, 2 \rangle$ & 2 & 3 & 0
  & $\{1, 2, 3\}$ \\
3 & $\langle 1, 1, 1 \rangle$ & 3 & 2 & 3
  & $\{4, 5\}$ \\
4 & $\langle 0, 1, 0, 1 \rangle$ & 2 & 3 & 5
  & $\{6, 1, 2\}$ \\
6 & $\langle 0, 0, 0, 0, 0, 0 \rangle$ & 6 & 1 & 8
  & $\{3\}$ \\
\bottomrule
\end{tabular}
\end{table}

As the coalition sizes $s = 1$ and $s = 5$ contain no periodic IAs (since $\gcd(n, s) = 1$ in both cases), the offset passes through them
unchanged.  The global offset $h$ advances through the following values: $0 \to 3 \to 5 \to 8 \to 9$ across the four periodic IAs, causing each designation window to fall on a different set of agents. Table~\ref{tab:allocation-n6} illustrates the number of resulting allocations to each agent for differing coalition sizes, together with the total number of coalitions (as shown in Table~\ref{tab:example1}), and the theoretical bound.  Note that the designations for periodic IAs appear in Table~\ref{tab:example1} in bold (with the exception of the grand coalition; i.e.\ when $s=6$, which is not shown, but in this example is designated to agent 3).

\begin{table}[th]
\centering
\setlength{\tabcolsep}{9pt}
\caption{Per-agent allocation for $n = 6$ under the rotated designation scheme with global offset for each coalition size $s$.}
\label{tab:allocation-n6}
\begin{tabular}{@{}cccccccccc@{}}
\toprule
& \multicolumn{6}{c}{Number of Coalitions at size $s$} & \\
\cmidrule(lr){2-7}
Agent $x$ & 1 & 2 & 3 & 4 & 5 & 6 & Total $|CV_x|$
& Bound \\
\midrule
1 & 1 & 3 & 3 & 3 & 1 & 0 & 11
& $\lceil 63/6 \rceil = 11$ \;\checkmark \\
2 & 1 & 3 & 3 & 3 & 1 & 0 & 11
& $11$ \;\checkmark \\
3 & 1 & 3 & 3 & 2 & 1 & 1 & 11
& $11$ \;\checkmark \\
4 & 1 & 2 & 4 & 2 & 1 & 0 & 10
& $\lfloor 63/6 \rfloor = 10$ \;\checkmark \\
5 & 1 & 2 & 4 & 2 & 1 & 0 & 10
& $10$ \;\checkmark \\
6 & 1 & 2 & 3 & 3 & 1 & 0 & 10
& $10$ \;\checkmark \\
\midrule
\multicolumn{7}{r}{Total:} & \multicolumn{2}{l}{$3 \times 11 + 3 \times 10
= 63 = 2^6 - 1$ \;\checkmark} \\
\bottomrule
\end{tabular}
\end{table}

The aggregate imbalance is $11 - 10 = 1$, and the per-size imbalance is at most 1 for every coalition size (e.g.\ at $s = 2$: agents receive either 2 or 3; at $s = 3$: either 3 or 4).  Note that the per-size allocation varies across agents (e.g.\ agent 3 receives 3 coalitions for $s = 3$ but only 1 for $s = 6$, while agent 4 receives 4 at $s = 3$ and none for $s = 6$), but the totals are balanced to within 1.
For comparison, with the lowest-ID designation the totals for increasing sizes $s$ would be 13, 12, 11, 9, 9, 9; i.e.\ an imbalance of 4.  The rotated scheme with global offset eliminates this bias entirely.

\subsection{Discussion}
\label{sec:IA-discussion}

The use of increment arrays (IAs) address each of the desirable properties outlined in Section~\ref{sec:intro} for coalition value calculation allocation methods, which are discussed below:

\begin{enumerate}
\item \textbf{Eliminating the need for communication between agents:}
    As noted in Definition~\ref{def:rotated-designation}, the designation test~(\ref{eq:designation-test}) depends only on the agent's own identifier $x$ and the shared parameter $n$.
    The cumulative offset $h$ and the strides $d_j$ are functions of the canonical IAs, which every agent generates in the same deterministic order.
    Consequently, no inter-agent communication, shared memory, or external coordination is required, as each agent independently and identically determines its own designations across all coalition sizes.

\item \textbf{Ensuring that the allocation of coalitions is equitable:}
    Under the rotated designation scheme (Definition~\ref{def:rotated-designation}), each agent is designated for all aperiodic IAs and for an approximately equal share of the periodic IAs for each coalition size (Theorem~\ref{thm:persize-balance}).
    When the global offset variant is used, the aggregate allocation across all coalition sizes is also balanced: the total number of coalition value calculations assigned to any agent lies between $\lfloor (2^n - 1)/n \rfloor$ and $\lceil (2^n - 1)/n \rceil$ (Theorem~\ref{thm:aggregate-balance}).
    Therefore, the maximum difference between any two agents is at most 1.
    
\item \textbf{Every possible coalition value is calculated once and only
    once, thus eliminating redundancy:}
    Each coalition of size~$s$ is generated by exactly one equivalence class of IA (Theorem~\ref{thm:main2}).
    For aperiodic IAs ($\pi(\underline{t}) = s$), every starting agent produces a distinct coalition, and thus all $n$ coalitions in that class are covered without duplication.
    For periodic IAs ($\pi(\underline{t}) < s$), as the $\mu$ agents within each residue class will generate the same coalition; the designation scheme assigns exactly one agent per residue class (i.e.\ a transversal of size $\varpi(\underline{t})$) so that each distinct coalition is computed only once.
    Taken together, every coalition $C \subseteq \{1,\ldots,n\}$ with $|C| = s$ is computed by exactly one designated agent (Lemma~\ref{lemma:ns-sequence} guarantees that a generating IA exists
    for every such coalition).

\item \textbf{Maintaining a balanced load across agents:}
    Beyond receiving approximately equal \emph{numbers} of coalitions
    (property~(ii)), it is also desirable that agents perform
    approximately equal amounts of \emph{computation}.
    Every agent generates coalitions from the same set of canonical
    representative IAs (one per equivalence class under $\approx$; see
    Section~\ref{sec:equivalence}).
    For each IA $\underline{t}$ of size~$s$, the coalition generation
    function $C(x, \underline{t})$ (Definition~\ref{def:coalition-generation})
    computes $s - 1$ cumulative additions (one per element of the IA),
    regardless of the starting agent~$x$.
    Consequently, every agent that is designated for the same set of IAs
    performs \emph{exactly} the same number of arithmetic operations to
    generate its coalitions; there is no dependence on agent identity or
    on the lexicographic position of the coalitions within a list.
    This should be contrasted with the cost of \emph{evaluating} the
    characteristic function $\nu(C)$ for each assigned coalition, which
    is application-dependent and may vary with coalition size.
    The per-size balance of Theorem~\ref{thm:persize-balance} ensures
    that, within any given size~$s$, allocations differ by at most one
    coalition, so that if the cost of evaluating $\nu(C)$ depends only
    on $|C|$, the total evaluation cost across agents is also balanced.
    
\item \textbf{An agent is \emph{self-interested}; i.e.\ it is only assigned
coalitions in which it is a member:}
    The coalition generation function $C(x, \underline{t})$ guarantees
    this by construction; the starting agent~$x$ is always included in
    the coalition (Equation~\eqref{eqn:valuex}).
    Furthermore, since the designation schemes only select \emph{which}
    agents are designated, and a designated agent~$x$ always computes
    $C(x, \underline{t})$, the self-interest property is preserved
    regardless of the designation scheme used.

\end{enumerate}


\section{Necklace-based Distributed Coalition Algorithm} \label{sec:algorithm}

The preceding section established that a minimal coalition value
calculation set $CVCS^*$ can be constructed from canonical
representative IAs, one per equivalence class under
$\approx$, together with a rotated designation scheme that
balances the workload across agents.  Two questions remain: how to
generate the canonical IAs efficiently, and how to combine
generation with the designation scheme into a complete,
communication-free algorithm.  To address these, we
first show that canonical representative IAs correspond exactly
to two-colour combinatorial necklaces
(Section~\ref{sec:bijection}), so that necklace generation
algorithms can be used to enumerate them.  We also describe the
FKM algorithm for generating necklaces in constant amortised time, before presenting the full
\emph{Necklace-based Distributed Coalition Algorithm (N-DCA)}
that each agent executes independently to
determine its own coalition value calculation allocation
(Section~\ref{sec:ndcva_algorithm}).  We conclude the discussion with a
complexity analysis in Section~\ref{sec:complexity}.

\subsection{The Necklace--Increment Array Correspondence}
\label{sec:bijection}

An IA for a coalition of size $s$ given $n$ agents can be
generated by considering the different combinations of a
two-colour necklace of length $n$ with $s$ white beads.  We
assume that black beads are represented by the value $1$,
indicating that an agent has been omitted from a coalition,
whereas white beads are represented by the value $0$,
representing agents that are included in a coalition
(Figure~\ref{fig:necklace}).

\begin{figure}[h]
    \centering
    \includegraphics[width=\linewidth]{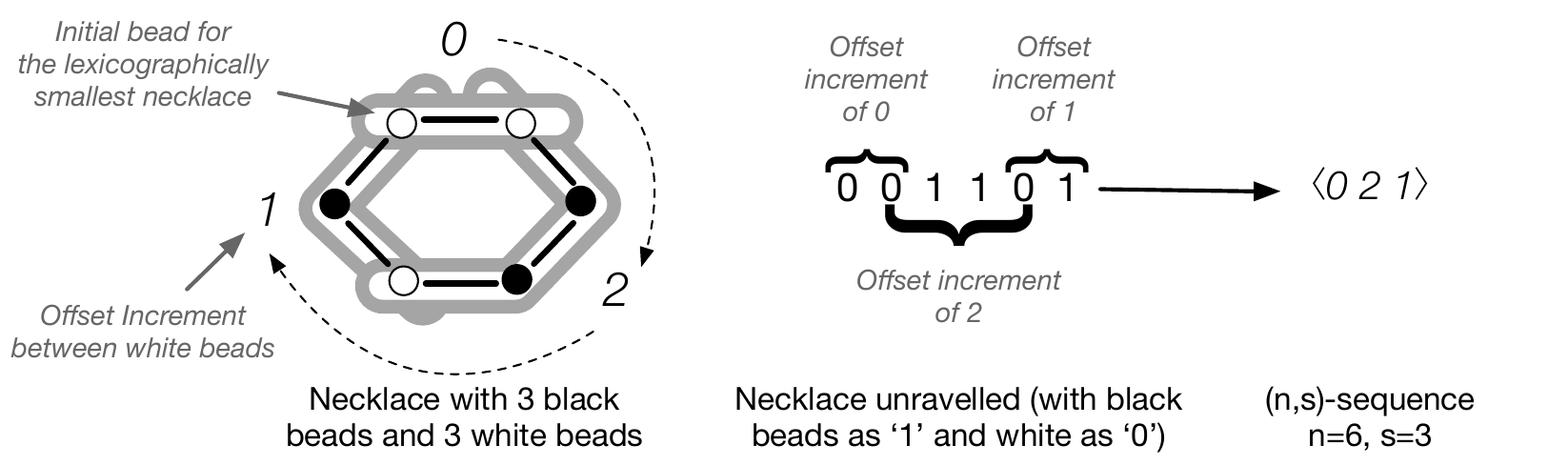}
    \caption{Necklace with $n=6$ beads, of which $s=3$ represent agents in a coalition.}
	\label{fig:necklace}
\end{figure}

\begin{figure}[b]
\scriptsize
\begin{algorithmic}[1]
\Procedure{GenIncArray}{$n$, $a$} \Comment{Generate Increment Array $\underline{t}$ from necklace $a$}
\State \textbf{precondition:} $a[1] = 0$ \Comment{guaranteed by FKM (Figure~\ref{algorithm:fkm})}
 
\State $j \gets 0$ \Comment{counts contiguous `1' elements}
\State $s \gets 0$ \Comment{number of white beads encountered}
\State $\underline{t} \gets \langle\rangle$ \Comment{initially empty}
 
\For{$i=1$ to $n$}
    \If{$a[i] = 1$}
        \State{$j \gets j+1$} \Comment{increment black-bead count}
    \Else
        \If{$s > 0$} \Comment{skip the first white bead}
            \State append $j$ to $\underline{t}$
        \EndIf
        \State $j \gets 0$ \Comment{reset count}
        \State $s \gets s+1$
    \EndIf
\EndFor
 
\If{$s = 0$}
    \State \Return $\underline{t}$ \Comment{all-black necklace: empty IA}
\EndIf
 
\State append $j$ to $\underline{t}$ \Comment{wrap-around: trailing blacks after last white}
 
\State \Return $\underline{t}$
\EndProcedure
\end{algorithmic}
\caption{Algorithm to generate an Increment Array $\underline{t}$
of size $s$ from a necklace $a$ of $n$ beads.
The precondition $a[1]=0$ (line~2) holds for any
lexicographically least necklace containing at least one bead of
each colour, and is guaranteed by the FKM algorithm
(Figure~\ref{algorithm:fkm}), which produces canonical
representatives.  The scan skips the first white bead encountered
(line~10, where $s = 0$) so that the initial zero-count is
not recorded, and instead appends the trailing black-bead count as
the cyclic wrap-around element (line~16), yielding exactly $s$
elements satisfying $\sum_{i=0}^{s-1} t_i = n - s$  (Lemma~\ref{lemma:offset-increment}).}
\label{algorithm:genIncArr}
\end{figure}

The elements within an IA represent the offset increments
between each white bead in the necklace, which is equivalent
to the number of sequential black beads between each white
bead.  Consider the example in Figure~\ref{fig:necklace}, where
the lexicographically least representative necklace is
represented by the sequence $001101$.  Note that the first
white bead is immediately followed by another white bead,
with no black beads separating them.  Therefore the offset
increment between these two beads is $0$.  In contrast, there
are two black beads between the second and third white bead,
resulting in an offset increment of $2$.  For the last offset
increment, we consider the number of contiguous black beads
between the last and first white bead; in this case there is
one final black bead resulting in an offset increment of $1$.

\begin{figure}[t]
\scriptsize
\begin{algorithmic}[1]
\Procedure{FKM}{$k,n$} \Comment{Generate necklaces of length $n$ with $k$ colours}

\For{$i\gets 0, n$}
    \State $a[i] \gets 0$
\EndFor
\State $i \gets n$
\State \textsc{ProcessNecklace}($n$, $a$) \Comment{Output the all-zeros necklace $0^n$}
\Repeat
	\State $a[i] \gets a[i]+1$
	\For{$j\gets 1, n-i$}
		\State $a[j+i] \gets a[j]$
	\EndFor
	\If{$(n \bmod{i} = 0)$}
        \State \textsc{ProcessNecklace}($n$, $a$) \Comment{Output necklace $a$}
    \EndIf
    \State $i \gets n$
    \While{$(a[i] = (k-1))$}
        \State $i \gets i-1$
    \EndWhile
\Until{$i = 0$}
\EndProcedure

\end{algorithmic}
\caption{FKM Algorithm, based on
\citet{FREDRICKSEN1978207,FREDRICKSEN1986181} and discussed in
\citet{RUSKEY1992414}.  In this implementation, a necklace of
length $n$ is stored in an array of length $n+1$, where
$a[0]=0$.  The all-zeros necklace is output explicitly
(line~6) before the main loop, which generates all remaining
necklaces beginning with $0^{n-1}1$.  In the context of
N-DCA, \textsc{ProcessNecklace} corresponds to executing
\textsc{GenIncArray} followed by the designation test (see Figure~\ref{algorithm:ndcva}).}
\label{algorithm:fkm}
\end{figure}

This mechanism for generating the unique run-length encoding
of a necklace $a$ as its corresponding IA $\underline{t}$ is
given by \textsc{GenIncArray} in Figure~\ref{algorithm:genIncArr}.
The algorithm scans the necklace left to right, maintaining
a running count $j$ of contiguous black beads (lines~7--8).
Each white bead ($a[i] = 0$) after the first triggers the
storage of the current count as the next IA element
(lines~10--11); the first white bead is skipped because the
precondition $a[1]=0$ (line~2) ensures the count before it is
always zero (such that the guard at line~10 fails when $s = 0$).
After the main loop, the trailing count $j$ is appended as the
final element (line~16), capturing the cyclic wrap-around from the
last white bead back to the first.  The result is an IA of
exactly $s$ elements.
Table~\ref{table:fkmOutput} lists each of the possible
two-colour necklaces containing $n=6$ beads and the
corresponding IA for coalitions of size $s$, which also
corresponds to the number of white beads in the necklace.
Note that the necklace $000000$ corresponds to the grand
coalition (i.e.\ $s=6$), whereas the necklace $111111$
corresponds to an empty coalition (i.e.\ with no agents,
such that $s=0$).

As for generating the necklaces themselves, several algorithms have been proposed that, amongst others, calculate all possible necklaces
\citep{CATTELL2000267,FREDRICKSEN1978207,FREDRICKSEN1986181,RUSKEY1992414,8005509}, those of fixed content
\citep{Sawada:2003:FAG:2780704.2781151} or those of fixed density
\citep{Sawada:1999:EAG:314500.314910}.  The \emph{FKM} algorithm,
originally proposed by Fredricksen and Kessler \cite{FREDRICKSEN1986181}, 
is a simple and efficient algorithm which
generates the lexicographically smallest string for each
necklace of length $n$ given $k$ coloured beads, beginning with the string $0^n$ and ending with
$(k-1)^n$.  It has been proven to generate all
necklaces, for a given $k$ and $n$, in \emph{constant
amortised time} or CAT, i.e.\ where the total time is
$O(N_k(n))$, based on Equation \eqref{eqn:totalnecklaces}.  A version of the \emph{FKM} algorithm  \citep{FREDRICKSEN1986181,FREDRICKSEN1978207} is illustrated in
Figure~\ref{algorithm:fkm}. Note that the algorithm is presented here for the general case of $k$ colours; it is used by the procedure {\sc N-DCA} (Figure~\ref{algorithm:ndcva}) to only generate necklaces of two-colours (i.e.\ $k=2$), with $n$ corresponding to the total number of agents.

The FKM increment-and-propagate loop (lines~7--14)
generates necklaces beginning with $0^{n-1}1$; the all-zeros
necklace $0^n$ must therefore be output separately before
the loop is entered (line~6).  This necklace corresponds to
the grand coalition ($s = n$), which is processed through
the same designation machinery as all other necklaces.

\begin{table}[t]
\centering
\setlength{\tabcolsep}{15pt}
\caption{The lexicographical least representative necklaces
and their corresponding Increment Arrays (IAs) for $n=6$,
where $s$ denotes the number of agents in a coalition,
or the number of white beads in the necklace.  The order in
which the necklaces are generated using the FKM algorithm
(based on \citet{FREDRICKSEN1978207,FREDRICKSEN1986181} and
discussed in \citet{RUSKEY1992414}) is also given.}
\label{table:fkmOutput}
\begin{tabular}{r@{\;\;}l@{\;\;}c@{\;\;}l@{\qquad\qquad}r@{\;\;}l@{\;\;}c@{\;\;}l}
\toprule
Order & Necklace & $s$ & IA &
Order & Necklace & $s$ & IA \\
\midrule
1:  & \texttt{000000} & 6 & $\langle 0, 0, 0, 0, 0, 0 \rangle$ &  8: & \texttt{001101} & 3 & $\langle 0, 2, 1 \rangle$ \\
2:  & \texttt{000001} & 5 & $\langle 0, 0, 0, 0, 1 \rangle$   &  9: & \texttt{001111} & 2 & $\langle 0, 4 \rangle$ \\
3:  & \texttt{000011} & 4 & $\langle 0, 0, 0, 2 \rangle$      & 10: & \texttt{010101} & 3 & $\langle 1, 1, 1 \rangle$ \\
4:  & \texttt{000101} & 4 & $\langle 0, 0, 1, 1 \rangle$      & 11: & \texttt{010111} & 2 & $\langle 1, 3 \rangle$ \\
5:  & \texttt{000111} & 3 & $\langle 0, 0, 3 \rangle$         & 12: & \texttt{011011} & 2 & $\langle 2, 2 \rangle$ \\
6:  & \texttt{001001} & 4 & $\langle 0, 1, 0, 1 \rangle$      & 13: & \texttt{011111} & 1 & $\langle 5 \rangle$ \\
7:  & \texttt{001011} & 3 & $\langle 0, 1, 2 \rangle$         & 14: & \texttt{111111} & 0 & $\langle \rangle$\\
\bottomrule
\end{tabular}
\end{table}

The following proposition establishes that this encoding is a bijection between
necklaces and IA equivalence classes.

\begin{proposition}[Necklace--IA bijection]
\label{prop:bijection}
Let $\mathcal{N}(n,s)$ denote the set of two-colour necklaces
of length~$n$ with exactly $s$ beads of colour~$0$, and let
$\mathcal{E}(n,s)$ denote the set of canonical representative
IAs of size~$s$ for $n$ agents (i.e.\ one per equivalence
class under $\approx$).  The run-length encoding implemented
by \textsc{GenIncArray} defines a bijection
$\mathcal{N}(n,s) \to \mathcal{E}(n,s)$.
\end{proposition}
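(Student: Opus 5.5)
We work at the level of binary strings of length~$n$ with exactly $s$ zeros, where $1 \le s \le n$. Cyclic rotation of such strings corresponds to cyclic rotation of their run-length encodings. Concretely, for a string $w$ whose first bead is white, define $\rho(w) = \langle t_0,\ldots,t_{s-1}\rangle$ by letting $t_k$ be the number of black beads between the $(k{+}1)$-th white bead and the next white bead, read cyclically. This is exactly what \textsc{GenIncArray} computes: by the precondition $a[1]=0$ the first white bead sits at position~1, the intermediate runs are recorded on lines~10--11, and the wrap-around run is appended on line~16. The sum of all runs is the number of black beads, $n-s$, so $\rho(w)$ satisfies Definition~\ref{def:increment-array} and its entries are non-negative.

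The first step is well-definedness on classes. Rotating a string so that a \emph{different} white bead comes first cyclically shifts $\rho$. Rotating by black beads only changes where the reading starts inside a run, and does not change $\rho$ once we normalise to start at a white bead. Hence the rotation class of $w$ maps to a single $\approx$-class $[\rho(w)]_\approx$ (Definition~\ref{def:ia-equivalence}), and we obtain a map $\bar\rho:\mathcal{N}(n,s)\to\mathcal{E}(n,s)$ by taking the canonical representative. For $s=0$ both sides are singletons (the all-black necklace and the empty IA), so that case is trivial.

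For the other directions we construct an inverse. Given an IA $\underline{t}$, set $\sigma(\underline{t}) = 0\,1^{t_0}\,0\,1^{t_1}\cdots 0\,1^{t_{s-1}}$. This is a string of length $s+(n-s)=n$ with $s$ zeros, and $\rho(\sigma(\underline{t}))=\underline{t}$. A cyclic shift of $\underline{t}$ yields a rotation of $\sigma(\underline{t})$, so $\sigma$ descends to a map $\bar\sigma:\mathcal{E}(n,s)\to\mathcal{N}(n,s)$. Conversely, every string $w$ with a leading white bead satisfies $\sigma(\rho(w))=w$, and every rotation class contains such a string because $s\ge1$. Therefore $\bar\sigma\circ\bar\rho$ and $\bar\rho\circ\bar\sigma$ are both identities, which gives injectivity and surjectivity.

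The remaining point, which I expect to be the main subtlety, is that the statement concerns the \emph{specific} outputs: \textsc{GenIncArray} applied to the lexicographically least representative should produce the canonical representative of its IA class. Suppose $w$ is lexicographically least among its rotations; it starts with $0$. Lexicographic order on strings of the form $\sigma(\underline{t})$ corresponds to \emph{reverse} lexicographic order on the IAs, because a smaller first run means the next $0$ arrives earlier, making the string smaller. So the least necklace representative maps to the lexicographically \emph{largest} rotation of the IA. For example, \texttt{001101} gives $\langle 0,2,1\rangle$, whereas its class also contains the lex-smaller rotation $\langle 0,2,1\rangle$ vs $\langle 1,0,2\rangle$; here the outputs happen to coincide, but in general they need not. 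My plan is therefore to prove the bijection at the level of equivalence classes, as above, and to note that \textsc{GenIncArray} returns a fixed, deterministic representative of each class. Since Lemma~\ref{lemma:same-sets} and Theorem~\ref{thm:main2} show that any representative of the class generates the same coalitions, the choice of representative is immaterial for N-DCA. If the stated notion of canonical representative must be matched literally, I would instead compare strings under the order $1<0$, or else redefine ``canonical'' as the image of the necklace representative.
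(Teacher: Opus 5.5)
Your class-level argument (invariance under rotation, the inverse $\sigma$, and $\bar\sigma\circ\bar\rho=\mathrm{id}$, $\bar\rho\circ\bar\sigma=\mathrm{id}$) is sound and matches steps (i)--(iii) of the paper's proof. The gap is in your final paragraph, where you claim that lexicographic order on strings $\sigma(\underline{t})$ corresponds to \emph{reverse} lexicographic order on IAs. This is false, and your own heuristic shows the opposite: ``a smaller first run means the next $0$ arrives earlier, making the string smaller'' says a smaller entry gives both a lex-smaller IA and a lex-smaller string.

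To see it precisely, suppose $\underline{t}\neq\underline{u}$ first differ at index $j$ with $t_j<u_j$. Then $j\le s-2$, because both sum to $n-s$. The strings $\sigma(\underline{t})$ and $\sigma(\underline{u})$ agree on their first $j+1+\sum_{i\le j}t_i$ positions. At the next position $\sigma(\underline{t})$ has the $0$ that opens block $j+1$, while $\sigma(\underline{u})$ still has a $1$. Hence $\sigma(\underline{t})<_{\mathrm{lex}}\sigma(\underline{u})$.

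Your own example confirms this rather than refuting it. $\langle 0,2,1\rangle$ is the lexicographically least element of $\{\langle 0,2,1\rangle,\langle 2,1,0\rangle,\langle 1,0,2\rangle\}$. Likewise \texttt{001011} $\mapsto\langle 0,1,2\rangle$ and \texttt{000101} $\mapsto\langle 0,0,1,1\rangle$ are least in their classes.

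Because of this error, your proposal does not prove the statement as posed. $\mathcal{E}(n,s)$ is the set of lexicographically least representatives (Definition~\ref{def:ia-equivalence}), so you must show that \textsc{GenIncArray} applied to the least necklace string actually lands in $\mathcal{E}(n,s)$. Retreating to a class-level bijection, or redefining ``canonical'' as the image of the encoding, changes the statement. Switching to the order $1<0$ is unnecessary.

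The missing step is short, and it is the paper's step (iv):
\begin{itemize}
\item Every rotation starting with $1$ exceeds every rotation starting with $0$.
\item The rotations starting with $0$ are exactly $\sigma$ applied to the cyclic shifts of $\underline{t}$.
\item $\sigma$ is strictly order-preserving.
\end{itemize}
So the least necklace rotation is $\sigma$ of the least cyclic shift. This means \textsc{GenIncArray} already outputs the canonical IA, with no separate normalisation needed. With that added, your argument is complete and coincides with the paper's.
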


\begin{proof}
Consider a two-colour necklace of length $n$ with $s$ beads
of colour $0$ (white) and $n - s$ beads of colour $1$ (black),
where $1 \leq s \leq n$.%
\footnote{The boundary cases $s = 0$ (all black, empty
coalition) and $s = n$ (all white, grand coalition) each
consist of a single necklace and a single IA, and the
correspondence for these cases is trivial.}
Its lexicographically smallest representative is a binary
string $a = a_1 a_2 \cdots a_n$.  Since $0 < 1$, any
rotation starting with a $1$ is lexicographically larger
than any rotation starting with a $0$; hence $a_1 = 0$
for the canonical representative of any necklace containing
at least one bead of each colour.

\smallskip\noindent\textbf{Encoding.}\quad
We define the run-length encoding $\Phi$ as follows.
Because $a_1 = 0$, the string has the form
\begin{equation}\label{eq:string-structure}
    a \;=\; \underbrace{0\,\underbrace{1 \cdots 1}_{t_0}}
    \;\underbrace{0\,\underbrace{1 \cdots 1}_{t_1}}
    \;\cdots\;
    \underbrace{0\,\underbrace{1 \cdots 1}_{t_{s-1}}}
\end{equation}
where $t_i \geq 0$ is the number of consecutive $1$s
following the $(i{+}1)$-th~$0$.  The encoding produces
$\Phi(a) = \underline{t} = \langle t_0, t_1, \ldots,
t_{s-1} \rangle$, which is exactly the output of
\textsc{GenIncArray}$(n, a)$.

We establish four properties:
\begin{enumerate}[label=(\roman*)]
\item \emph{$\underline{t}$ is a valid IA.}
Each $t_i \geq 0$ counts black beads between successive
white beads, and $\sum_{i=0}^{s-1} t_i = n - s$ since the
$n - s$ black beads are partitioned exhaustively among the
$s$ gaps.  Hence $\underline{t}$ satisfies
Definition~\ref{def:increment-array}.

\item \emph{The encoding is invertible.}
Given any valid IA
$\underline{t} = \langle t_0, \ldots, t_{s-1} \rangle$ of
size $s$ with $\sum t_i = n - s$, the string
$a = 0\,1^{t_0}\,0\,1^{t_1} \cdots 0\,1^{t_{s-1}}$ is a
binary string of length $s + \sum t_i = n$ with exactly $s$
zeros.  By inspection, $\Phi(a) = \underline{t}$.
Furthermore, the structure~\eqref{eq:string-structure}
shows that $a$ is the \emph{unique} string starting with $0$
that maps to $\underline{t}$ under $\Phi$: the positions of
the $0$s are completely determined by the gap counts.
Hence $\Phi$ is a bijection between binary strings of the
form~\eqref{eq:string-structure} and valid IA tuples of
size~$s$ summing to $n - s$.

\item \emph{Rotation invariance.}
A cyclic rotation of a binary string by one position shifts
the starting point of the circular scan, cyclically permuting
which $0$ is encountered first.  This applies a circular
shift to the gap counts, producing an IA related by
$\approx$ (Definition~\ref{def:ia-equivalence}).
Conversely, any circular shift of an IA corresponds to
starting the run-length encoding from a different $0$, which
is a rotation of the string.  Hence necklace equivalence
under rotation coincides exactly with IA equivalence
under $\approx$.

\item \emph{Lexicographic order is preserved.}
Let $\underline{t}$ and $\underline{u}$ be two distinct IAs
of the same size $s$, and let $a$ and $b$ be their
corresponding strings under~\eqref{eq:string-structure}.
Let $j$ be the first index at which $\underline{t}$ and
$\underline{u}$ differ, i.e.\ $t_i = u_i$ for $i < j$ and
$t_j \neq u_j$.  Without loss of generality, suppose
$t_j < u_j$.  The strings $a$ and $b$ agree on the first
$j + 1 + \sum_{i=0}^{j-1} t_i + t_j$ positions (through
the $t_j$~ones following the $(j{+}1)$-th zero in $a$).
At the next position, $a$ has a $0$ (the start of the next
block) while $b$ has a $1$ (since $u_j > t_j$ means its
run of $1$s has not yet ended).  Hence
$a <_{\text{lex}} b$.  Since the argument is symmetric,
$\underline{t} <_{\text{lex}} \underline{u}$ if and only if
$a <_{\text{lex}} b$: the encoding $\Phi$ is
order-preserving.

It follows that the lexicographically smallest rotation of
a necklace string (the canonical necklace representative)
maps under $\Phi$ to the lexicographically smallest circular
shift of the IA (the canonical IA representative).
\end{enumerate}

\noindent
By (i) and (ii), $\Phi$ maps each canonical necklace
representative to a valid IA, and every valid IA arises
from exactly one such string.  By~(iii), necklace equivalence
classes under rotation correspond to IA equivalence classes
under~$\approx$.  By~(iv), canonical representatives map to
canonical representatives.  Hence $\Phi$ restricts to a
bijection $\mathcal{N}(n,s) \to \mathcal{E}(n,s)$.
\end{proof}

\subsection{The N-DCA Algorithm}
\label{sec:ndcva_algorithm}

We can now present the complete \emph{Necklace-based Distributed
Coalition Value Calculation Allocation} (N-DCA) algorithm
that each agent executes independently to determine its own
coalition value calculation allocation.  The algorithm
combines necklace generation (Section~\ref{sec:bijection})
with the rotated designation scheme
(Section~\ref{sec:rotated-designation}) to produce
the set of coalitions $CV_x$ that agent $x$ is responsible
for evaluating.

The \textsc{N-DCA} algorithm (Figure~\ref{algorithm:ndcva}) requires three other supporting procedures in addition to \textsc{FKM} (Figure~\ref{algorithm:fkm}).  The first, \textsc{GenIncArray} (Figure~\ref{algorithm:genIncArr}), converts a necklace to its
canonical IA.  The second, \textsc{Period} (Figure~\ref{algorithm:period}) determines the period of an IA (Definition~\ref{def:period}); and the third, \textsc{GenCoalition} (Figure~\ref{algorithm:coalgen}) generates a coalition from a starting agent and an IA (Definition~\ref{def:coalition-generation}).

\subsubsection{Period Detection}
\label{sec:period_detection}

The designation scheme requires knowing whether each
canonical IA is aperiodic ($\pi(\underline{t}) = s$) or
periodic ($\pi(\underline{t}) < s$), and in the periodic
case, the value of the stride
$\varpi(\underline{t}) = n\pi(\underline{t})/s$.  The period
can be determined by checking whether $\underline{t}$
consists of $s/p$ identical blocks of length $p$, testing
divisors $p$ of $s$ in increasing order and returning the
first match.
 
\begin{figure}[h]
\scriptsize
\begin{algorithmic}[1]
\Procedure{Period}{$\underline{t}$, $s$}
    \Comment{Compute period $\pi(\underline{t})$}
\For{each divisor $p$ of $s$ in increasing order} \Comment{try candidate period}
    \State $\mathit{periodic} \gets \mathbf{true}$ \Comment{assume periodic until mismatch}
    \For{$k \gets p$ to $s-1$} \Comment{compare with block counterpart}
        \If{$t[k] \neq t[k \bmod p]$}
            \State $\mathit{periodic} \gets \mathbf{false}$ \Comment{mismatch: $p$ not the period}
            \State \textbf{break}
        \EndIf
    \EndFor
    \If{$\mathit{periodic}$} \Comment{all elements matched}
        \State \Return $p$
    \EndIf
\EndFor
\EndProcedure
\end{algorithmic}
\caption{Algorithm to compute the period
$\pi(\underline{t})$ of an IA $\underline{t}$ of size $s$ (Definition~\ref{def:period}).
The procedure tests divisors of $s$ in increasing order
(line~2), checking whether $\underline{t}$ consists of
$s/p$ identical copies of its first $p$ elements
(lines~4--7), and returns the first $p$ for which this
holds (line~10).  For an aperiodic IA, no proper divisor
matches and the procedure returns $s$.}
\label{algorithm:period}
\end{figure}
 
The procedure \textsc{Period} is given in Figure~\ref{algorithm:period}.
For each candidate period~$p$ (line~2), the inner loop
(lines~4--7) checks whether every element beyond the first
block matches its counterpart at offset $k \bmod p$ (line~5).
A single mismatch (line~6) breaks immediately, and if all
elements agree, the period $p$ is returned (line~10).
In the worst case (when $\underline{t}$ is aperiodic), all
divisors of $s$ are tested, but the total number of element
comparisons is $O(s)$ as early mismatches will cause the inner
loop to terminate quickly.  For periodic IAs, the procedure
terminates as soon as the shortest repeating block is found.

\subsubsection{Coalition Generation}
\label{sec:coalition_gen}

The procedure for generating the coalition, \textsc{GenCoalition}, appears in Figure~\ref{algorithm:coalgen}.
Given a starting agent $x$ and an IA $\underline{t} = \langle t_0, t_1, \ldots, t_{s-1} \rangle$,
the coalition $C(x, \underline{t})$ is constructed by computing the cumulative integer increments $\varphi_i$ (Equation~\eqref{eqn:offset}) on line 5 and applying them modulo $n$ on line 6 (Definition~\ref{def:coalition-generation}). The  agent $x$ is included as the first element of the coalition (line 2)

\begin{figure}[h]
\scriptsize
\begin{algorithmic}[1]
\Procedure{GenCoalition}{$x$, $\underline{t}$, $s$, $n$}
    \Comment{Generate $C(x, \underline{t})$}
\State $C[1] \gets x$
\State $\varphi \gets 0$
\For{$i \gets 2$ to $s$}
    \State $\varphi \gets \varphi + t[i-2] + 1$
    \State $C[i] \gets ((x - 1 + \varphi) \bmod n) + 1$
        \Comment{Map to $\{1,\ldots,n\}$}
\EndFor
\State \Return $C$
\EndProcedure
\end{algorithmic}
\caption{Algorithm to generate the coalition
$C(x, \underline{t})$ of size $s$ for agent $x$ given an IA
$\underline{t}$, with $n$ agents.  The coalition members are
determined using the cumulative integer increments
(Equation~\eqref{eqn:offset}, Definition~\ref{def:coalition-generation}).}
\label{algorithm:coalgen}
\end{figure}

\subsubsection{The Complete Algorithm}
\label{sec:complete_algorithm}

\begin{figure}[t]
\scriptsize
\begin{algorithmic}[1]
\Procedure{N-DCA}{$x$, $n$}
    \Comment{Coalition allocation for agent $x$}
\State $h[1 \ldots n] \gets 0$
    \Comment{Per-size cumulative offsets}
\For{each necklace $a$ generated by
    \textsc{FKM}($2$, $n$)}
    \State $\underline{t} \gets
        \textsc{GenIncArray}(n, a)$
    \State $s \gets |\underline{t}|$
        \Comment{Coalition size = number of white beads}
    \If{$s = 0$}
        \State \textbf{continue}
        \Comment{Skip empty coalition}
    \EndIf
    \State $\pi \gets \textsc{Period}(\underline{t}, s)$
    \If{$\pi = s$}
        \Comment{Aperiodic: agent $x$ is designated}
        \State \textsc{GenCoalition}($x$,
            $\underline{t}$, $s$, $n$)
    \Else
        \Comment{Periodic: apply designation test}
        \State $d \gets n \cdot \pi \;/\; s$
            \Comment{Stride $\varpi(\underline{t})$}
        \If{$(x - 1 - h[s]) \bmod n < d$}
            \State \textsc{GenCoalition}($x$,
                $\underline{t}$, $s$, $n$)
        \EndIf
        \State $h[s] \gets h[s] + d$
            \Comment{Advance offset for size $s$}
    \EndIf
\EndFor
\EndProcedure
\end{algorithmic}
\caption{The N-DCA algorithm for agent $x$ with $n$ agents.
Canonical IAs are generated via two-colour necklaces produced
by \textsc{FKM} (Figure~\ref{algorithm:fkm}).  Aperiodic IAs
are evaluated by every agent; periodic IAs are evaluated only
by designated agents, as determined by the rotated designation
scheme (Definition~\ref{def:rotated-designation})  A per-size cumulative
offset $h[s]$ is maintained to ensure that the per-size load
balance of Theorem~\ref{thm:persize-balance} is achieved.}
\label{algorithm:ndcva}
\end{figure}

The full N-DCA algorithm is presented in
Figure~\ref{algorithm:ndcva}.  Each agent $x \in \{1,\ldots,n\}$
executes this procedure independently, using only its own
identifier $x$ and the shared parameter $n$, thereby requiring no
communication or shared state between agents.

The \textsc{FKM} procedure generates all two-colour necklaces of length $n$ in
lexicographic order (line 3).
As FKM interleaves necklaces of different densities (e.g.\ for $n = 6$, the sequence
includes necklaces with $s = 6, 5, 4, 4, 3, 4, 3, 3, 2,
\ldots$ as shown in Table~\ref{table:fkmOutput}), the
algorithm maintains a separate cumulative offset $h[s]$ for
each coalition size $s$ (line 2).  
The body of the main loop (lines 4-15) realises the \textsc{ProcessNecklace} callback of the FKM algorithm (Figure~\ref{algorithm:fkm}): each necklace is converted to its canonical IA via \textsc{GenIncArray} (line 4), and then based on its period (line 8), it is subjected to the designation test, which determines whether agent $x$ should evaluate it (lines 9-15).  

For aperiodic IAs ($\pi = s$), every starting agent produces
a distinct coalition (Section~\ref{sec:periodic}), and
every agent is implicitly designated and generates its
coalition directly (line 10).  For periodic IAs ($\pi < s$),
$\mu = s/\pi$ agents would generate the same coalition, and therefore
only a transversal of $d = \varpi(\underline{t}) = n/\mu$
agents should evaluate it.  The designation test
(Equation~\ref{eq:designation-test}) determines whether agent $x$
falls within the current designation window for size $s$; if
so, agent $x$ generates its coalition (lines 13-14).  The offset $h[s]$ is
then advanced by $d$, rotating the window for the next
periodic IA of the same size (line 15).

As all agents execute the same \textsc{FKM} procedure with
the same parameter $n$, they encounter the necklaces in the
same deterministic order and compute the same sequence of
offsets.  The per-size offset array ensures that the designation windows
for each coalition size rotate independently, preserving the
per-size load balance of Theorem~\ref{thm:persize-balance}: for any
fixed $s$, the number of coalitions assigned to any two
agents differs by at most one.

The aggregate load balance of Theorem~\ref{thm:aggregate-balance}
can also be achieved by replacing the per-size offset array
$h[s]$ with a single global offset $H$ that is advanced
after every periodic IA, regardless of size.  This trades the
per-size guarantee for the stronger aggregate guarantee
that the \emph{total} number of coalition value calculations
assigned to any two agents, across all coalition sizes,
differs by at most one.  Both variants have identical
computational cost; the choice depends on whether per-size
or aggregate balance is preferred for the application at
hand.

\subsection{Complexity Analysis}
\label{sec:complexity}

The per-agent computational cost of N-DCA is determined by
the cost of generating necklaces and processing each one.  We
consider each component in turn.

The \textsc{FKM} procedure generates all $N_2(n)$ two-colour
necklaces of length $n$ in constant amortised time (CAT),
i.e.\ $O(1)$ per necklace and $O(N_2(n))$ in total
\citep{RUSKEY1992414}.  For each necklace, three operations
are performed: \textsc{GenIncArray} converts the necklace
string to an IA in $O(n)$ time (a single pass over the $n$
beads); \textsc{Period} determines the period of the IA in
$O(s)$ time in the worst case; and the designation test
requires $O(1)$ time (a single modular comparison).  For
designated agents, \textsc{GenCoalition} generates the
coalition in $O(s)$ time.  Since $s \leq n$, the per-necklace
cost is dominated by the $O(n)$ cost of \textsc{GenIncArray}.

The total per-agent time complexity is therefore $O(n \cdot N_2(n))$.  
Using the formula for the total number of two-colour necklaces (Equation~\eqref{eqn:totalnecklaces}), $N_2(n) \sim 2^n/n$ asymptotically. Since each agent processes all $N_2(n)$ necklaces and performs $O(n)$ work per necklace (scanning the bead sequence, detecting the period, and generating the coalition), the per-agent cost is $O(2^n/n) \cdot O(n) = O(2^n)$.
This is optimal in the following sense: the total number of coalitions across all sizes is $2^n - 1$, and each agent must evaluate $\Theta(2^n / n)$ of them (by the load-balance guarantee of Theorem~\ref{thm:aggregate-balance}).
Since each coalition has
$O(n)$ members, the cost of merely \emph{writing down} the
assigned coalitions is already $\Omega(2^n)$.  The overhead
introduced by necklace generation and the designation scheme
is therefore asymptotically negligible.

The space requirement is $O(n)$ per agent: $O(n)$ for the
necklace string and IA, and $O(n)$ for the per-size offset
array $h[s]$.  The \textsc{FKM} procedure operates in-place
on the necklace array and requires no additional storage
beyond the current necklace.

For comparison, the DCG algorithm in the original study
generates canonical IAs by enumerating integer partitions of
$n - s$ and filtering out permutations that belong to the
same equivalence class \citep{RileyAAAI15}.  This filtering step requires
additional bookkeeping and results in wasted computation on
non-canonical permutations.  N-DCA eliminates this overhead
entirely: the FKM algorithm produces only canonical
necklaces, and the bijection of
Proposition~\ref{prop:bijection} guarantees that each
corresponds to exactly one equivalence class.  An earlier empirical
comparison by \citet{PayneIEEEWIC24} demonstrated that the resulting performance is competitive with both DCG and the
DCVC algorithm of \citet{Rahwan2007}, but only considered the range $2\le n \le 17$.
In the following section (Section~\ref{sec:evaluation}) we provide a more extensive empirical comparison of N-DCA against DCVC across a number of additional dimensions, in the range $2\le n \le 25$.

This section has established that the generation of canonical
representative IAs can be reduced to the well-studied problem
of enumerating two-colour combinatorial necklaces
(Proposition~\ref{prop:bijection}), and that the resulting
N-DCA algorithm (Figure~\ref{algorithm:ndcva}) allows each
agent to independently determine its own coalition value
calculation allocation using only its identifier $x$ and the
shared parameter $n$.  The algorithm is simple to implement,
requiring only the FKM necklace generator and three short
supporting procedures, and its per-agent cost of $O(2^n)$ is
asymptotically optimal. 


\section{Empirical Evaluation} \label{sec:evaluation}

We validate and evaluate N-DCA through two sets of evaluations; in Section~\ref{sec:DCVCevaluation} we provide a quantitative and qualitative comparison with a comparable approach (DCVC) across several dimensions.  We then focus in Section~\ref{sec:balance_evaluation} specifically on an analysis of the allocation imbalance of N-DCA that occurs through different variants of the rotated designation scheme, thus providing empirical evidence to support the worst-case bounds defined by 
Theorems~\ref{thm:persize-balance} and~\ref{thm:aggregate-balance}.

\subsection{Comparative Evaluation with DCVC}\label{sec:DCVCevaluation}

This section presents an empirical evaluation of the N-DCA algorithm.
We compare N-DCA with the DCVC
algorithm by \citet{Rahwan2007} across four dimensions: (i) overall
coalition-generation time (Section~\ref{sec:eval-runtime});
(ii) per-agent generation time (Section~\ref{sec:eval-single});
(iii) working-memory requirements (Section~\ref{sec:eval-memory}); and (iv) the
internal time profile of N-DCA's constituent operations
(Section~\ref{sec:eval-breakdown}).  These results are then situated in a
practical context by analysing how generation overhead compares with
the cost of evaluating the characteristic function
(Section~\ref{sec:eval-amortised}), and the
scalability characteristics of both algorithms are discussed
(Section~\ref{sec:eval-scalability}).  Finally, the load-balance
properties of the two offset variants are validated empirically
(Section~\ref{sec:balance_evaluation}).

\subsubsection{Experimental Methodology}
\label{sec:eval-setup}

Implementations of both the N-DCA algorithm (Section~\ref{sec:algorithm})
and the DCVC algorithm were written using the language C (thus eliminating any delays that could occur during any garbage collection processes).  The DCVC implementation follows the
basic allocation scheme of~\citet[Section~3.1]{Rahwan2007}: for each
coalition size~$s$, the $\binom{n}{s}$ coalitions are listed in
reverse-lexicographic order; each agent receives
$\lfloor \binom{n}{s}/n \rfloor$ consecutive coalitions by walking
the predecessor function from a starting coalition determined by the
Pascal-matrix index mapping; any remainder is distributed via a
rotating pointer~$\alpha$.

Both implementations were verified for correctness by
exhaustively checking (for $2 \leq n \leq 20$) that:
(i) all $2^n - 1$ non-empty coalitions are generated exactly once across
$n$~agents (completeness and non-redundancy); and 
(ii) the global
set of coalitions produced by both algorithms is identical
(i.e.\ the union of each algorithm's per-agent outputs covers exactly
the same $2^n - 1$ coalitions).
The N-DCA implementation was additionally verified to
satisfy the self-interest property: every coalition assigned to
agent~$x$ contains~$x$ as a member.  DCVC does not guarantee this
property, as its allocation is purely mechanical; dividing the
reverse-lexicographic list evenly among agents without regard to
coalition membership.  

All timing measurements use \texttt{clock\_gettime(CLOCK\_MONOTONIC)}
(nanosecond resolution).
Each experiment begins with one untimed warm-up execution to bring both
code and data into cache.
To prevent the compiler from eliminating coalition-generation code as
dead, a running XOR checksum is accumulated over all generated
coalition members and verified after each run.
To mitigate against the effect of other processes delaying the processing time, each configuration was executed $R = 1000$ times for $n \leq 20$ and
$R = 100$ times for $n > 20$; the reported results reflect the mean~$\bar{x}$
and sample standard deviation~$\sigma$ of the per-run wall-clock times.
In cases where time ratios (N-DCA~/~DCVC) are reported, we also state the
95\% confidence interval for the mean execution time, computed as
$\bar{x} \pm t_{0.025, R-1} \cdot \sigma / \sqrt{R}$, where
$t_{0.025,R-1}$ is the critical value of the Student $t$-distribution.
For $R = 1000$ this yields $t \approx 1.962$; for $R = 100$,
$t \approx 1.984$.  In all cases, the 95\% confidence intervals for
the two algorithms are non-overlapping for $n \geq 7$, confirming
that the observed differences are statistically significant.

All experiments were conducted on a 2024 MacBook Pro (M4 Max) with 36GB memory. 
It should be noted that the aim here is not to ascertain specific running times on a specific platform, but rather to compare the computational profiles of the two algorithms under identical conditions.
All non-empty coalition sizes were generated for each
set of agents, including the grand coalition.

\subsubsection{Experiment~1: Total Execution Time}
\label{sec:eval-runtime}

\begin{table}[t]
\centering
\caption{Experiment~1: mean total execution time (ms) for all
$n$~agents to generate all $2^n - 1$ coalitions, with standard
deviation and N-DCA\,/\,DCVC time ratio.  Results are averaged over
$R$~runs.}
\label{table:runtime-new}
\setlength{\tabcolsep}{8pt}
\begin{tabular}{r r r c r@{${}\pm{}$}l r@{${}\pm{}$}l c}
\toprule
& Coalitions & \multicolumn{1}{c}{IAs} & Samples & \multicolumn{4}{c}{Execution Time} & Ratio \\
$n$ & \multicolumn{1}{c}{$2^n-1$} & \multicolumn{1}{c}{$N_2(n)+1$} & $R$ &
\multicolumn{2}{c}{N-DCA (ms)} &
\multicolumn{2}{c}{DCVC (ms)} &
$\frac{\text{N-DCA}}{\text{DCVC}}$ \\
\midrule
 5 &          31 &             7 & 1000 & 0.001 & 0.000 & 0.000 & 0.000 &  2.2 \\
 8 &         255 &            35 & 1000 & 0.005 & 0.001 & 0.002 & 0.000 &  2.6 \\
10 &       1\,023 &          107 & 1000 & 0.021 & 0.002 & 0.008 & 0.001 &  2.7 \\
12 &       4\,095 &          351 & 1000 & 0.082 & 0.006 & 0.028 & 0.002 &  2.9 \\
14 &      16\,383 &       1\,181 & 1000 & 0.355 & 0.021 & 0.116 & 0.007 &  3.1 \\
15 &      32\,767 &       2\,191 & 1000 & 0.760 & 0.043 & 0.234 & 0.011 &  3.3 \\
17 &     131\,071 &       7\,711 & 1000 & 4.554 & 0.163 & 0.972 & 0.031 &  4.7 \\
20 &   1\,048\,575 &     52\,487 & 1000 & 47.46 & 3.07 & 8.29 & 0.17 &  5.7 \\
22 &   4\,194\,303 &    190\,745 &  100 & 200.7 & 1.3 & 34.7 & 0.5 &  5.8 \\
25 &  33\,554\,431 & 1\,342\,183 &  100 & 1701.6 & 18.4 & 294.6 & 3.3 &  5.8 \\
\bottomrule
\end{tabular}
\end{table}

The complexity analysis of Section~\ref{sec:complexity} establishes
that N-DCA has a per-agent time complexity of
$\mathcal{O}(2^n)$, yielding an aggregate complexity of
$\mathcal{O}(n \cdot 2^n)$ when all agents are run sequentially.
However, the asymptotic analysis does not reveal the constant
factors involved, nor does it confirm that the predicted scaling
behaviour is reached for pragmatic values of $n$.  Therefore the aim of this
experiment is to empirically validate the theoretical claims of Section~\ref{sec:complexity} in that the algorithm exhibits the expected exponential growth, and to quantify the
constant-factor difference between N-DCA and DCVC that is not apparent given the $\mathcal{O}$-notation.
We achieve this by measuring the total wall-clock time for \emph{all} $n$~agents to
generate all $2^n - 1$ coalitions, which corresponds to a centralised
scenario in which an orchestrator runs each agent's generation
procedure sequentially, and represents the total computational work
required by each approach.  Even in a distributed deployment, where
agents run in parallel, the total work determines the aggregate
resource consumption and is therefore a meaningful basis for
comparison.
 
Table~\ref{table:runtime-new} presents the number of (non-empty) coalitions, number of IAs, samples generated, the execution times ($\bar{x}$ and $\sigma$) for each algorithm, and their ratio given a selection of different sizes of agents in the range $2 \leq n \leq 25$.
Both algorithms exhibit execution times that
grow linearly with the number of coalitions (which is itself
exponential in~$n$), consistent with the $\mathcal{O}(n \cdot 2^n)$
aggregate bounds.  However, the constant factors differ; DCVC is
faster than N-DCA at all tested values of~$n$, with the ratio
increasing from approximately $2{\times}$ at small~$n$ to
approximately $5.8{\times}$ at $n = 25$.  This widening ratio is
consistent with preliminary results reported by~\citet{PayneIEEEWIC24} for
$n \leq 17$, and the extended range suggests that the ratio
stabilises beyond $n \approx 20$.  The source of this
constant-factor difference is analysed in detail in
Section~\ref{sec:eval-single}, where a per-coalition operation-count
analysis attributes it to the $\mathcal{O}(n)$ cost of
\textsc{GenIncArray}'s necklace scan.
The number of IAs reported in column 3 differs by $1$ to the number of two-colour necklaces generated for each $n$.\footnote{The sequence of values for two-colour necklaces also appears in The Online Encyclopedia of Integer Sequences (OEIS) \cite[A052823]{oeis}, which can be found at \url{https://oeis.org/A052823}.} These counts include the grand coalition (represented by monochrome necklaces of only white beads, which explains the difference of 1), but omit \emph{empty} IAs, where $s=0$ (i.e.\ monochrome necklaces of black beads).

\begin{figure}[t]
    \centering
    \includegraphics[width=0.8\linewidth]{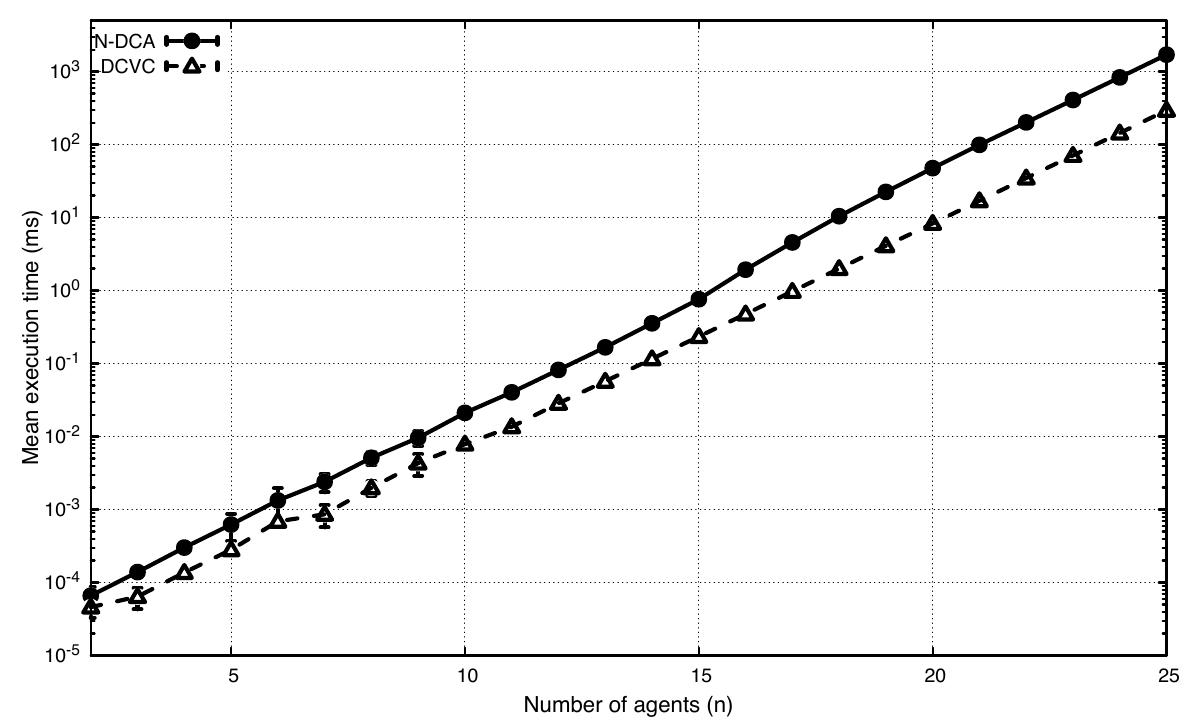}
    \caption{Experiment~1: the mean total execution time for generating all coalitions by N-DCA and DCVC for each of the 
    $n$~agents to generate all $2^n-1$ coalitions. Note that the y-axis is logarithmic, and error bars show $\pm 1$ SD over $R$ runs.}
    \label{fig:runtime-total}
\end{figure}

The mean execution times have been plotted against the different numbers of agents (together with error bars showing $\pm 1$ SD over the $R$ runs) in Figure~\ref{fig:runtime-total} (note that the execution times are plotted on a logarithmic scale).
Both curves are approximately linear on this scale, confirming that the growth rate is dominated by the number of coalitions.  
These results are also consistent with the description and performance of DCVC \citep{Rahwan2007} and the fact that the complexity of the FKM algorithm was shown to have a bound of $\mathcal{O}(N(k,n))$ on the total time required to generate all necklaces.
The vertical separation between the curves corresponds to the constant-factor difference, which reflects the ratio as given in Table~\ref{table:runtime-new}.

\subsubsection{Experiment~2: Single-Agent Execution Time}
\label{sec:eval-single}
 
In a distributed deployment, each agent generates its own coalition
share independently.  The relevant metric is therefore the per-agent
execution time.  In this experiment, we measure the time for agent~$1$
to generate its share of the coalitions.
For the purposes of this experiment, the choice of agent~$1$ is arbitrary: for N-DCA, every agent performs
the same FKM enumeration and differs only in the designation-test
outcomes, which are negligible in cost.  This contrasts with DCVC, where every agent
performs the same number of predecessor steps, differing only in
starting position.
Note that for N-DCA, although each agent iterates through all
necklaces via FKM, it only evaluates (i.e.\ calls
\textsc{GenCoalition} for) its designated share.
 
\begin{table}[t]
\centering
\caption{Experiment~2: mean execution time (ms) for agent~$1$ to
generate its share of coalitions, with standard deviation and
N-DCA\,/\,DCVC time ratio.  The allocation columns report the
exact number of coalitions assigned to agent~$1$ under each
algorithm's designation scheme.}
\label{table:runtime-single}
\setlength{\tabcolsep}{8pt}
\begin{tabular}{r r r r r r@{${}\pm{}$}l r@{${}\pm{}$}l r}
\toprule
$n$ & \multicolumn{1}{c}{N-DCA} & \multicolumn{1}{c}{DCVC} &
\multicolumn{1}{c}{$\Delta$} & $R$ &
\multicolumn{2}{c}{N-DCA (ms)} &
\multicolumn{2}{c}{DCVC (ms)} &
Ratio \\
& \multicolumn{1}{c}{alloc.} & \multicolumn{1}{c}{alloc.} & & & & & & & 
$\frac{\text{N-DCA}}{\text{DCVC}}$ \\
\midrule
 5 &           7 &           7 &  0 & 1000 & 0.000 & 0.000 & 0.000 & 0.000 &  --- \\
 8 &          34 &          32 &  2 & 1000 & 0.001 & 0.000 & 0.000 & 0.000 &  2.5 \\
10 &         105 &         103 &  2 & 1000 & 0.002 & 0.000 & 0.001 & 0.000 &  2.8 \\
12 &         346 &         342 &  4 & 1000 & 0.007 & 0.000 & 0.002 & 0.000 &  2.7 \\
14 &       1\,175 &       1\,171 &  4 & 1000 & 0.025 & 0.001 & 0.009 & 0.001 &  2.9 \\
15 &       2\,189 &       2\,185 &  4 & 1000 & 0.050 & 0.004 & 0.016 & 0.001 &  3.1 \\
17 &       7\,711 &       7\,711 &  0 & 1000 & 0.272 & 0.023 & 0.060 & 0.005 &  4.6 \\
20 &      52\,435 &      52\,429 &  6 & 1000 & 2.323 & 0.053 & 0.435 & 0.012 &  5.3 \\
22 &     190\,655 &     190\,651 &  4 &  100 & 9.131 & 0.119 & 1.651 & 0.051 &  5.5 \\
25 &   1\,342\,181 &   1\,342\,178 &  3 &  100 & 67.48 & 0.57 & 12.20 & 0.28 &  5.5 \\
\bottomrule
\end{tabular}
\end{table}
 
In Table~\ref{table:runtime-single}, the
allocation columns show the exact number of coalitions assigned to
agent~$1$ under each algorithm.  The two counts differ slightly
(column~$\Delta$), reflecting the different distribution mechanisms:
N-DCA's rotated designation scheme
(Definition~\ref{def:rotated-designation}) assigns coalitions based
on necklace periodicity and per-size offsets, whereas DCVC divides
the reverse-lexicographic coalition list evenly and distributes the
remainder via a rotating pointer~$\alpha$.  The differences are
small, at most $\kappa(n)$ coalitions, where $\kappa(n)$ denotes
the number of coalition sizes for which $n \nmid \binom{n}{s}$, and
have no measurable effect on the timing comparison, since even at
$n = 25$ the allocation difference of~$3$ represents less than
$0.001\%$ of the total.
Although both algorithms generate exactly $2^n - 1$ coalitions in total
across all~$n$ agents, agent~$1$ receives a slightly higher allocation under
N-DCA than under DCVC, which is then offset by other agents
receiving a slightly lower allocation.  For example, at $n = 20$, agent~$1$
is allocated $52{,}435$ coalitions under N-DCA versus $52{,}429$
under DCVC ($\Delta = +6$), while agent~$20$ receives $52{,}425$
versus $52{,}428$ ($\Delta = -3$).  The aggregate spread across
agents is wider for N-DCA (up to $\kappa(n)$, compared with at
most~$1$ for DCVC), consistent with the per-size versus aggregate
balance guarantees of Theorems~\ref{thm:persize-balance}
and~\ref{thm:aggregate-balance} respectively. The relationship
between $\kappa(n)$ and the observed allocation spread is examined
in detail in Section~\ref{sec:balance_evaluation}.

The single-agent ratios closely track those of Experiment~1,
confirming that the performance difference arises from per-coalition
overhead rather than from any inter-agent scheduling effect.
Figure~\ref{fig:ratio} plots the time ratio (N-DCA\,/\,DCVC) as a
function of~$n$ for both experiments.  The ratio increases steadily
from approximately $2{\times}$ at $n = 5$ to approximately
$5.5$--$5.8{\times}$ at $n = 25$, stabilising beyond $n \approx 20$.
 
The growing ratio is explained by the per-coalition operation counts
of the two algorithms.  For each necklace, N-DCA performs a full
$\mathcal{O}(n)$ scan in \textsc{GenIncArray} (scanning all $n$~beads) followed
by period detection (testing divisors of the coalition size~$s$).
At $n = 20$, an operation-count analysis shows that agent~1 performs
approximately $1{,}049{,}760$ \textsc{GenIncArray} scans and $298{,}797$ period
comparisons to generate $52{,}435$ coalitions---roughly
$25.7$~operations per coalition.  DCVC, by contrast, generates each
successive coalition via a single predecessor step that scans backwards
through the coalition array until it finds an incrementable position.
At $n = 20$, this averages approximately $2.6$~scan steps per
predecessor call, giving roughly $2.6$~operations per coalition.
The resulting operation-count ratio of approximately $9.8{\times}$
at $n = 20$ is consistent with the observed wall-clock ratio of
$5.3$--$5.7{\times}$ (the gap between the two ratios reflects
differences in the cost per operation: DCVC's predecessor involves
comparison and branching, while \textsc{GenIncArray}'s inner loop is a
simpler scan).
As~$n$ grows, \textsc{GenIncArray}'s $\mathcal{O}(n)$ scan cost per necklace
increases the per-coalition overhead for N-DCA, while DCVC's
per-predecessor scan cost remains approximately constant, causing
the ratio to widen.
 
\begin{figure}[t]
    \centering
    \includegraphics[width=0.8\linewidth]{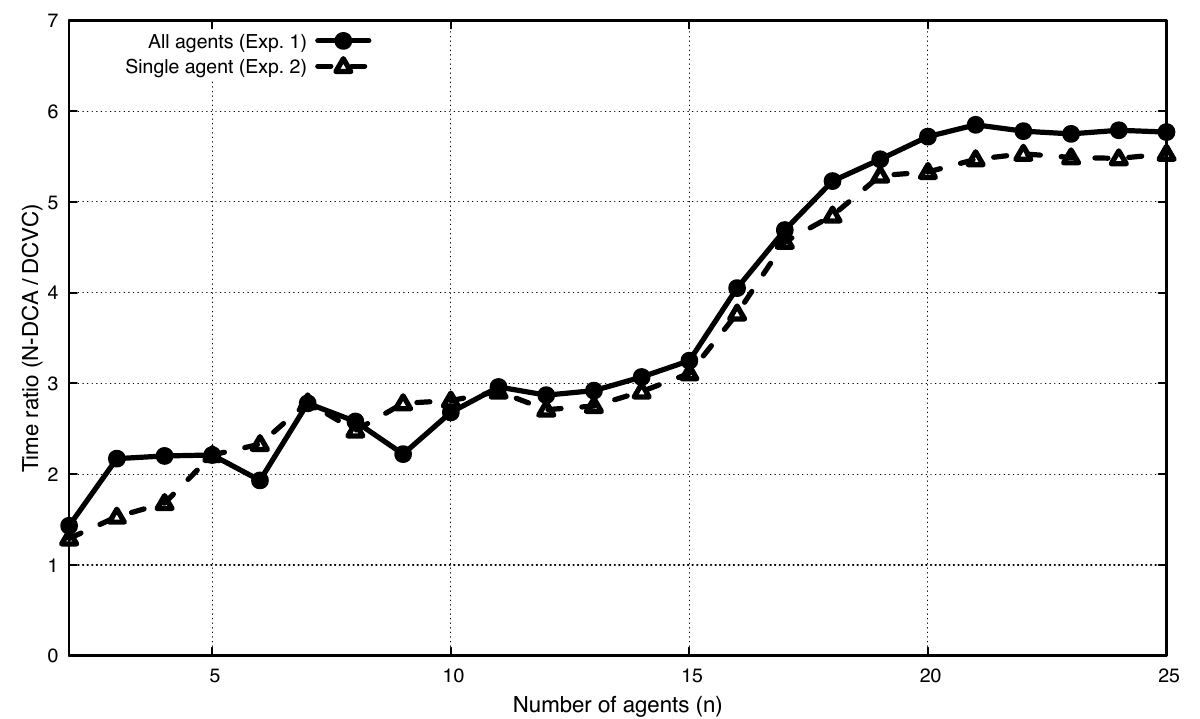}
    \caption{Time ratio (N-DCA\,/\,DCVC) for Experiments~1 and~2.
    Values above~1 indicate DCVC is faster.
    The ratio stabilises near $5.5$--$5.8{\times}$ for $n \geq 20$.}
    \label{fig:ratio}
\end{figure}

\subsubsection{Experiment~3: Working Memory}
\label{sec:eval-memory}

The timing experiments of Sections~\ref{sec:eval-runtime}
and~\ref{sec:eval-single} quantify the computational cost of
each algorithm, but do not address their memory requirements.
In resource-constrained settings, such as embedded multi-agent
systems or agents running on low-power devices, the per-agent
working-memory footprint may be as important as execution speed.
Both algorithms are known to be $\mathcal{O}(n)$ in space, but
the constant factors and the structural characteristics of that
usage (e.g.\ iterative versus recursive) can differ
significantly.  This experiment compares the per-agent
working-memory footprints of N-DCA and DCVC analytically.

Both algorithms use only stack-allocated arrays with $\mathcal{O}(n)$ space, and neither require (or within the implementation, perform) heap allocation.
Therefore, to compare their memory footprints, we analytically count the working variables necessary to support each algorithm's per-agent execution.
N-DCA maintains three arrays: (i) the FKM necklace array~$a[n{+}1]$; (ii) the increment array~$t[n]$; and (iii) the per-size offset array~$h[n{+}1]$, together with a small number of scalar variables, giving a total of approximately $3n + 12$ integers.
DCVC maintains two coalition arrays, $M[n]$ and $M_{\text{prev}}[n]$, together with scalar variables for the index arithmetic.
In addition, the index-to-coalition mapping algorithm (Figure~2 in~\cite{Rahwan2007}), which recovers a coalition from its position in the ordered list~$L_s$ by iteratively decomposing the index through a precomputed Pascal matrix (see Equation~(3) and Figure~3 in~\cite{Rahwan2007}), uses a call stack of up to $n$~frames, each containing local integers and a long-integer accumulator.

\begin{table}[t]
\centering
\caption{Experiment~3: analytical per-agent working memory (bytes)
for N-DCA and DCVC.  Both are $\mathcal{O}(n)$, whereas N-DCA uses
approximately one third of DCVC's footprint.}
\label{table:memory}
\setlength{\tabcolsep}{10pt}
\begin{tabular}{r c c c}
\toprule
$n$ & N-DCA (bytes) & DCVC (bytes) & DCVC\,/\,N-DCA \\
\midrule
 5 &  108 &  264 & 2.4 \\
10 &  168 &  464 & 2.8 \\
15 &  228 &  664 & 2.9 \\
20 &  288 &  864 & 3.0 \\
25 &  348 & 1064 & 3.1 \\
\bottomrule
\end{tabular}
\end{table}

Table~\ref{table:memory} reports the analytical byte counts.
N-DCA uses approximately one third of the working memory of DCVC
across the range tested.  The ratio approaches~$3$ as $n$ grows
because DCVC's recursive call stack dominates at higher values of $n$.  In absolute terms,
both footprints are negligible (under 1.1\,KB even at $n = 25$), and thus
the practical significance lies not in the byte counts themselves but
in the structural difference: N-DCA is purely iterative and requires
no recursion, which may be advantageous in resource-constrained
settings such as embedded multi-agent systems.  Empirical measurements
using \texttt{getrusage} confirmed that neither algorithm performs
hidden heap allocation, and the process resident set size remained
constant before and after execution at $n = 25$.

\subsubsection{Experiment~4: Component Time Profile}
\label{sec:eval-breakdown}

To understand \emph{where} the time is spent within each algorithm,
we decompose the per-agent execution time into its constituent
operations.  Direct per-operation timing (e.g.\ calling
\texttt{clock\_gettime} around each operation) introduces
substantial measurement overhead when operations are fine-grained; for example 
at $n = 25$, N-DCA processes over 1.3~million necklaces, and the
overhead of 5~timing calls per necklace could inflate the time measurements.

We therefore use a \emph{subtraction method}: for each
component depth level $\ell \in \{0, 1, 2, 3, 4\}$, we run a modified version
of N-DCA that executes only the first $\ell$~stages of the pipeline,
and time each run with a single pair of
\texttt{clock\_gettime} calls:
\begin{description}[labelindent=2em,topsep=2pt]
    \item [$\ell = 0$:] \textsc{FKM} necklace stepping only
    (enumerate necklaces without processing);
    \item [$\ell = 1$:] \textsc{FKM} + \textsc{GenIncArray};
    \item [$\ell = 2$:] \textsc{FKM} + \textsc{GenIncArray} + period detection;
    \item [$\ell = 3$:] \textsc{FKM} + \textsc{GenIncArray} + period detection + designation test;
    \item [$\ell = 4$:] full N-DCA (all components including \textsc{GenCoalition}).
\end{description}
The time attributable to each component is therefore $T_\ell - T_{\ell-1}$. 
The results are reported in Table~\ref{table:breakdown}.  
For N-DCA, \textsc{GenIncArray} is the dominant component, accounting for approximately
59\% of total time at large~$n$.  Period detection accounts for
approximately 25\%, while the FKM necklace stepping itself (the
combinatorial enumeration engine) is comparatively fast at under
6\% for $n \geq 20$.  The designation test is negligible ($\approx 1\%$),
confirming that the rotated-designation scheme
(Definition~\ref{def:rotated-designation}) adds minimal overhead.
\textsc{GenCoalition} accounts for approximately 9\% of total time.

\begin{table}[t]
\centering
\caption{Experiment~4: N-DCA component time profile for a single
agent (agent~1), measured by the subtraction method.  Percentages
indicate the fraction of total single-agent time.}
\label{table:breakdown}
\setlength{\tabcolsep}{10pt}
\begin{tabular}{r c c c c c c}
\toprule
& \multicolumn{1}{c}{Total} &
\multicolumn{1}{c}{\sc FKM} &
\multicolumn{1}{c}{\sc GenInc-} &
\multicolumn{1}{c}{Period} &
\multicolumn{1}{c}{Designation} &
\multicolumn{1}{c}{\sc GenCoal-} \\
$n$ & \multicolumn{1}{c}{(ms)} &
\multicolumn{1}{c}{step} &
\multicolumn{1}{c}{\sc Array} &
\multicolumn{1}{c}{detection} &
\multicolumn{1}{c}{test} &
\multicolumn{1}{c}{\sc ition} \\
\midrule
10 & 0.002 &
    19.2\% & 51.3\% & 20.0\% & 5.3\% & 4.2\% \\
15 & 0.055 &
    13.1\% & 58.2\% & 19.1\% & 0.5\% & 9.1\% \\
20 & 2.416 &
    6.7\%  & 58.3\% & 22.5\% & 1.1\% & 11.4\% \\
25 & 68.01 &
    5.8\%  & 58.7\% & 25.5\% & 1.2\% & 8.7\%  \\
\bottomrule
\end{tabular}
\end{table}

The reason for this dominance in N-DCA is a structural asymmetry between
\textsc{FKM} and \textsc{GenIncArray}.  The FKM algorithm achieves \emph{constant
amortised time} (CAT), $\mathcal{O}(1)$ per necklace on
average, because most steps modify only a short suffix of the
necklace array.  \textsc{GenIncArray}, however, performs a complete
$\mathcal{O}(n)$ scan of all $n$~beads for every necklace,
regardless of how few beads changed since the previous step.  At
$n = 25$, this amounts to approximately $25 \times 1{,}342{,}184
\approx 33.6 \times 10^6$ bead inspections, compared with the
$\mathcal{O}(N_2(n))$ total work performed by \textsc{FKM} itself.
 
The quantitative impact of this overhead can be estimated from the
breakdown data.  At $n = 25$, \textsc{GenIncArray} accounts for
$39.95\;\text{ms}$ of the $68.01\;\text{ms}$ single-agent total.
Eliminating this cost entirely would reduce N-DCA's time to
$28.05\;\text{ms}$, lowering the N-DCA\,/\,DCVC ratio from
$5.6{\times}$ to $2.3{\times}$.  A more realistic scenario; for example, an
incremental update scheme achieving $\mathcal{O}(1)$ amortised
cost per necklace, with constant factors comparable to FKM's own
stepping cost ($3.96\;\text{ms}$), would thus yield an estimated
single-agent time of approximately $32\;\text{ms}$, corresponding
to a ratio of $2.6{\times}$.  Period detection ($25.5\%$) would
then become the dominant component.
The possibility of such an incremental scheme is discussed further
in Section~\ref{sec:discussion}.

\subsubsection{Generation Overhead in Context}
\label{sec:eval-amortised}

Although the previous experiments focus on the time required to \emph{generate} coalitions, a pragmatic deployment of coalition value calculation allocations would be followed by the \emph{evaluation} of the
characteristic function $\nu(C)$ itself for each generated coalition.  The
cost of this evaluation is application-dependent and typically dominates
overall execution time. Even a modest evaluation cost of
$1\;\mu\text{s}$ per coalition at $n = 25$
(where each agent generates approximately $1.34 \times 10^6$
coalitions) would add approximately $1{,}342\;\text{ms}$ of
evaluation time, compared with generation times of $67.5\;\text{ms}$
(N-DCA) and $12.2\;\text{ms}$ (DCVC).

\begin{figure}[t]
    \centering
    \includegraphics[width=0.8\linewidth]{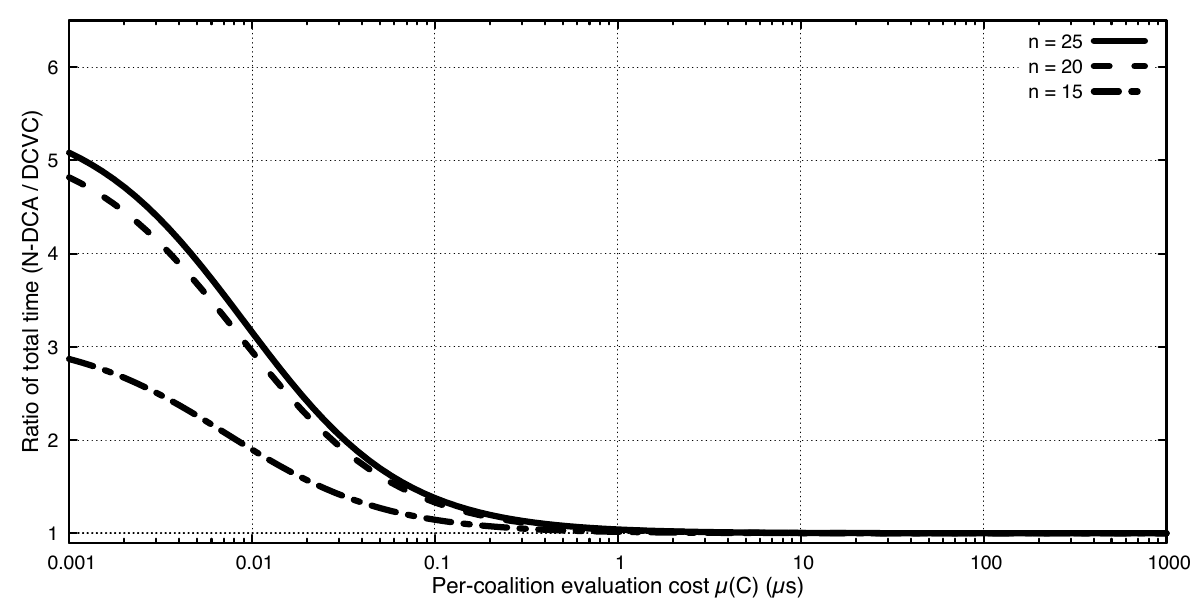}
    \caption{Total time ratio $\eta(c)$ (N-DCA\,/\,DCVC) as a
    function of the per-coalition characteristic-function evaluation
    cost~$c$, for $n \in \{15, 20, 25\}$.  As~$c$ increases beyond
    $\approx 1\;\mu\text{s}$, the ratio converges rapidly toward~$1$.}
    \label{fig:amortised}
\end{figure}

To quantify this effect, we compute the total time ratio
(N-DCA\,/\,DCVC) as a function of the per-coalition evaluation
cost~$c$:
\begin{equation}
    \label{eqn:amortised-ratio}
    \eta(c) \;=\; \frac{T_{\text{gen}}^{\text{N-DCA}} + m \cdot c}
                       {T_{\text{gen}}^{\text{DCVC}} + m \cdot c}
\end{equation}
where $m$ denotes the number of coalitions generated per agent and
$T_{\text{gen}}$ denotes the mean generation time from Experiment~2.
Since both algorithms generate the same coalitions (and hence incur
the same evaluation cost), the evaluation term $m \cdot c$ appears
in both numerator and denominator, causing the ratio to converge
to~$1$ as $c$ increases.

Figure~\ref{fig:amortised} plots $\eta(c)$ for $n \in \{15, 20, 25\}$.
At a per-coalition evaluation cost of $10\;\mu\text{s}$, which is
very conservative for any non-trivial characteristic
function, the time ratio falls below $1.05$ for all values of~$n$
tested, suggesting that the two algorithms have effectively
indistinguishable total execution times.  At
$100\;\mu\text{s}$ per coalition, the ratio is below $1.005$.
This analysis shows that the generation-time difference between
N-DCA and DCVC, while real and measurable in isolation, has
negligible practical impact in typical coalition value calculation
scenarios.  The performance-relevant comparison between the two
algorithms therefore shifts to their structural and algorithmic
properties, where N-DCA offers several distinct advantages:
a per-agent working-memory footprint approximately one third that of
DCVC (Section~\ref{sec:eval-memory}); a purely iterative
implementation requiring no recursion; the self-interest guarantee
that every coalition assigned to agent~$x$ contains~$x$
(a property that DCVC does not provide, as verified empirically for
$n$ up to~20); and the absence of integer-overflow constraints
that limit DCVC to $n < 68$ under standard 64-bit arithmetic
(Section~\ref{sec:eval-scalability}).

\subsubsection{Scalability Considerations}
\label{sec:eval-scalability}

A further structural difference between the two algorithms that is not
captured by the timing experiments concerns their numerical
requirements.

DCVC's allocation procedure relies on computing binomial
coefficients $\binom{n}{s}$ and using them for index arithmetic:
determining each agent's starting index, mapping that index to a
coalition via the Pascal-matrix decomposition, and computing the
remaining number of coalitions.  
All of these operations require exact integer values of $\binom{n}{s}$.
In a standard 64-bit integer representation, the largest representable value is $2^{63}-1$, i.e. \ $\approx 9.22 \times 10^{18}$.
The central binomial coefficient $\binom{n}{\lfloor n/2 \rfloor}$, corresponding to 
the number of coalitions of the most populous size, and hence the largest index that DCVC's Pascal-matrix decomposition must represent (see Figure~2 in~\cite{Rahwan2007}) 
exceeds this threshold at $n = 68$ ($\binom{68}{34} \approx 2.8 \times 10^{19}$), at which point DCVC's index arithmetic overflows and produces incorrect results unless arbitrary-precision integers are used.

In contrast, N-DCA performs no index arithmetic and computes no
binomial coefficients.  The FKM algorithm operates on a binary
array of length~$n$; \textsc{GenIncArray} and period detection operate on
the increment array of length at most~$n$; and \textsc{GenCoalition} uses
only modular arithmetic on agent identifiers.  All quantities remain
bounded by~$n$ throughout, regardless of the number of coalitions.
N-DCA is therefore free of integer-overflow constraints and can, in
principle, be applied at any value of~$n$ for which the exponential
number of coalitions remains computationally tractable.

While values of $n \geq 68$ are unlikely to be encountered in current
practice (as $2^{68} \approx 2.95 \times 10^{20}$ coalitions would
require astronomical computation), the absence of numerical
preconditions simplifies implementation and eliminates an entire
class of potential errors.

\subsection{N-DCA Load Balance Analysis}
\label{sec:balance_evaluation}

The rotated designation scheme (Section~\ref{sec:rotated-designation}) is responsible for distributing the periodic-IA designations evenly across all agents, without the need for any inter-agent communication or central coordinator.
Whilst this ensures that exactly $2^n - 1$ coalitions will be generated across all~$n$ agents, individual agents may receive slightly different allocations, depending on whether the aim is to balance the load for a specific coalition size $s$ (Theorem~\ref{thm:persize-balance}), or if the aggregate load across all sizes should be balanced (Theorem~\ref{thm:aggregate-balance}).
In this subsection, we validate these theoretical guarantees empirically, characterise the conditions under which each bound is tight, and examine the trade-offs between the two variants across a range of values of~$n$.

For each value of $n$ in the range $2 \leq n \leq 25$, the N-DCA algorithm (Algorithm~\ref{algorithm:ndcva}) is executed independently
for every agent $x \in \{1, \ldots, n\}$ under both offset variants:
\begin{itemize}
    \item 
        \emph{Per-size offsets} ($h[s]$): the cumulative offset is maintained
independently for each coalition size~$s$ and reset to~$0$ at the start of each new size (Definition~\ref{def:rotated-designation}).
    \item 
        \emph{Global offset} ($H$): a single cumulative offset is carried across all coalition sizes
(Equation~(\ref{eq:global-offset})).
\end{itemize}
For each variant, the number of coalitions assigned to
each agent is recorded for every coalition size~$s$, yielding the per-agent,
per-size allocation counts $|CV_x^s|$.  From these, we can determine
the per-size imbalance at each size~$s$
($\max_x |CV_x^s| - \min_x |CV_x^s|$),
the maximum per-size imbalance over all sizes,
and the aggregate imbalance
($\max_x |CV_x| - \min_x |CV_x|$,
where $|CV_x| = \sum_{s=1}^{n} |CV_x^s|$).
We also compute $\kappa(n)$ in Equation \eqref{eqn:kappa}, which counts the number of coalition sizes at which the total number of coalitions $\binom{n}{s}$ is not evenly divisible by~$n$ (i.e.\  $n \nmid \binom{n}{s}$).
\begin{equation}
    \label{eqn:kappa}
    \kappa(n) \;=\; \bigl|\bigl\{s \in \{1,\ldots,n\} : n \nmid \tbinom{n}{s}\bigr\}\bigr|
\end{equation}


\begin{table}[t]
\centering
\caption{Maximum per-size imbalance (max ps-imb) and aggregate
imbalance (agg-imb) for the two N-DCA offset variants, for
$2 \leq n \leq 25$.  The per-size offset variant ($h[s]$)
always achieves per-size imbalance~$\leq 1$
(Theorem~\ref{thm:persize-balance}); the global offset variant
($H$) always achieves aggregate imbalance~$\leq 1$
(Theorem~\ref{thm:aggregate-balance}).  $\kappa(n)$ denotes the
number of coalition sizes for which
$n \nmid \binom{n}{s}$.  Prime values of $n$ are
marked with~$\ast$.}
\label{table:imbalance}
\setlength{\tabcolsep}{5pt}
\begin{tabular}{l@{\;\;}c@{\;\;}cc@{\;\;}ccc@{\qquad\qquad}l@{\;\;}c@{\;\;}cc@{\;\;}cc}
\toprule
& &
\multicolumn{2}{c}{$h[s]$} &
\multicolumn{2}{c}{$H$} &
& & &
\multicolumn{2}{c}{$h[s]$} &
\multicolumn{2}{c}{$H$} \\
\cmidrule(lr){3-4} \cmidrule(lr){5-6}
\cmidrule(lr){10-11} \cmidrule(lr){12-13}
$n$ & $\kappa(n)$ &
max & agg &
max & agg &
~ & $n$ & $\kappa(n)$ &
max & agg &
max & agg \\
 & & ps-imb & imb & ps-imb & imb & & & & ps-imb & imb & ps-imb & imb \\

\cmidrule(r){1-6} \cmidrule(){8-13}

 $2^{\ast}$  & 1 & 1 &  1 & 1 & 1 & &
 14           & 8 & 1 &  8 & 1 & 1 \\
 $3^{\ast}$  & 1 & 1 &  1 & 1 & 1 & &
 15           & 7 & 1 &  7 & 1 & 1 \\
 4            & 2 & 1 &  2 & 1 & 1 & &
 16           & 8 & 1 &  8 & 4 & 1 \\
 $5^{\ast}$  & 1 & 1 &  1 & 1 & 1 & &
 $17^{\ast}$ & 1 & 1 &  1 & 1 & 1 \\
 6            & 4 & 1 &  4 & 1 & 1 & &
 18           & 8 & 1 &  8 & 3 & 1 \\
 $7^{\ast}$  & 1 & 1 &  1 & 1 & 1 & &
 $19^{\ast}$ & 1 & 1 &  1 & 1 & 1 \\
 8            & 4 & 1 &  4 & 1 & 1 & &
 20           & 10 & 1 & 10 & 9 & 1 \\
 9            & 3 & 1 &  3 & 1 & 1 & &
 21           & 7 & 1 &  7 & 3 & 1 \\
 10           & 4 & 1 &  4 & 2 & 1 & &
 22           & 8 & 1 &  8 & 8 & 1 \\
 $11^{\ast}$ & 1 & 1 &  1 & 1 & 1 & &
 $23^{\ast}$ & 1 & 1 &  1 & 1 & 1 \\
 12           & 7 & 1 &  7 & 3 & 1 & &
 24           & 14 & 1 & 14 & 12 & 1 \\
 $13^{\ast}$ & 1 & 1 &  1 & 1 & 1 & &
 25           & 5 & 1 &  5 & 1 & 1 \\
\bottomrule
\end{tabular}
\end{table}

Table~\ref{table:imbalance} reports the results for
$2 \leq n \leq 25$. The per-size
offset variant achieves a maximum per-size imbalance of
exactly~$1$ for every value of $n$ tested, confirming
Theorem~\ref{thm:persize-balance}.  The global offset variant
achieves an aggregate imbalance of exactly~$1$ for every value
of $n$, confirming Theorem~\ref{thm:aggregate-balance}.  Both
bounds are therefore tight.

Under per-size offsets, the observed aggregate imbalance matches
$\kappa(n)$ exactly for all values of $n$ tested.  This confirms
that the bound is tight: when each coalition size maintains its
own independent designation window, the remainder coalitions
(i.e.\ the extra coalitions arising when $n \nmid \binom{n}{s}$)
accumulate on the same agent at every such size, and the
aggregate imbalance equals the number of sizes at which a
remainder exists.
A well-known property of binomial coefficients provides a useful
special case; when $n = p$ is prime,
$p \mid \binom{p}{s}$ for all $1 \leq s \leq p-1$, so the only
non-divisible size is $s = n$ (the grand coalition, where
$\binom{n}{n} = 1$).  Hence $\kappa(p) = 1$ for any prime~$p$,
and both guarantees hold simultaneously under either variant.
This is confirmed empirically for all primes in the range
($n = 2, 3, 5, 7, 11, 13, 17, 19, 23$).

The global offset variant introduces per-size imbalance
because the single offset~$H$ is advanced by strides of
varying magnitude (depending on the stride $\varpi(\underline{t})$
of each periodic IA encountered), so the designation window does
not cycle uniformly within any single coalition size.  The
per-size imbalance under the global offset is generally modest
for small~$n$ but can grow for larger composite values; for
example, at $n = 24$, the maximum per-size imbalance reaches~$12$.
Notably, some composite values (e.g.\ $n = 14, 15, 25$) exhibit
a maximum per-size imbalance of only~$1$ under the global offset,
indicating that the interaction between stride sizes and $n$ can
sometimes yield favourable cancellations.

The two variants are therefore complementary: the per-size offset
variant guarantees per-size balance at the cost of aggregate
imbalance up to $\kappa(n)$, while the global offset variant
guarantees aggregate balance at the cost of per-size imbalance.
Neither variant achieves both guarantees simultaneously for all
composite~$n$.  However, for prime values of $n$, both guarantees hold
under either variant ($\kappa(p) = 1$), eliminating the need to
choose.
The choice between the two variants depends on the application
context:
\begin{itemize}
    \item 
The \emph{per-size offset variant} is appropriate when the cost
of evaluating the characteristic function $\nu(C)$ varies
significantly with coalition size, so that the computational cost
of processing a coalition depends primarily on~$|C|$; balancing
the number of evaluations within each size then ensures that no
agent bears a disproportionate share of the most expensive
evaluations.  The aggregate imbalance of $\kappa(n)$ is a
second-order effect: even at $n = 24$, the maximum aggregate
imbalance is~$14$, which is negligible relative to the per-agent
load of approximately $699{,}050$ coalitions.
\item
The \emph{global offset variant} is appropriate when all coalition value
calculations have roughly equal cost (irrespective of coalition
size), or when the primary concern is minimising the maximum total
number of evaluations per agent; in this case,
Theorem~\ref{thm:aggregate-balance} ensures that no agent
evaluates more than $\lceil (2^n - 1)/n \rceil$ coalitions in
total.
\end{itemize}
For prime values of $n$, the distinction is moot: both guarantees hold
simultaneously with either variant, and $\kappa(n) = 1$.  Since
the number of agents in many practical multi-agent systems is a
design parameter, choosing a prime $n$ eliminates the need to
trade off between per-size and aggregate balance.

\section{Discussion}
\label{sec:discussion}

Having established the framework (Section~\ref{sec:ippe}), algorithm (Section~\ref{sec:algorithm}), and empirical results (Section~\ref{sec:evaluation}), we now situate N-DCA within the broader literature with respect to the five properties (Section~\ref{sec:comparison-discussion}), discuss the principal source of its constant-factor overhead (Section~\ref{sec:discuss-incremental}), and outline future directions (Section~\ref{sec:future-directions}).

\subsection{Comparison with Related Approaches}
\label{sec:comparison-discussion}

Given the presentation and evaluation of N-DCA, it is now possible to compare the approach with three distributed coalition value calculation baselines:
the DCVC  family of algorithms \citep{Rahwan2005,Rahwan2007,Michalak2010,Voice12}; 
VBFR \citep{Vinyals2012}; 
and the SK algorithm \citep{shehory95,shehory96,shehory98}, 
based on  the five properties listed in Section~\ref{sec:intro}.

\begin{enumerate}
\item \textbf{Eliminating the need for communication between agents:}
    The allocation of coalition value calculations in N-DCA depends solely on the agent's own identifier $x$, the increment array (IA) $\underline{t}$ (Definition~\ref{def:coalition-generation}) and the shared parameter $n$.
    A similar mechanism exists for the DCVC and VBFR algorithms, whereby the agents have:
        (i) a pre-agreed ordering of the coalitions; and
        (ii) a pre-agreed algorithm to dictate which exact allocation of this ordering each agent should calculate.
    Conditions (i) and (ii) do not hold for the SK algorithm, thus agents need to coordinate their actions through communication to verify which coalitions will be in each agent's allocation.
    
    It should be noted that for any distributed coalition value calculation algorithm, if the agents want to complete the decentralised coalition formation process, communication costs will be incurred later when agents need to communicate their best coalitions or coalition structures found (e.g.\ \citet{sandholm99,Michalak2010}).
    Even when the best coalition structure is known, further communication may be necessary if the agents need to negotiate on their final utility payoff (e.g.\ \citet{WuSIAM1977,Cesco98,Shehory99,Arnold2002,Lehrer2003,Goradia2007}).

\item \textbf{Ensuring that the allocation of coalitions is equitable:}
    Of the four algorithms under consideration, only N-DCA and the DCVC algorithms guarantee that the agents' coalition value calculation allocations (referred to as \emph{shares} by \citet{Rahwan2007}) will be approximately equal (Section~\ref{sec:loadbalancing}).
    VBFR does not generate balanced allocations to the agents (in the case where all the agents are assumed to be in a fully connected graph).
    In this case, for $n$ agents, agent~1 is always assigned $2^{n-1}$ coalitions because there are $2^{n-1}$ coalitions where agent ID~1 is the smallest ID of that coalition.
    Agent~$n$ on the other hand is always assigned only one coalition because there is only one coalition with agent~$n$ as the smallest ID, which is the singleton coalition $\{n\}$.
    The SK algorithm has no guarantees on the maximum difference between any two agents' allocations; the average difference grows exponentially with the number of agents, as detailed in~\citet{Rahwan2007}.

\item \textbf{Every possible coalition value is calculated once and only
    once, thus eliminating redundancy:}
    One and only one canonical representative from each equivalence class is used to generate the coalition value allocation in N-DCA (Theorem~\ref{thm:main2}). 
    Neither the DCVC algorithm nor VBFR generate any redundant coalition value calculation issues, as the agents are aware how the coalitions are ordered in both algorithms, and therefore each agent's allocation does not ``accidentally'' overlap.
    The SK algorithm does result in an exponentially large redundancy because, as stated in \citet{Rahwan2007}, each agent commits to calculating the value of a set of coalitions with limited knowledge of the other agent's commitments.

\item \textbf{Maintaining a balanced load across agents:}
    Beyond receiving approximately equal \emph{numbers} of coalitions (property 2), it is also desirable that agents perform approximately equal amounts of \emph{computation}.
    The DCVC algorithms do not guarantee an equal number of operations (where the operations are comparisons and additions) when generating each agent's coalition value calculation share, even when the shares are equally sized~\citep{Rahwan2007}.
    This is due to the lexicographical ordering of the coalitions that the DCVC algorithm uses.
    For N-DCA, as each IA of the same size requires the exact same number of addition operations to find the corresponding coalition, equal operations will be performed by agents that are allocated an equal number of coalitions of each size.
    This is because, unlike the DCVC algorithm, N-DCA does not rely on lexicographical order to generate the coalitions themselves.
    The DCVC algorithm keeps track of which agent was responsible for previously generated allocations (through an $\alpha$ pointer) to generate approximately equal coalition value sets.
    A similar approach is used by N-DCA's rotated designation scheme (Definition~\ref{def:rotated-designation}) for periodic IAs (Definition~\ref{def:period}), with the per-size balance of Theorem~\ref{thm:persize-balance} and the aggregate balance of Theorem~\ref{thm:aggregate-balance} providing formal guarantees.

\item \textbf{An agent is \emph{self-interested}; i.e.\ it is only assigned
coalitions in which it is a member:}
    N-DCA, like the SK and VBFR algorithms before it, guarantees that every coalition distributed to an agent $x$ includes $x$ as a member (Equation~\eqref{eqn:valuex}). 
    This is not the case in any of the DCVC algorithms, and is a weakness in certain domains.  
    Note that self-interest is independent of load balance: even when allocations are not perfectly equal (e.g.\ under the per-size offset variant, where aggregate imbalance can reach $\kappa(n)$), every agent still computes only coalitions of which it is a member (Section~\ref{sec:balance_evaluation}).
    
    Consider an $e$-commerce environment where agents (representing a single business) form a temporary coalition in order to gain price discounts and economies of scale. As there may be anti-monopoly penalties or other considerations, such as communication and logistic costs, then this problem cannot simply be solved by forming the grand coalition.
    Therefore the agents together would have to calculate the values of the different coalitions to find their most preferable ones.
    If the values involved in the calculations were public knowledge, then this situation could be treated as a characteristic function game, and the coalition value calculation costs could be shared with the other agents in the system. 
    In this situation, it would make no sense for an agent to calculate the value of a coalition $C$ that does not include itself, as the agent will gain no benefit if $C$ were to form.
    A similar argument is made for VBFR for the smart grid domain~\citep{Vinyals2012}.  Although N-DCA guarantees that agents evaluate only coalitions containing themselves, the possibility of strategic misreporting within those coalitions remains; the implications for mechanism design are discussed in Section~\ref{sec:future-directions}.
    
\end{enumerate}

\begin{table}[t]
\centering
\caption{Comparison of the properties of coalition value allocation algorithms.}
\label{tab:comp}

\setlength{\tabcolsep}{9pt}
\renewcommand{\arraystretch}{1.2}

\begin{tabular}{lcccc}
\toprule
Property & SK & DCVC & VBFR & \textbf{N-DCA} \\
\midrule
1. Eliminates Communication    & $\times$ & $\checkmark$ & $\checkmark$ & $\checkmark$ \\
2. Equitable Allocation        & $\times$ & $\checkmark$ & $\times$     & $\checkmark$ \\
3. Eliminates Redundancy       & $\times$ & $\checkmark$ & $\checkmark$ & $\checkmark$ \\
4. Balanced Load               & $\times$ & $\times$     & $\times$     & $\checkmark$ \\
5. Self-Interest                & $\checkmark$ & $\times$ & $\checkmark$ & $\checkmark$ \\
\bottomrule
\end{tabular}

\end{table}

This discussion, summarised in Table~\ref{tab:comp}, supports the claim that N-DCA
satisfies all the above properties, unlike the other algorithms considered.

\subsection{Towards an Incremental Increment-Array Update}
\label{sec:discuss-incremental}

The current implementation of \textsc{GenIncArray}
(Figure~\ref{algorithm:genIncArr}) performs a complete
$\mathcal{O}(n)$ scan of the necklace array each time a new
necklace is produced by FKM.  This stands in contrast to FKM
itself, which generates successive necklaces in constant amortised
time (CAT) by modifying only a suffix of the
array~\citep{RUSKEY1992414}, leaving positions $1, \ldots, i{-}1$
unchanged.  This locality suggests that an \emph{incremental}
update of the increment array may be possible, maintaining the IA
across successive FKM steps and recomputing only the entries
affected by the modified suffix.  The principal complication is
that the IA is a \emph{cyclic} run-length encoding: changes near
the end of the necklace can affect the wrap-around entry even when
the prefix is unchanged.  Nevertheless, since FKM's own suffix
copies amortise to $\mathcal{O}(1)$ per
necklace~\citep{RUSKEY1992414}, there is reason to expect that the
corresponding IA updates would amortise similarly.

If such a scheme could achieve $\mathcal{O}(1)$ amortised cost per
necklace, the practical impact would be substantial: the breakdown
data of Section~\ref{sec:eval-breakdown} show that reducing
\textsc{GenIncArray}'s cost to that of FKM's own stepping time
would lower the N-DCA\,/\,DCVC ratio from approximately
$5.6{\times}$ to approximately $2.6{\times}$.  The present article
deliberately maintains a clean separation between FKM and
\textsc{GenIncArray} for analytical clarity; an incremental scheme
would trade this modularity for performance.  The design,
implementation, and formal analysis of such a scheme is left to
future work.

\subsection{Broader Implications and Future Directions}
\label{sec:future-directions}
  
The empirical evaluation of Section~\ref{sec:evaluation} covers
the range $2 \le n \le 25$, which spans the regime in which
exhaustive coalition value calculation remains computationally
tractable.  For significantly larger populations, the exponential
number of coalitions ($2^n - 1$) makes exhaustive evaluation
infeasible regardless of how the workload is distributed.  Recent
centralised approaches such as SALDAE~\citep{Taguelmimt2025SALDAE}
have demonstrated scalability to thousands of agents by adopting
anytime search strategies that avoid exhaustive enumeration.
N-DCA operates in a complementary regime: its contribution is not
to extend the frontier of tractable~$n$, but rather to ensure
that within the range where exhaustive calculation is viable, the
distribution across agents is fair, communication-free, and
self-interested.  For applications with moderate numbers of
agents, which remain common in domains such as e-commerce,
logistics, and smart-grid energy trading, these guarantees are
directly relevant.
 
The self-interest property has implications beyond the
value-calculation stage itself.  In a fully decentralised
coalition formation process, value calculation is typically
followed by structure search, negotiation, and payoff
division~\citep{sandholm99,RahwanAIJ2015}.  If agents are assigned coalitions in which they are not members (which can occur when using DCVC), they may have an incentive to misreport values strategically, since they bear no cost from a coalition they would never join.  
By ensuring that every agent evaluates only coalitions containing itself, N-DCA reduces the scope for such manipulation.
More fundamentally, it also eliminates the prior problem of rational non-compliance: an agent $x$ assigned $\nu(C)$ for $C \not\ni x$ has no individually rational reason to perform the computation at all, whether or not it intends to misreport; the self-interest property ensures that every assignment is one the agent has a direct incentive to execute correctly.

Nevertheless, self-interested agents may still misreport values
for coalitions of which they \emph{are} members, for example by
understating the value of a rival coalition in order to steer
the structure-search stage toward a more personally favourable
outcome.  Addressing this residual vulnerability would require
wrapping N-DCA within a mechanism-design
framework~\citep{Nisan2007}, where the coalition value
calculation phase is embedded in a mechanism that constitutes a
Bayesian game.  If the resulting mechanism has Bayesian
equilibria in which the optimal strategy for each agent is to
misreport information, then by the \emph{revelation
principle}~\citep{Myerson1979,Shoham2008} there exists a
payoff-equivalent, individually rational mechanism with an
equilibrium in which agents truthfully report their types.  
The design and analysis of such a truthful mechanism for distributed coalition value calculation remains an open problem.
A related open concern is privacy: computing $\nu(C)$ for $C \not\ni x$ may require agent $x$ to access capability or valuation data belonging to members of $C$, representing a structural information-leakage risk that is independent of any intent to deceive, and that privacy-preserving mechanism design~\citep{Nisan2007} should address. 

A natural direction for future work is to investigate whether N-DCA
can serve as a replacement for DCVC within a distributed coalition
structure generation framework such as D-IP \citep{Michalak2010}.
The work on D-IP has demonstrated that DCVC's distributed value-calculation stage can
be combined with a distributed version of the IP search algorithm
\citep{Rahwan009} to solve the full coalition structure generation
problem without a centralised coordinator.  However, as D-IP
inherits DCVC as its value-calculation substrate, it also inherits
DCVC's limitations; agents may be assigned coalitions of which they are
not members, and the approach is constrained by DCVC's 64-bit index
representation.
Replacing DCVC with N-DCA in the first stage of D-IP would bring
several potential advantages.  First, every agent would compute values
only for coalitions in which it is a member, preserving the
self-interest guarantee and reducing the incentive for misreporting in
adversarial settings.  Second, the necklace-based encoding avoids the
large-integer indices that limit DCVC's scalability.  The principal
challenge would lie in adapting D-IP's communication stages to N-DCA's
allocation scheme: in particular, the aggregate statistics ($Max_s$,
$Avg_s$) and filter rules (FR1, FR2) that D-IP computes during and
after the value-calculation stage would need to be derived from
N-DCA's necklace-based coalition shares rather than from DCVC's
reverse-lexicographic lists.  
Exploring this combination (i.e.\ a fully decentralised, self-interested coalition structure generation algorithm) is a promising direction for future research.

 
More broadly, the mathematical framework developed in Section~\ref{sec:ippe} may be applicable beyond coalition value calculation.  
The core construction: partitioning combinatorial objects with cyclic symmetry into equivalence classes, selecting canonical representatives, and distributing them across agents via a rotated designation scheme, is independent of the coalition-formation domain.  Any distributed
computation requiring a fair, communication-free allocation of
work items that exhibit circular structure (for example, cyclic
task schedules, distributed enumeration of chemical
isomers, or the generation of necklace codes in wireless
networks~\citep{8292573}) could potentially exploit the same
approach.  Investigating such generalisations is a natural
direction for future work.
  
Finally, we note several limitations of the present work.
N-DCA assumes a characteristic function game in which coalition
values depend only on membership; games with externalities (where
the value of a coalition depends on the actions of non-members)
are not addressed.  The approach also requires that all agents know
the total number of agents~$n$ in advance, which may not hold in
open or dynamic environments.  Furthermore, as demonstrated in
Section~\ref{sec:evaluation}, N-DCA incurs a constant-factor
overhead relative to DCVC in generation time, attributable
primarily to the per-necklace $\mathcal{O}(n)$ cost of
\textsc{GenIncArray}.  Although this overhead is negligible for
any non-trivial characteristic-function evaluation cost
(Section~\ref{sec:eval-amortised}), it may matter in settings
where the characteristic function is extremely cheap to evaluate
or where generation must complete within a strict real-time
budget; precisely the scenario that motivates the incremental
update scheme discussed in Section~\ref{sec:discuss-incremental}.

\section{Conclusions and Future Work}
\label{sec:conclusions}

This article presents N-DCA, a necklace-based distributed algorithm for allocating coalition value calculations in unrestricted characteristic function games.
The method exploits combinatorial necklaces to generate increment arrays (IAs) for each equivalence class.
These canonical representatives are in turn used to generate coalition value calculation allocations to each agent such that the allocation is \emph{fair}, both in the number of coalitions assigned to each agent and in the computational burden of generating those assignments, and non-redundant.
By exploiting different two-colour necklace generation algorithms (for example those that generate all combinations, or those that generate fixed density necklaces), the computational cost of generating the coalition value allocation can be managed, and for most algorithms, the complexity has been proven to be \emph{constant amortised time} (CAT) with respect to the number of necklaces (and hence number of increment arrays) generated.
Furthermore, N-DCA is suited for scenarios that include self-interested agents, as agents are guaranteed to be a member of all of the coalitions that they are allocated.
We prove that each coalition value is calculated exactly once and that the resulting allocations satisfy formal balance guarantees in both size and computational effort.
Within the setting considered, N-DCA appears to be the only approach with formal guarantees for all five target properties identified in Section~\ref{sec:intro} \emph{simultaneously}: communication-free allocation, equitable distribution, elimination of redundancy, balanced computational load, and self-interest.

Several directions for future work have been identified in the course of this article, including the development of an incremental update scheme for \textsc{GenIncArray} that could reduce N-DCA's constant-factor overhead (Section~\ref{sec:discuss-incremental}), and the integration of N-DCA with distributed coalition structure generation frameworks such as D-IP to bring the self-interest guarantee to the full structure search problem (Section~\ref{sec:future-directions}).

\backmatter
\bmhead{Acknowledgments}
We would like to thank the anonymous reviewers
for their valuable comments and feedback, and acknowledge the assistance of previous collaborators for their support and comments during the earlier phase of the work.

\begin{appendices}

\section{Proofs for the correctness of IAs}
\label{sec:appendix}

In this appendix, we restate the relevant assumptions from Section~\ref{sec:ippe} and provide the full proofs for Lemma~\ref{lemma:same-sets} and Theorem~\ref{thm:main}.
We assume that there are $n$ agents (typically $n > 2$) represented by $Ag = \{1, 2, \ldots, n\}$.
A coalition $C \subseteq Ag$ of size $s = |C|$ represents a sequence of agents, where $n,s \in \mathbb{N}$.
As the agents are represented as natural numbers, we say that two integers $a$ and $b$ are congruent modulo n (i.e.\ $a\equiv b{\pmod {n}}$) where $a,b \in \set{1 \ldots n}$.
We use $\underline{t}$ to denote an arbitrary sequence $\tuple{t_0,t_1,\ldots,t_{s-1}}$ such that $\sum_{i=0}^{s-1}~t_i~=~n-s$ (see also Lemma~\ref{lemma:offset-increment}).
We call such a sequence an \emph{Increment Array} (IA), whose \emph{period}, denoted by $\pi(\underline{t})$ is:
\[
\min_{1\leq p\leq s}~\mbox{ }~
\underline{t}~~=~~\tuple{t_0,t_1,\ldots,t_{p-1},t_0,t_1,\ldots,t_{p-1},\ldots,t_0,t_1,\ldots,t_{p-1}}
\]
where $\underline{t}$ is formed by $\mu$ identical copies of a
subsequence of length $p = \pi(\underline{t})$; i.e.\ the period of $\underline{t}$
is the \emph{length of the smallest subsection} of $\underline{t}$ that is repeated throughout $\underline{t}$.
Given $C\subseteq\set{1,2,\ldots,n}$ and $(x,\underline{t})$ (for some $1\leq x\leq n$) we say that 
$\underline{t}$ \emph{generates} $C$ from $x$ if 
$C=\set{x_1,x_2,\ldots,x_s}$ using (\ref{eqn:valuex}), where $\varphi$ (given
in (\ref{eqn:offset}) and repeated here for clarity) is the \emph{cumulative integer increment}:

\[
\varphi_i~=~\left\{
{
\begin{array}{lcl}
0&\mbox{ if }&i=1\\
\sum_{k=0}^{i-2}~(t_k + 1)&\mbox{ if }&2\leq i\leq s+1
\end{array}
}
\right.
\]

Note that $\varphi_{s+1}=n$, as this is equivalent to the sum of all offset
increments (which, by Lemma~\ref{lemma:offset-increment}, is $n-s$) and $s$ baseline increments.
Furthermore, Lemma~\ref{lemma:ns-sequence} states that for any coalition $C$, there is always some IA
and an agent that can generate it.

\subsection*{Proof of Lemma~\ref{lemma:same-sets}}

Definition~\ref{def:ia-equivalence} states that two IAs $\underline{t}$ and
$\underline{u}$ of the same
size $s$ are \emph{equivalent} $\underline{t} \approx
\underline{u}$, if $\underline{u}$ is a circular shift of
$\underline{t}$.
Lemma~\ref{lemma:same-sets} states that the same coalitions will be generated by any IA belonging to the same equivalence class $[\underline{t}]_\approx$.

\newtheorem*{lem:same-sets}{Lemma~\ref{lemma:same-sets}}
\begin{lem:same-sets}
If $\underline{t}\approx\underline{u}$ then
\[
\bigcup_{i=1}^{n}~\set{~C(i,\underline{t})~}~~=~~\bigcup_{i=1}^{n}~\set{~C(i,\underline{u})~}
\]
\end{lem:same-sets}
\begin{proof}
Without loss of generality we may assume that $\underline{t}\approx\underline{u}$ is witnessed by the choice $k=s-1$, i.e.\
$\tuple{u_0,u_1,\ldots,u_{s-1}}=\tuple{t_{s-1},t_0,t_1,\ldots,t_{s-3},t_{s-2}}$.
Define $\varphi_r$ for $1\leq r\leq s+1$ as before and $\psi_r$ for $1\leq r\leq s+1$ via:
\[
\psi_r~~=~~\left\{
{
\begin{array}{lcl}
0&\mbox{ if }&r=1\\
(t_{s-1}+1)~+~\sum_{k=0}^{r-3}(t_k+1)&\mbox{ if }&2\leq r\leq s+1
\end{array}
}
\right.
\]
Comparing respective terms we see that for all $2\leq k\leq s$, we have:
$\psi_k=\varphi_k+(t_{s-1}-t_{k-2})$.
We claim that this leads to the following:
\[
C(i,\underline{t})~~=~~\left\{
{
\begin{array}{lcl}
C(n-t_{s-1}+i-1,\underline{u})&\mbox{ if }&1\leq i\leq t_{s-1}+1\\
C(i-t_{s-1}-1,\underline{u})&\mbox{ if }&t_{s-1}+2\leq i\leq n
\end{array}
}
\right.
\]
To see this, consider the case when $1\leq i\leq t_{s-1}+1$. We have
$C(i,\underline{t})=\bigcup_{k=1}^{s}\set{i+\varphi_k}$ which is claimed to be:
{
  \medmuskip=1mu
  \thinmuskip=1mu
  \thickmuskip=1mu
\begin{eqnarray*}
C(n-t_{s-1}+i-1,\underline{u})&\mbox{ $=$ }&\bigcup_{k=1}^{s}\set{n-t_{s-1}+i-1+\psi_k}\\
&\mbox{ $=$ }&\set{n-t_{s-1}+i-1}\cup\bigcup_{k=2}^{s}\set{n-t_{s-1}+i-1+\varphi_k+t_{s-1}-t_{k-2}}
\end{eqnarray*}
}
Consider the terms $n-t_{s-1}+i-1+\varphi_k+t_{s-1}-t_{k-2}$.
For $2\leq k\leq s$, from the fact that $\varphi_k=\sum_{j=0}^{k-2}t_j+k-1$, these are equal to: 
\[
n-t_{s-1}+i-1+\sum_{j=0}^{k-3}t_j+k-1+t_{s-1}=n+i+\varphi_{k-1}
\]
In total we have, for $1\leq i\leq t_{s-1}+1$:
\begin{eqnarray*}
C(i,\underline{t})&\mbox{ $=$ }&\bigcup_{k=1}^{s}~\set{i+\varphi_k}\\
C(n-t_{s-1}+i-1,\underline{u})&\mbox{ $=$ }&\set{n-t_{s-1}+i-1}~\cup~\bigcup_{k=2}^{s}~\set{n+i+\varphi_{k-1}}
\end{eqnarray*}
Noting that after $n$ correction terms $n+i+\varphi_{k-1}$ become $i+\varphi_{k-1}$ all of which
are elements of $C(i,\underline{t})$, the only terms unaccounted for are
$\set{n-t_{s-1}+i-1}\in C(n-t_{s-1}+i-1,\underline{u})$ and $\set{i+\varphi_s}\in C(i,\underline{t})$. For these, however:
\begin{eqnarray*}
i+\varphi_s&\mbox{ $=$ }&i+\sum_{j=0}^{s-2}t_j~+~s-1\\
&\mbox{ $=$ }&i+(n-s-t_{s-1})+s-1\\
&\mbox{ $=$ }&i+n-t_{s-1}-1
\end{eqnarray*}
When $t_{s-1}+2\leq i\leq n$, it is claimed that
$C(i,\underline{t})=\bigcup_{k=1}^{s}\set{i+\varphi_k}$
corresponds to:
\begin{eqnarray*}
C(i-t_{s-1}-1,\underline{u})&\mbox{ $=$ }&\bigcup_{k=1}^{s}\set{i-t_{s-1}-1+\psi_k}\\
&\mbox{ $=$ }&\set{i-t_{s-1}-1}\cup\bigcup_{k=2}^{s}\set{i-t_{s-1}-1+\varphi_k+t_{s-1}-t_{k-2}}
\end{eqnarray*}
Inspecting the terms for $2\leq k\leq s$, we have
$i-t_{s-1}-1+\varphi_k+t_{s-1}-t_{k-2}$.
These, again, simplify to $i+\varphi_{k-1}$, so that:
\begin{eqnarray*}
C(i,\underline{t})&\mbox{ $=$ }&\bigcup_{k=1}^{s}~\set{i+\varphi_k}\\
C(i-t_{s-1}-1,\underline{u})&\mbox{ $=$ }&\set{i-t_{s-1}-1}~\cup~\bigcup_{k=2}^{s}~\set{i+\varphi_{k-1}}
\end{eqnarray*}
When $1\leq k\leq s-1$, the term $i+\varphi_k$ appears in both $C(i,\underline{t})$ and $C(i-t_{s-1}-1,\underline{u})$,
For the terms $i+\varphi_s\in C(i,\underline{t})$ and $i-t_{s-1}-1\in C(i-t_{s-1}-1,\underline{u})$
we have already seen that $i+\varphi_s=i+n-t_{s-1}-1$ which after $n$ correction is 
$i-t_{s-1}-1$ as required.
This establishes the property claimed: if $\underline{t}$ and 
$\underline{u}$ belong to the same equivalence class of $\approx$ then:
\[
\bigcup_{i=1}^{n}~\set{C(i,\underline{t})}~~=~~\bigcup_{i=1}^{n}~\set{C(i,\underline{u})}
\]
\end{proof}

\subsection*{Proof of Theorem~\ref{thm:main}}

Given an \emph{arbitrary} sequence of non-negative integers, $\underline{b}~=~\tuple{b_0,\ldots,b_{s-1}}$ say, 
the $n$-\emph{correction} of $\underline{b}$ is the sequence, $corr(\underline{b},n)$, obtained by replacing each $b_i>n$ with the
value $b_i-n$. If $\underline{c}=corr(\underline{b},n)$ and $\underline{c}\not=\underline{b}$
then $\underline{b}$ is said to be \emph{uncorrected} for $n$.
Theorem~\ref{thm:main} states that
each IA can be used $r$ times to generate the same
coalition; i.e.\ the coalitions formed for agents $i$ and $j$ are equal if and only if there exists some 
$r$ in the range stated in Equation \eqref{eqn:thm1}.

\newtheorem*{thm:main}{Theorem~\ref{thm:main}}
\begin{thm:main}
For any IA $\underline{t}$, and for all $1\leq i\leq j\leq n$,
\begin{equation*}\tag{\ref{eqn:thm1}}
C(i,\underline{t})=C(j,\underline{t})~~~\Leftrightarrow~~~
\exists~0\leq r\leq \frac{(n-i)s}{n\pi(\underline{t})}~:~
j~=~i~+~r\times\left({\frac{n\pi(\underline{t})}{s}}\right)
\end{equation*}
\end{thm:main}

\begin{proof} ~

\noindent
{\bf ($\Leftarrow$) Direction.} 
We first show that $j=i+(rn\pi(\underline{t})/s)$ implies
that $C(i,\underline{t})$ and $C(j,\underline{t})$ are identical. The case $r=0$ gives $j=i$, which is trivial. 
It therefore suffices to establish the case $r = 1$; the general result then follows by applying this case $r$ times inductively.
Note that no assumptions are made concerning $i$ other than those prescribed by the requirement $1\leq i\leq n-\pi(\underline{t})n/s$.

The (uncorrected with respect to $n$) sequence generated by $C(i,\underline{t})$ 
is $\tuple{i+\varphi_1,i+\varphi_2,\ldots,i+\varphi_s}$,
and that by $C(j,\underline{t})$ (again uncorrected with respect to $n$) is:
\[
\tuple{i+(n\pi(\underline{t})/s)+\varphi_1,~i+(n\pi(\underline{t})/s)+\varphi_2,~\ldots,~i+(n\pi(\underline{t})/s)+\varphi_s}
\]
Consider the term $\varphi_{\pi(\underline{t})+k}$ for $1\leq k\leq s-\pi(\underline{t})$:
\begin{eqnarray*}
\varphi_{\pi(\underline{t})+k}&\mbox{ $=$ }&\sum_{p=0}^{\pi(\underline{t})+k-2}~t_p~+~\pi(\underline{t})+k-1\\
&\mbox{ $=$ }&\sum_{p=0}^{\pi(\underline{t})-1}~t_p~~+~\sum_{q=\pi(\underline{t})}^{\pi(\underline{t})+k-2}~t_q~+~\pi(\underline{t})+k-1\\
&\mbox{ $=$ }&\varphi_{\pi(\underline{t})+1}~+~\sum_{q=0}^{k-2}~t_q~+~k-1\\
&\mbox{ $=$ }&\varphi_{\pi(\underline{t})+1}~+~\varphi_{k}
\end{eqnarray*}
The penultimate identity follows from $t_{\pi(\underline{t})+q}~=~t_q$.
Now consider $\varphi_{\pi(\underline{t})+1}$. We have:
\[
\varphi_{\pi(\underline{t})+1}=\sum_{k=0}^{\pi(\underline{t})-1}~t_k~+~\pi(\underline{t})
\]
and
\begin{eqnarray*}
n-s~=~\sum_{k=0}^{s-1}~t_k&\mbox{ $=$ }&\sum_{m=0}^{s/\pi(\underline{t})-1}~\sum_{k=m\pi(\underline{t})}^{(m+1)\pi(\underline{t})-1}~t_k\\
&\mbox{ $=$ }&\sum_{m=0}^{s/\pi(\underline{t})-1}~\sum_{k=m\pi(\underline{t})}^{(m+1)\pi(\underline{t})-1}~t_{k-m\pi(\underline{t})}\\
&\mbox{ $=$ }&\frac{s}{\pi(\underline{t})}\sum_{p=0}^{\pi(\underline{t})-1}~t_p\\
&\mbox{ $=$ }&\frac{s}{\pi(\underline{t})}\left({\varphi_{\pi(\underline{t})+1}-\pi(\underline{t})}\right)
\end{eqnarray*}
which gives:
\begin{equation}
\varphi_{\pi(\underline{t})+1}~~=~~\frac{n\pi(\underline{t})}{s}
\end{equation}
A direct computation using periodicity then gives, for $1\leq k\leq s-\pi(\underline{t})$:
\begin{equation}
\varphi_{\pi(\underline{t})+k}~~=~~\frac{n\pi(\underline{t})}{s}~+~\varphi_k
\end{equation}
Returning to the (uncorrected) sequences defining $C(i,\underline{t})$ and $C(i+n\pi(\underline{t})/s,\underline{t})$, the former
contributes:
\[
\bigcup_{k=1}^{s}~\set{i~+~\varphi_k}~~=~~\bigcup_{m=0}^{s/\pi(\underline{t})-1}~
\bigcup_{k=1}^{\pi(\underline{t})}~\set{i+m\varphi_{\pi(\underline{t})+1}+\varphi_k}
\]
which is:
\begin{equation}
\bigcup_{m=0}^{s/\pi(\underline{t})-1}~\bigcup_{k=1}^{\pi(\underline{t})}~\left\{{i+\frac{mn\pi(\underline{t})}{s}+\varphi_k}\right\}\label{eqn:appx1}
\end{equation}
And, by a similar analysis of $j=i+(n\pi(\underline{t})/s)$, 
we obtain, prior to correction,
$C(j,\underline{t})$ as:
\begin{equation}
\bigcup_{m=0}^{s/\pi(\underline{t})-1}~\bigcup_{k=1}^{\pi(\underline{t})}~\left\{{i+\frac{(m+1)n\pi(\underline{t})}{s}+\varphi_k}\right\}\label{eqn:appx2}
\end{equation}
Comparing (\ref{eqn:appx1}) and (\ref{eqn:appx2}), the terms with $1\leq m<s/\pi(\underline{t})-1$ clearly occur in both sets. The terms for $m=0$
in (\ref{eqn:appx1}) are:
\[
\bigcup_{k=1}^{\pi(\underline{t})}~\set{i+\varphi_k}
\]
Similarly, the terms corresponding
to $m=s/\pi(\underline{t})-1$ in (\ref{eqn:appx2}), are:
\[
\bigcup_{k=1}^{\pi(\underline{t})}~\left\{{i+\frac{s}{\pi(\underline{t})}\frac{n\pi(\underline{t})}{s}+\varphi_k}\right\}
\]
These terms are prior to $n$ correction, so that:
\[
i+\frac{s}{\pi(\underline{t})}\frac{n\pi(\underline{t})}{s}+\varphi_k\in C(j,\underline{t})~~=~~i~+~n~+~\varphi_k~~=~~i+\varphi_k\in
C(i,\underline{t})
\]
This establishes the first part of the Theorem: 
\[
j~=~i+r\left({\frac{n\pi(\underline{t})}{s}}\right)~~\Rightarrow~~C(i,\underline{t})=C(j,\underline{t})
\]

\noindent
{\bf ($\Rightarrow$) Direction.} 
To complete the proof we show that:
\[
C(i,\underline{t})=C(j,\underline{t})~~~\Rightarrow~~~
\exists~0\leq r\leq \frac{(n-i)s}{n\pi(\underline{t})}~:~
j~=~i~+~r\times\left({\frac{n\pi(\underline{t})}{s}}\right)
\]

\noindent
Suppose $C(i,\underline{t})=C(j,\underline{t})$; i.e.\ we can observe that
if $C'=C(j,\underline{t})\subseteq\set{1,2,\ldots,n}$ of size $s$ is the same set as
$C=C(i,\underline{t})=\set{x_1,\ldots,x_s}$ there must be some
$2\leq \rho\leq s$ such that:
\[
corr(\tuple{i+\varphi_\rho+\varphi_1,i+\varphi_\rho+\varphi_2,\ldots,i+\varphi_\rho+\varphi_p,\ldots,i+\varphi_\rho+\varphi_s},n)
\]
is exactly the set $C(i,\underline{t})$ whose elements are:
\[
corr(\tuple{i+\varphi_1,i+\varphi_2,i+\varphi_3,\ldots,i+\varphi_p,\ldots,i+\varphi_s},n)
\]
From $C(i,\underline{t})=C(j,\underline{t})$ we can deduce:
$i\in C(i,\underline{t})$ and each element contributing to $C(i,\underline{t})$ (prior to $n$-correction) 
has the form $i+\varphi_k$ so
that there is a choice, $\rho$, for which $j\in\set{i+\varphi_\rho,i+\varphi_\rho-n}$.

The terms generated by $(i,\underline{t})$ are:
$\beta(i)=corr(\tuple{i+\varphi_1,\ldots,i+\varphi_p,\ldots,i+\varphi_s},n)$,
whereas the terms generated by $(i+\varphi_\rho,\underline{t})$ are
$\beta(\rho)=corr(\tuple{i+\varphi_\rho+\varphi_1,\ldots,i+\varphi_\rho+\varphi_p,\ldots,i+\varphi_\rho+\varphi_s},n)$.
From the premise $C(i,\underline{t}) = C(i+\varphi_\rho,\underline{t})$, every term in $corr(\beta(\rho),n)$ must correspond to some term in $corr(\beta(i),n)$
such that:
\[
\forall~p~\exists~q~~:~~i+\varphi_p~\in~\set{i+\varphi_\rho+\varphi_q,i+\varphi_\rho+\varphi_q-n}
\]
After some trivial rearrangement this is simply:
\[
\forall~p\exists~q:\varphi_p\in\set{\varphi_\rho+\varphi_q,\varphi_\rho+\varphi_q-n}
\]
For any sequence
$
\tuple{y_1,\ldots,y_s}
$
generated via $(y_1,\underline{t})$, there is at most one index $\lambda$ for which:
$y_i~\leq~n~~\mbox{$\forall~i\leq \lambda$}$, $y_{\lambda+1}~>~n$, and $y_i-n~\leq~n~~\forall~\lambda+2\leq i\leq s$.  Consequently,
the values $\tuple{y_1,\ldots,y_\lambda}$ are strictly increasing, as are the values
$\tuple{y_\lambda-n,y_{\lambda+1}-n,\ldots,y_s-n}$. We can therefore deduce that for a particular
$\varphi_\rho$ witnessing the behaviour of $\underline{t}$ in the analysis above, there is a unique
index $p$ for which:
\[
\varphi_\rho+\varphi_q~~\mbox{ is }~~\left\{
{
\begin{array}{lcl}
\leq n&\mbox{ if }&q\leq p\\
>n&\mbox{ if }&q>p
\end{array}
}
\right.
\]
From $i+\varphi_\rho+\varphi_p=i+n=i+n+\varphi_1$, it follows that:
\[
\begin{array}{lcl}
\varphi_\rho+\varphi_p&\mbox{ $=$ }&n+\varphi_1\\
\varphi_\rho+\varphi_{p+1}&\mbox{ $=$ }&n+\varphi_2\\
&\cdots&\\
\varphi_\rho+\varphi_{p+k}&\mbox{ $=$ }&n+\varphi_{k+1}\\
&\cdots&\\
\varphi_\rho+\varphi_{p+(s-p)}&\mbox{ $=$ }&n+\varphi_{s-p+1}\\
&&\\
\varphi_\rho+\varphi_{p-1}&\mbox{ $=$ }&\varphi_s\\
\varphi_\rho+\varphi_{p-2}&\mbox{ $=$ }&\varphi_{s-1}\\
&\cdots&\\
\varphi_\rho+\varphi_{p-k}&\mbox{ $=$ }&\varphi_{s-k+1}\\
\varphi_\rho+\varphi_{p-(p-2)}&\mbox{ $=$ }&\varphi_{s-p+3}\\
\varphi_\rho+\varphi_{p-(p-1)}&\mbox{ $=$ }&\varphi_{s-p+2}\\
\end{array}
\]
In consequence, $s$ and $\rho$, uniquely determine the value of $p$ as $s-\rho+2$.
Rearranging the expression above, gives
\[
\begin{array}{lcl}
n+\varphi_1&\mbox{ $=$ }&\varphi_\rho+\varphi_{(s-\rho+2)}\\
n+\varphi_2&\mbox{ $=$ }&\varphi_\rho+\varphi_{(s-\rho+2)+1}\\
&\cdots&\\
n+\varphi_{k+1}&\mbox{ $=$ }&\varphi_\rho+\varphi_{(s-\rho+2)+k}\\
&\cdots&\\
n+\varphi_{s-(s-\rho+2)+1}&\mbox{ $=$ }&\varphi_\rho+\varphi_{(s-\rho+2)+(s-{s-\rho+2})}\\
&&\\
\varphi_{s-(s-\rho+2)+2}&\mbox{ $=$ }&\varphi_\rho+\varphi_{(s-\rho+2)-((s-\rho+2)-1)}\\
\varphi_{s-(s-\rho+2)+3}&\mbox{ $=$ }&\varphi_\rho+\varphi_{(s-\rho+2)-((s-\rho+2)-2)}\\
&\cdots&\\
\varphi_{s-k+1}&\mbox{ $=$ }&\varphi_\rho+\varphi_{(s-\rho+2)-k}\\
&\cdots&\\
\varphi_{s-1}&\mbox{ $=$ }&\varphi_\rho+\varphi_{(s-\rho+2)-2}\\
\varphi_s&\mbox{ $=$ }&\varphi_\rho+\varphi_{(s-\rho+2)-1}
\end{array}
\]
Giving the final set of identities that $\underline{t}$ must satisfy
in order for $C(i,\underline{t})=C(i+\varphi_\rho,\underline{t})$ after $n$-correction. Recall that $\varphi_{s+1}=n$:
\[
\varphi_\rho~+~\varphi_q~~=~~\left\{
{
\begin{array}{lcl}
\varphi_{\rho+q-1}&\mbox{ if }&2\leq q\leq s-\rho+1\\
\varphi_{s+1}+\varphi_{\rho+q-1-s}&\mbox{ if }&s-\rho+2\leq q\leq s
\end{array}
}
\right.
\]
where 
\[
\varphi_\rho~=~\left\{
{
\begin{array}{lcl}
0&\mbox{ if }&\rho=1\\
\sum_{k=0}^{\rho-2}~t_k~+~(\rho-1)&\mbox{ if }&2\leq \rho\leq s
\end{array}
}
\right.
\]
We first observe that these systems of identities can be expressed solely in terms of $\tuple{t_0,\ldots,t_{s-1}}$:
the terms on the left-hand side contribute $\rho-1+q-1=\rho-q-2$ in addition to the $t_k$ values. The terms on the
right-hand side, however, also contribute $\rho+q-3$ (via
$s+(\rho+q-2-s)$ when $s-\rho+2\leq q\leq s$) in addition to the $t_k$ values, so that these cancel. 

We have so far shown that
if after $n$-correction, $C(i,\underline{t})=C(i+\varphi_\rho,\underline{t})$ 
for some value $\rho$ with $2\leq \rho\leq s$,  $\underline{t}=\tuple{t_0,\ldots,t_{s-1}}$ is a solution
for the system of identities:
\begin{eqnarray}
\sum_{k=0}^{\rho-2} t_k+\sum_{k=0}^{q-2}t_k=&&\sum_{k=0}^{\rho+q-3}t_k\mbox{$~$ $~$ $~$ $~$ $~$}2\leq q\leq s-\rho+1\label{eqn:appx3}\\
\sum_{k=0}^{\rho-2} t_k+\sum_{k=0}^{q-2}t_k=&&\sum_{k=0}^{s-1}t_k+\sum_{k=0}^{\rho+q-3-s}t_k\mbox{$~$ $~$ $~$}s-\rho+2\leq q\leq s\label{eqn:appx4}
\end{eqnarray}
We now show that any such solution must have $\pi(\underline{t})\leq \rho-1$.
Let $\underline{t}$ be an IA that satisfies
the system of identities given in (\ref{eqn:appx3}) and (\ref{eqn:appx4}) using $\rho$. In order to simplify the notation we use
$\eta(n,s,\rho,q)$ to denote the relevant identity.
We analyse the general behaviours of:
\[
\begin{array}{ll}
\eta(n,s,\rho,\rho-p)&\mbox{ for $1\leq p\leq \rho-2$}\\
\eta(n,s,\rho,\rho+p)&\mbox{ for $0\leq p\leq s-\rho$}
\end{array}
\]
For $\eta(n,s,\rho,\rho-p)$ (\ref{eqn:appx3}) and (\ref{eqn:appx4}) yield the identities (post simplification and rearrangement):
\begin{eqnarray}
\sum_{k=0}^{\rho-p-2}~t_k~=~\sum_{k=\rho-1}^{2\rho-p-3}~t_k&&2\rho-s-1\leq p\leq \rho-2\label{eqn:appx5}\\
\sum_{k=2\rho-p-2-s}^{\rho-p-2}~t_k~=~\sum_{k=\rho-1}^{s-1}~t_k&&1\leq p\leq 2\rho-s-2\label{eqn:appx6}
\end{eqnarray}
For $\eta(n,s,\rho,\rho+p)$ we obtain, in a similar manner:
\begin{eqnarray}
\sum_{k=0}^{\rho-2}~t_k~=~\sum_{k=\rho+p-1}^{2\rho+p-3}~t_k&&0\leq p\leq s-2\rho+1\label{eqn:appx7}\\
\sum_{k=2\rho+p-2-s}^{\rho+p-2}~t_k~=~\sum_{k=\rho-1}^{s-1}~t_k&&s-2\rho+2\leq p\leq s-\rho\label{eqn:appx8}
\end{eqnarray}
Consider (\ref{eqn:appx5}) and (\ref{eqn:appx6}). As $p$ \emph{decreases} from its maximum value (for these cases) of $\rho-2$ to
its minimum of $1$ the system of identities follows the pattern:
{
  \medmuskip=1mu
  \thinmuskip=1mu
  \thickmuskip=1mu
\[
\begin{array}{lclr}
t_0&\mbox{ $=$ }&t_{\rho-1}&\mbox{ $p=\rho-2$, $q=2$}\\
t_0+t_1&\mbox{ $=$ }&t_{\rho-1}+t_\rho&\mbox{ $p=\rho-3$, $q=3$}\\
t_0+t_1+t_2&\mbox{ $=$ }&t_{\rho-1}+t_\rho+t_{\rho+1}&\mbox{ $p=\rho-4$, $q=4$}\\
\cdots\\
t_0+t_1+t_2+\cdots+t_{\rho-k-2}&\mbox{ $=$ }&t_{\rho-1}+t_{\rho}+t_{\rho+1}+\cdots+t_{2\rho-k-3}&\mbox{ $p=\rho-k$, $q=k$}\\
\cdots\\
t_0+t_1+t_2+\cdots+t_{\rho-3}&\mbox{ $=$ }&t_{\rho-1}+t_{\rho}+t_{\rho+1}+\cdots+t_{2\rho-4}&\mbox{ $p=1$, $q=\rho-1$}
\end{array}
\]
}
We then have, via (\ref{eqn:appx7}) and the first case of (\ref{eqn:appx8}) (that is, $p=s-2\rho+2$, $q=s-\rho+2$), a sequence of $s-2\rho+3$ identities:
{
  \medmuskip=1mu
  \thinmuskip=1mu
  \thickmuskip=1mu
\[
\begin{array}{lclr}
t_0+t_1+\cdots+t_{\rho-2}&\mbox{ $=$ }&t_{\rho-1}+t_\rho+t_{\rho+1}+\cdots+t_{2\rho-3}&\mbox{ $q=\rho$}\\
t_0+t_1+\cdots+t_{\rho-2}&\mbox{ $=$ }&t_{\rho}+t_{\rho+1}+t_{\rho+2}+\cdots+t_{2\rho-3}+t_{2\rho-2}&\mbox{ $q=\rho+1$}\\
t_0+t_1+\cdots+t_{\rho-2}&\mbox{ $=$ }&t_{\rho+1}+t_{\rho+2}+\cdots+t_{2\rho-2}+t_{2\rho-1}&\mbox{ $q=\rho+2$}\\
\cdots\\
t_0+t_1+\cdots+t_{\rho-2}&\mbox{ $=$ }&t_{\rho+k-1}+t_{\rho+k}+\cdots+t_{2\rho+k-4}+t_{2\rho+k-3}&\mbox{ $q=\rho+k$}\\
\cdots\\
t_0+t_1+\cdots+t_{\rho-2}&\mbox{ $=$ }&t_{s-\rho}+t_{s-\rho+1}+\cdots+t_{s-3}+t_{s-2}&\mbox{ $q=s-\rho+1$}\\
t_0+t_1+\cdots+t_{\rho-2}&\mbox{ $=$ }&t_{s-\rho+1}+t_{s-\rho+2}+\cdots+t_{s-2}+t_{s-1}&\mbox{ $q=s-\rho+2$}
\end{array}
\]
}
Finally, from the remaining $\rho-2$ cases of (\ref{eqn:appx8}), we obtain the identities:
{
  \medmuskip=1mu
  \thinmuskip=1mu
  \thickmuskip=1mu
\[
\begin{array}{lclr}
t_1+t_2+t_3+\cdots+t_{\rho-2}&\mbox{ $=$ }&t_{s-\rho+2}+t_{s-\rho+3}+\cdots+t_{s-2}+t_{s-1}&\mbox{ $q=s-\rho+3$}\\
t_2+t_3+\cdots+t_{\rho-2}&\mbox{ $=$ }&t_{s-\rho+3}+\cdots+t_{s-2}+t_{s-1}&\mbox{ $q=s-\rho+4$}\\
\cdots\\
t_k+\cdots+t_{\rho-2}&\mbox{ $=$ }&t_{s-\rho+k+1}+\cdots+t_{s-2}+t_{s-1}&\mbox{ $q=s-\rho+k+2$}\\
\cdots\\
t_{\rho-3}+t_{\rho-2}&\mbox{ $=$ }&t_{s-2}+t_{s-1}&\mbox{ $q=s-1$}\\
t_{\rho-2}&\mbox{ $=$ }&t_{s-1}&\mbox{ $q=s$}
\end{array}
\]
}
It is clear, from these patterns, that the following equalities must hold in order for $\underline{t}$ to
lead to $C(i,\underline{t})=C(i+\varphi_\rho,\underline{t})$ after $n$-correction.
From the first set of $\rho-2$ identities together with that for $q=\rho$, we obtain:
\begin{equation}\label{eqn:appx9}
\begin{array}{lcl}
t_0&\mbox{$=$}&t_{\rho-1}\\
t_1&\mbox{$=$}&t_\rho\\
\cdots\\
t_{\rho-k-2}&\mbox{$=$}&t_{2\rho-k-3}\\
\cdots\\
t_{\rho-3}&\mbox{$=$}&t_{2\rho-4}\\
t_{\rho-2}&\mbox{$=$}&t_{2\rho-3}
\end{array}
\end{equation}
For the identities which must hold when $\rho\leq q\leq s-\rho+2$, the left-hand of side of these is always
$\sum_{k=0}^{\rho-2}~t_k$ while the right-hand side of the identity for $q$ has exactly two terms in common
with the right-hand side of $q-1$ and two in common with $q+1$. In consequence we further deduce:
\begin{equation}\label{eqn:appx10}
\begin{array}{lcl}
t_{\rho-1}&\mbox{$=$}&t_{2\rho-2}\\
t_\rho&\mbox{$=$}&t_{2\rho-1}\\
\cdots\\
t_{\rho+k-1}&\mbox{$=$}&t_{2\rho+k-2}\\
\cdots\\
t_{s-\rho}&\mbox{$=$}&t_{s-1}
\end{array}
\end{equation}
Finally, in a similar manner to the identities in the first collection we deduce from the final set:
\begin{equation}\label{eqn:appx11}
\begin{array}{lcl}
t_{s-1}&\mbox{$=$}&t_{\rho-2}\\
t_{s-2}&\mbox{$=$}&t_{\rho-3}\\
\cdots\\
t_{s-\rho+k+1}&\mbox{$=$}&t_k\\
\cdots\\
t_1&\mbox{$=$}&t_{s-\rho+2}\\
t_0&\mbox{$=$}&t_{s-\rho+1}
\end{array}
\end{equation}

Combining (\ref{eqn:appx9}) and (\ref{eqn:appx10}), we have $t_j = t_{j+(\rho-1)}, \forall 0 \le j \le s-\rho$. 
Together with (\ref{eqn:appx11}), which gives $t_{s-\rho+1+k} = t_k, \forall 0 \le k \le \rho-2$, these three families establish that $t_j = t_{j+(\rho-1) \bmod s}, \forall 0\le j\le s-1$; i.e.\ the sequence of values comprising $\underline{t}$ repeats under a shift of its own indices by $(\rho-1)$ positions modulo $s$.  
Since any sequence of length $s$ that is invariant under cyclic shift by $d$ positions has period dividing $\gcd(s,d)$, it follows that $\pi(\underline{t})$ divides $\gcd(s,\rho-1) = g$, and hence $\pi(t) \le g$.

To illustrate this, the effect of combining (\ref{eqn:appx9}) - (\ref{eqn:appx11}) on $\tuple{t_0,\ldots,t_{s-1}}$ can be seen in Table~\ref{table:rev-period}.
For each $j$, 
\begin{eqnarray*}
t_{(\rho-1)-j}&\mbox{ $=$ }&t_{2(\rho-1)-j}\mbox{ $~$ $~$ from (\ref{eqn:appx9})}\\
t_{(\rho-1)+j}&\mbox{ $=$ }&t_{2(\rho-1)+j}\mbox{ $~$ $~$ from (\ref{eqn:appx10})}\\
t_{j}&\mbox{ $=$ }&t_{s-(\rho-1)+j}\mbox{$~$ $~$ from (\ref{eqn:appx11})}
\end{eqnarray*}

Now suppose we write $s$ as $s=g\times \mu$ where $g=gcd(s,\rho-1)$: note that since this allows $g=1$ the value of $g$ is always
well defined for any $s$ and $2\leq \rho\leq s$.
In this case the conditions expressed in (\ref{eqn:appx9})--(\ref{eqn:appx11}) indicate, from $s=g\times \mu$ and $(\rho-1)=g\times v$
that $\underline{t}$ conforms to the behaviour in Table~\ref{table:rev-period}.
That this is the behaviour imposed, follows from the conditions arising via (\ref{eqn:appx11}) to the effect
$t_{j}=t_{s-(\rho-1)+j}$, which is equivalent to the identity, 
$t_j=t_{g\mu-gv+j}=t_{g(\mu-v)+j}$,
indicating that values repeat in blocks of size $g$.
This configuration, however, indicates that 
$\underline{t}$ for which 
$C(i,\underline{t})=C(i+\varphi_\rho,\underline{t})$ after $n$-correction must
have the form:
$\tuple{t_0,t_1,\ldots,t_{g-1},t_0,t_1,\ldots,t_{g-1},\ldots,t_0,t_1,\ldots,t_{g-1},t_0,t_1,\ldots,t_{g-1}}$

Now in order to complete the proof it remains to demonstrate that
$\varphi_\rho=l\times(n\pi(\underline{t})/s)$ for some positive integer $l$.
Since $\pi(\underline{t})\leq g$ and we have already shown that $\underline{t}$ repeats in blocks of length $g$, should $\pi(\underline{t})<g$ then we must have $g$ an exact multiple ($>1$) of $\pi(\underline{t})$.

\begin{table}[tb]
\caption{Values of $t_{i(\rho-1)+j}$ implied by (\ref{eqn:appx9})--(\ref{eqn:appx11}), $s=\mu\times g$, $(\rho-1)=v\times g$}\label{table:rev-period}
\begin{tabular}{|c|c|c|c|c|c|c|c||c||c|c|c|}\hline
         & $0$   & $1$   & $\cdots$ & $j$   & $\cdots$ & $(g-2)$     & $(g-1)$    & $\cdots$ & $(v-1)g$ & $\cdots$ & $vg-1$ \\\hline
$0$      & $t_0$ & $t_1$ & $\cdots$ & $t_j$ & $\cdots$ & $t_{(g-2)}$ & $t_{(g-1)}$& $\cdots$ & $t_0$    & $\cdots$ & $t_{(g-1)}$ \\\hline
$1$      & $t_0$ & $t_1$ & $\cdots$ & $t_j$ & $\cdots$ & $t_{(g-2)}$ & $t_{(g-1)}$& $\cdots$ & $t_0$    & $\cdots$ & $t_{(g-1)}$ \\\hline
$\cdots$ & $t_0$ & $t_1$ & $\cdots$ & $t_j$ & $\cdots$ & $t_{(g-2)}$ & $t_{(g-1)}$& $\cdots$ & $t_0$    & $\cdots$ & $t_{(g-1)}$ \\\hline
$i$      & $t_0$ & $t_1$ & $\cdots$ & $t_j$ & $\cdots$ & $t_{(g-2)}$ & $t_{(g-1)}$& $\cdots$ & $t_0$    & $\cdots$ & $t_{(g-1)}$ \\\hline
$\cdots$ & $t_0$ & $t_1$ & $\cdots$ & $t_j$ & $\cdots$ & $t_{(g-2)}$ & $t_{(g-1)}$& $\cdots$ & $t_0$    & $\cdots$ & $t_{(g-1)}$ \\\hline
$\mu-2$    & $t_0$ & $t_1$ & $\cdots$ & $t_j$ & $\cdots$ & $t_{(g-2)}$ & $t_{(g-1)}$& $\cdots$ & $t_0$    & $\cdots$ & $t_{(g-1)}$ \\\hline
$\mu-1$      & $t_0$ & $t_1$ & $\cdots$ & $t_j$ & $\cdots$ & $t_{(g-2)}$ & $t_{(g-1)}$& $\cdots$ & $t_0$    & $\cdots$ & $t_{(g-1)}$ \\\hline
\end{tabular}
\end{table}

\noindent
We have from our analysis above:
\begin{eqnarray*}
\varphi_\rho&\mbox{ $=$ }&\sum_{k=0}^{\rho-2}~t_k~+~\rho-1\\
&\mbox{ $=$ }&\frac{\rho-1}{g}\sum_{k=0}^{g-1}~t_k~~+~~\rho-1\\
&\mbox{ $=$ }&\frac{\rho-1}{g}\left({\sum_{k=0}^{g-1}~t_k~+~g}\right)\\
&\mbox{ $=$ }&\left({\frac{\rho-1}{g}}\right)\varphi_{g+1}
\end{eqnarray*}
In addition, however:
\[
n-s~=~\sum_{k=0}^{s-1}~t_k~=~\frac{s}{g}\sum_{k=0}^{g-1}~t_k
=\frac{s}{g}(\varphi_{g+1}-g)
\]
So that:
\[
\varphi_{g+1}=\frac{g}{s}(n-s)+g=\frac{gn}{s}
\]
Thereby giving $\varphi_\rho$ as:
\[
\varphi_{\rho}~~=~~\left({\frac{\rho-1}{g}}\right)\frac{gn}{s}~~=~~\frac{(\rho-1)n}{s}~~=~~l\times\left({\frac{n\pi(\underline{t})}{s}}\right)
\]
for some choice of $l$ since $(\rho-1)=v\times g=l\times\pi(\underline{t})$.
Setting the theorem's multiplier $r=l=(\rho-1)/\pi(\underline{t})$ gives the required conclusion, and $j\le n$ immediately yields the upper bound $r \le (n-i)s / (n\pi(\underline{t}))$:
\[
C(i,\underline{t})=C(j,\underline{t})~~\Rightarrow~~j=i+r\times\left({\frac{n\pi(\underline{t})}{s}}\right)
\]
\end{proof}




\end{appendices}

\bibliography{necklace26}

\end{document}